\tikzset{
	global scale/.style={
		scale=0.6,
		every node/.style={scale=0.6}
	}
}
\newtheorem{remark}{Remark}
\newtheorem{assumption}{Assumption}
\newtheorem{proposition}{Proposition}
\newtheorem{corollary}{Corollary}
\newtheoremstyle{nonitalic} 
  {}                        
  {}                        
  {\normalfont}             
  {}                        
  {\bfseries}               
  {.}                       
  { }                       
  {}                        
\theoremstyle{nonitalic}
\newtheorem{example}{Example}
\theoremstyle{nonitalic}
\newtheorem{definition}{Definition}
\title{Algorithmic Collusion And The Minimum Price Markov Game}
\author{
    \href{https://orcid.org/0000-0001-9332-9793}{\includegraphics[scale=0.06]{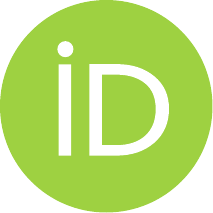}\hspace{1mm}Igor Sadoune} \\
    Department of Mathematics and Industrial Engineering\\
    Polytechnique Montreal\\
    CIRANO\\
    Montreal, QC, Canada \\
    \texttt{igor.sadoune@polymtl.ca} \\
    \And
    \href{https://orcid.org/0000-0003-3736-2761}{\includegraphics[scale=0.06]{orcid.pdf}\hspace{1mm}Marcelin Joanis} \\
    Department of Mathematics and Industrial Engineering\\
    Polytechnique Montreal\\
    CIRANO\\
    Montreal, QC, Canada \\
    \texttt{marcelin.joanis@polymtl.ca} \\
    \And
    \href{https://orcid.org/0000-0001-9269-633X}{\includegraphics[scale=0.06]{orcid.pdf}\hspace{1mm}Andrea Lodi} \\
    Jacobs Technion-Cornell Institute\\
    Cornell Tech and Technion - IIT\\
    New York, NY, USA \\
    \texttt{andrea.lodi@cornell.edu} \\
}
\date{}
\begin{document}
\maketitle

\begin{abstract}
This paper introduces the Minimum Price Markov Game (MPMG), a theoretical model that reasonably approximates real-world first-price markets following the minimum price rule, such as public auctions. The goal is to provide researchers and practitioners with a framework to study market fairness and regulation in both digitized and non-digitized public procurement processes, amid growing concerns about algorithmic collusion in online markets. Using multi-agent reinforcement learning-driven artificial agents, we demonstrate that (i) the MPMG is a reliable model for first-price market dynamics, (ii) the minimum price rule is generally resilient to non-engineered tacit coordination among rational actors, and (iii) when tacit coordination occurs, it relies heavily on self-reinforcing trends. These findings contribute to the ongoing debate about algorithmic pricing and its implications.
\end{abstract}

\keywords{Algorithmic Game Theory \and Multiagent Reinforcement Learning \and Algorithmic Coordination \and Algorithmic Pricing}

\textit{To replicate the study, follow}  \href{https://github.com/IgorSadoune/algorithmic-collusion-and-the-minimum-price-markov-game}{\color{blue}{replication repo (GitHub)}}. 

\textit{MPMG is implemented in a Python package; see}  \href{https://pypi.org/project/mpmg/0.2.3/}{\color{blue}{PyPI}} or \href{https://github.com/IgorSadoune/mpmg-python-package}{\color{blue}{GitHub}}.

\section{Introduction}
\label{sec:Introduction}

Concerns about algorithmic pricing have grown as technological advancements in machine learning, widespread data availability, and the digitization of the economy have created fertile grounds for such practices \cite{Gerlick2020, Gormsen2022, Ezrachi2020, Li2023}. Algorithmic pricing can potentially lead to collusion, as observed in digital markets such as online retail \cite{Calvano2020a, Assad2020} and electricity supply markets \cite{Rothkopf1999, Ye2019, Tellidou2007, Viehmann2021, Matsukawa2019}. These sectors demonstrate the potential for artificial decision-makers to learn collusive behaviors through dynamic pricing algorithms and optimal bidding strategies. This echoes the concerns about algorithmic collusion in public procurement, which are increasingly relevant, especially as digital platforms and automated decision-making tools become more prevalent in these settings. Since public procurement is a crucial mechanism for collective welfare, the vulnerability of this sector to collusive outcomes is particularly alarming. For example, it has been shown that bidding patterns were incompatible with a competitive equilibrium in Ukrainian E-procurement data \cite{Baranek2021}. Given these developments, questions arise. Is public procurement directly threatened by algorithmic pricing? And if so, under which conditions?

In contrast to digital markets, public procurement in most Western economies typically adheres to first-price sealed-bid auction rules, and more specifically, to the minimum price rule that determines the winner by selecting the lowest bid. This rule is supported by a longstanding body of literature that demonstrates the theoretical efficiency of first-price auctions and, by extension, the efficacy of the minimum price rule in maximizing collective welfare \cite{Laffont1997a, Lebrun2006, Bergemann2017}. Public procurement processes, generally not digitized, have made more data available to the public due to recent transparency laws, a response to a long history of traditional cartel collusion \cite{Chassin2010, Harrington2008, McAfee1992}. This initiative has been effective as it has helped regulators and researchers detect collusion in public markets \cite{Tas2017, Chassang2019}. However, the data can also be used to train artificial decision-makers to provide optimal bidding strategies. The potential for these strategies to lead to algorithmic collusion, whether accidental or deliberately learned, raises concerns for legal authorities because such collusion would not trigger antitrust laws \cite{Calvano2020b}. Indeed, accidental algorithmic collusion stems from profit-maximization motives, and proving the deliberate formation of algorithmic coordination remains challenging.

In this paper, we introduce the Minimum Price Markov Game (MPMG), a model designed to provide a framework for studying emergent behavior in the context of minimum price auctions, commonly used in real-world public procurement. We draw on Robert Axelrod's concept of ``the emergence of cooperation among egoist'' \cite{Axelrod1981}, exploring tacit coordination among isolated agents in the MPMG. Departing from the traditional game-theoretic Bayesian framework of first-price auctions, we consider a game of complete information based on the assumption of data availability. Indeed, public auction data typically includes the features of auctions and agents from past to ongoing contracts, as well as the associated bids \cite{Sadoune2024}. This allows us to base the MPMG on the Prisoner's Dilemma (PD), leveraging a structure that captures the essence of public procurement as a social dilemma—competitive by law but cooperative by nature. Indeed, the PD is well known for encapsulating the paradox of rationality central to our discussion on the study of emergent coordination among isolated agents in public markets. Furthermore, we test the MPMG using various state-of-the-art Multi-Agent Reinforcement Learning (MARL) methods and thereby show that the minimum price rule, in our setting, is generally robust but not impervious to tacit coordination among isolated decision-makers.

Our methodological choice is motivated by extensive literature that shows the effectiveness of combining game theory and reinforcement learning for modeling strategic interactions and optimizing behavior through dynamic adaptation among agents. We explore such literature in Section \ref{sec:Related Work} and show that our methodological framework is ideal for exploring emergent behaviors and potential tacit collusion in public procurement markets. In Section \ref{sec:The Minimum Price Markov Game}, we provide the mathematical model for the MPMG and discuss the benefits of the Markov property in our setting. Section \ref{sec:ComputerExperiments} presents the results of our computer experiments and concludes on their implications for tacit coordination in minimum price-driven auctions. Finally, Section \ref{sec:Discussion} provides interpretation, practical considerations and perspectives.

\section{Related Work}
\label{sec:Related Work}
The interplay among Markov Decision Process (MDP), auction design, and multi-agent learning has gained significant attention in recent research. Although the literature addressing algorithmic collusion within this framework is relatively sparse, our study draws inspiration from the growing body of work on integrating Stackelberg game models within multi-agent learning frameworks, MDP for auction designs, and tacit coordination in social dilemmas. By reviewing these interconnected areas, we aim to highlight the advancements in understanding strategic interactions, optimal bidding strategies, and tacit coordination among agents. This overview highlights the pivotal role of MDP and related methodologies in shaping contemporary economic and strategic decision-making processes.

\paragraph{MDP and auction design.} The MDP framework facilitates the exploration, understanding, and prediction of strategic interactions and behaviors within market designs. Examples include computing Nash equilibria in auction games \cite{Graf2023, Bichler2023} and optimizing bidding strategies in combinatorial auctions \cite{Brero2017, Beyeler2021}. Although the Markov game framework is influential, bandit algorithms have also proven to be exceptionally suitable for modeling optimal bidding strategies in online auction designs \cite{Hummel2016, Nguyen2020, Gao2021, Basu2022}. This is due to the simplicity and effectiveness of bandit algorithms, which operate in a stateless or single-state environment and make decisions independently of past actions without the need for an underlying state transition model.

\paragraph{MDP and Stackelberg game.} Stackelberg models, in which players move sequentially rather than simultaneously and therefore were historically used to study hierarchical or sequential leadership, are being increasingly used in various applications within the Markov game and multiagent learning framework \cite{Goktas2022}. The Stackelberg game is a staple in the analysis of various economic and strategic situations where the ability to move first or commit to a decision before one’s competitors can confer a strategic advantage. This includes industries with clear market leaders who can influence the market environment \cite{Fudenberg1991}. For example, Reinforcement Learning (RL) agents have been trained to learn the Stackelberg equilibrium in a Markov decision process representing economic mechanisms for assigning items to individualistic agents in sequential-pricing stages \cite{Brero2022, Brero2021a, Brero2021b}. 

\paragraph{Multiagent learning, social dilemmas, and tacit coordination.} In the exploration of tacit collusion and cooperative strategies within repeated game frameworks, a multitude of studies have contributed to a nuanced understanding of how agents can implicitly coordinate their actions under restricted communication conditions to optimize collective outcomes. The effectiveness of collusion in auction settings without explicit communication is a recurring theme \cite{Skrzypacz2004}, with similar phenomena observed in oligopoly models through Q-learning mechanisms where collusion emerges without direct interaction among agents \cite{Waltman2008,Kimbrough2005}. These findings are complemented by insights into multi-agent control using deep reinforcement learning, which further illustrates how complex cooperative behaviors can be engineered even in large-scale agent systems \cite{Gupta2017}. 

In scenarios where direct communication is either limited or non-existent, learning algorithms play a pivotal role in facilitating implicit cooperation and understanding competitive behaviors. For instance, learning methods that incorporate the anticipation of other agents' learning processes have shown promise in fostering cooperative strategies that are robust against exploitation \cite{Foerster2018a}. This is particularly evident in the Iterated Prisoner’s Dilemma, where strategies developed through evolutionary algorithms have shown to outperform traditional approaches \cite{Harper2017}. Indeed, the contradictory nature of these games makes their study with RL agents quite challenging, as demonstrated in \cite{Vassiliades2011}. Moreover, the development of frameworks that enforce cooperation and resist collusion highlights the potential for creating stable cooperative environments in public goods games and other multi-agent settings \cite{Li2019b, Chaudhuri2021}.

In this paper, we build on this literature and explore the potential for tacit coordination among isolated artificial agents in our social dilemma-based MPMG using multiagent learning. 

\section{The Minimum Price Markov Game}
\label{sec:The Minimum Price Markov Game}

In real-world multiagent systems, a common observation is that heterogeneous agents coexist in non-cooperative environments. For example, in heterogeneous markets, stronger agents typically enjoy a competitive advantage, making growth, or even survival, more difficult for weaker ones. This disparity in market power---whether due to factors like market share, technology, or cost structures---creates a stratification of prices, influencing price equilibrium. In the context of first-price public auctions, stronger agents tend to submit the lowest bids, leading to the expectation that they will win more frequently. This assumption is reasonable, as their ability to cut costs is positively correlated with their operational capacity.

In a game-theoretic setting, it is necessary to define a payoff function that applies to both heterogeneous and homogeneous markets without losing generality. The payoff structure should smoothly transition from homogeneous to heterogeneous markets, allowing for a mathematically consistent analysis of strategic behaviors across different numbers of players. Market heterogeneity should be reflected in differences in payoffs, win frequencies, and incentives to avoid price collusion, with these variations proportionate to the gaps in market strength.

Furthermore, in public procurement, there is a tension between the pursuit of personal gain and collective welfare. Firms strive to maximize their individual profits, while the state aims to allocate public funds efficiently and maintain market competition and fairness. From the firms' perspective, individually rational choices made without cooperation can lead to suboptimal outcomes for all participants, as maximal profits are only achievable through cooperative behavior. This conflict between individual and collective rationality is the core characteristic of social dilemmas. The minimum price rule, with its mathematical elegance, does not diminish this inherent conflict; rather, it preserves this tension fully, maintaining the balance between self-interest and group benefit.

It follows that a minimum price-ruled auction game formulation will fall in accordance with the principles of social dilemma games. Namely, it must display non-zero-sum interactions for which players can benefit from cooperating or suffer from a single defection. The dominant strategy (Nash Equilibrium), being collectively irrational, is thus Pareto inefficient, meaning that an efficient coordination scheme leads to the strategy profile associated with the non-Nash Pareto Optimal outcome. In other words, the role of coordination is to transition from the Nash Equilibrium to the profit-maximizing strategy profile.

In addition, the single-stage formulation must capture the core elements of market dynamics, which inherently unfold through repeated interactions. These characteristics must be maintained in its dynamic extension, such as in the iterated Minimum Price Game (MPG) and the Minimum Price Markov Game (MPMG).

In this section, we provide a formulation for the MPMG, that we believe is a reasonable approximation of the typical real-world minimum price-ruled public procurement process. Our formulation is rooted in assumptions that capture the core of this process, creating a conducive environment to observe how artificial learners dynamically shape their strategies in such dilemma. The MPMG is founded on a static normal-form game, the MPG, which we will define first. 

\subsection{The Minimum Price Game: A Single-Stage Formulation}
\label{sec:Static Form} 

The MPG involves $n$ agents, represented by a set $\mathcal{N} = \{1,\dots,n\}$, competing to win a contract valued at $v$ in a first-price procurement auction. 

\begin{assumption}[Single Stage Auction]
The bidding process unfolds in a single round.
\label{assumption:Single Stage Auction}
\end{assumption}

\begin{assumption}[Simultaneous Play]
All agents submit their bids simultaneously.
\label{assumption:Simultaneity of Bids}
\end{assumption}

Although bids are, in practice, collected over a certain period, they remain sealed and confidential. This ensures that all bids are (virtually) submitted at the same time, thus justifying Assumption \ref{assumption:Simultaneity of Bids}. Assumptions \ref{assumption:Single Stage Auction} and \ref{assumption:Simultaneity of Bids} complement each other and allow for a normal-form game formulation.  

\begin{assumption}[Common Value]
The contract value is common for all agents.
\label{assumption:Common Value}
\end{assumption}

Assumption \ref{assumption:Common Value} simplifies the commonly used Bayesian game formulation of first-price auction to reflect the conditions of Bertrand competition \cite{Baye2008}, where agents compete on prices in a transparent market. This assumption is supported by the uniformity of contract requirements and regulatory costs across bidders, market transparency, and the homogeneity in the evaluation of contract values. These factors standardize the perceived value of the contract, thereby aligning all agents' bids based on a common economic assessment rather than individual speculative valuations.

\paragraph{Strategies.} All players are symmetric in their strategy set. Indeed, each agent $i$ chooses a strategy $s_i\in S$ where $S=\{\textit{FP}, \textit{CP}\}$. The strategy \textit{Fair Price} (\textit{FP}) is associated with bidding the fair price $b_i$, and \textit{Collusive Price} (\textit{CP}) with bidding a higher price, defined by $\alpha \cdot b_i$, where $\tau \geq \alpha > 1$, with the upper bound $\tau$ ensuring realistic bids. We denote the strategy profile for a given occurrence as the tuple $(s_1, s_2, \dots, s_n)$. 

\paragraph{Heterogeneity.} Each agent $i$ has a market power defined by the parameter $\beta_i \in (0,1)$, with $\sum_{i \in \mathcal{N}} \beta_i = 1$, indicating that the average market strength, $\mu(\beta)$, is always $\frac{1}{n}$. This market power might represent the size of the agent or its market share, influencing its ability to reduce costs and leverage economies of scale. All agents share the same cost function, which determines their bids
\begin{equation}
    b_i = (1-\beta_i)v \quad \forall i \in \mathcal{N},
\label{eq:Cost Function}
\end{equation}
The notion of strong and weak agents is relative to the distribution of $\beta$, but according to \eqref{eq:Cost Function}, the closer $\beta$ is to 1, the lower the agent can bid, thus the stronger it is. Market heterogeneity is quantified by $\sigma(\beta)$, the standard deviation of the distribution of power parameter values. In a perfectly homogeneous market, $\sigma(\beta) = 0$ and $\beta_i = \frac{1}{n}$ for all $i$.

\begin{remark}
As discussed earlier, we need to model heterogeneous agents within a game formulation that incorporates a social dilemma. Additionally, the payoff structure must reflect a reward function that enables adaptive agents to learn their strategies in repeated settings. Example \ref{ex:strict minimum price rule} illustrates why a strict application of the minimum price rule is unsuitable in this context. The core issue is that a strong player could dominate any auction regardless of its strategy, undermining the coexistence of heterogeneous agents.
\end{remark}

\begin{example}[Hard minimum price rule]\label{ex:strict minimum price rule}
    Let $b_{weak}$ and $b_{strong}$ be the bids in an auction within a heterogeneous duopoly. Naturally, and according to \eqref{eq:Cost Function}, we have $b_{strong} < b_{weak}$. Assuming the payoffs $u$ are determined by a minimum price rule, i.e., $u_{i} = b_{i}$ if $b_{i} = \min(b_{weak}, b_{strong})$ and $u_i = 0$ otherwise, we obtain the following bimatrix game:
    \begin{center}
        \begin{tikzpicture}[scale=2]
            \draw[thick] (0,0) rectangle (2,2);
            \draw[thick] (0,1) -- (2,1);
            \draw[thick] (1,0) -- (1,2);
            \draw (1,2) -- (2,1);
            \draw (1,0) -- (0,1);
            \draw (2,0) -- (1,1);
            \draw (1,1) -- (0,2);
            \node at (1.30,1.25) {$b_{strong}$};
            \node at (0.5,1.75) {$0$};
            \node at (1.35,0.25) {$\alpha b_{strong}$};
            \node at (1.75,0.75) {$0$};
            \node at (0.75,0.75) {$?$};
            \node at (1.75,1.75) {$0$};
            \node at (0.25,0.25) {$?$};
            \node at (0.30,1.25) {$b_{strong}$};
            \node at (-0.3, 1.5) {\textit{FP}};
            \node at (-0.3, 0.5) {\textit{CP}};
            \node at (0.5, 2.2) {\textit{FP}};
            \node at (1.5, 2.2) {\textit{CP}};
            \node at (-1,1) {\textbf{strong agent}};
            \node at (1,2.5) {\textbf{weak agent}};
        \end{tikzpicture}
    \end{center}
    We observe that the outcomes of the asymmetric strategy profiles depend on the level of heterogeneity. The strong agent could potentially win regardless of the strategy profile, while the weak agent might struggle to learn a long-term strategy since it may only receive a payoff of $0$ as a reward signal in a repeated setting.
\end{example}

\paragraph{Payoffs.} Let \( u_i(\textit{FP}, k) \) represent the payoff of agent \( i \) when it bids the fair price and \( k \) opponents also bid the fair price, and let \( u_i(\textit{CP}, k) \) denote the payoff of agent \( i \) when it bids the collusive price while \( k \) opponents bid the fair price, where \( k \in [0, n-1] \). The key idea from the minimum price rule is that coordination must be unanimous to achieve collusive profits. Therefore, the individual payoff for playing \textit{CP} when all players cooperate surpasses the individual payoff from universal defection, i.e.,

\begin{equation}\label{eq:cooperation_yield}
u_i(\textit{CP}, k=0) > u_i(\textit{FP}, k=n-1).
\end{equation}

However, defection must always yield a higher payoff than cooperation when at least one opponent defects, leading to

\begin{equation}\label{eq:defection_yield}
u_i(\textit{FP}, k>0) > u_i(\textit{CP}, k>0).    
\end{equation}

In fact, because of the minimum price rule, we need to make sure that the cooperating agents receive payoffs of $0$ when some other players defect. Also, we set a sharing mechanism where defecting agents earn a share based on their market strengths relative to the defecting opponents. Therefore, let \( \Omega \subseteq \mathcal{N} \) denote the set of agents playing \textit{FP}, and let the total market power of the defecting agents be \( \beta_\Omega = \sum_{j \in \Omega} \beta_j \). This means that when agent $i$ plays \textit{FP}, its payoff, 

\[
\frac{\beta_i}{\beta_{\Omega}} \cdot b_i,
\]

depends on $\beta_{\Omega}$. Note that in this case, $\beta_{\Omega}=1$ when $k=n-1$, and $\beta_{\Omega}=\beta_i$ when $k=0$.  When agent $i$ plays \textit{CP}, then its payoff is either $0$ if $k>0$ or 

\[
\alpha \cdot \beta_i \cdot b_i,
\]

if $k=0$. Table \ref{tbl:payoffs} summarizes the individual payoffs for agent $i$ in all scenarios.

\begin{table}[ht!]
\renewcommand{\arraystretch}{1.5}
\caption[MPG payoffs]{MPG Payoffs}
\centering
    \begin{tabular}{c|c|c|c}
        \textbf{Strategy Profile} & \(\mathbf{k = 0}\) & \(\mathbf{n-1 > k > 0}\) & $\mathbf{k=n-1}$\\
        \hline
        \( u_i(\textit{FP}, k) \)          & \( b_i \)          & \( \frac{\beta_i}{\beta_{\Omega}} \cdot b_i \) & $\beta_i \cdot b_i$\\
        \hline
        \( u_i(\textit{CP}, k) \)          & \( \alpha \cdot \beta_i \cdot b_i \) & \( 0 \) & $0$ \\
    \end{tabular}
    \label{tbl:payoffs}
\end{table}

The payoff function presented in Table \ref{tbl:payoffs} respects \eqref{eq:cooperation_yield} and \eqref{eq:defection_yield} since, 

\[
b_i < \min_{j\neq i} \alpha \cdot b_j \quad \forall i,
\]

and 

\[
\frac{\beta_i}{\beta_{\Omega}} \cdot b_i > 0.
\] 

\paragraph{Analysis.}  
The parameter $\beta$ controls the incentive to defect. As we can derive from \eqref{eq:Cost Function} and Table \ref{tbl:payoffs}, and since $\alpha$ is the collusive factor common to all agents, the incentive factor for agent $i$ associated with defecting while every opponent colludes (\textit{FP}, $k=0$) is 
\begin{equation}\label{eq:incentive factor}
\gamma_i = \frac{1}{\beta_i}.
\end{equation}
The pairwise relation in terms of incentive factors between agent $i$ and an opponent $j$ is given by 
\begin{equation}\label{eq:parwise factor}
\gamma_i = \frac{\beta_j}{\beta_i}\gamma_j.
\end{equation}
The incentive to defect is proportional to the market power $\beta$, and therefore weak agents have more incentive to defect than strong ones, as expected in real-life scenarios \cite{Laffont1997a, Tirole2003}. Example \ref{ex:incentive factor} proposes a numerical depiction of the incentive to defect in the MPG. 

\begin{example}[Incentive factor]\label{ex:incentive factor} 
Consider three heterogeneous agents $\{1,2,3\}$, and their associated power parameters $[0.25, 0.25, 0.5]$. For simplicity, we set $v$ to $1$ in equation \eqref{eq:Cost Function}. In the collusive scenario, all agents choose to play \textit{CP}, meaning that $\beta_{\Omega} = 1$, and according to \eqref{eq:Cost Function} and Table \ref{tbl:payoffs}, we have $u_1 = u_2 = 0.19\alpha$ and $u_3 = 0.25\alpha$, with the strongest agent receiving the highest payoff. 

Now consider that agent $1$, a weak agent, chooses to defect instead. In this case, we have $\beta_{\Omega} = 0.25$, and thus, $u_1 = 0.75$ and $u_2 = u_3 = 0$. As predicted by \eqref{eq:incentive factor}, the incentive factor for the weak agent to defect is $4$.

If the strong agent decides to defect while others cooperate, then $\beta_{\Omega} = 0.5$, and thus, $u_1 = u_2 = 0$ and $u_3 = 0.5$. Here, the incentive factor for the strong agent to defect is $2$. As predicted by \eqref{eq:parwise factor}, the incentive factor for the weak agent to defect is twice as high because its market power ($\beta_1 = \beta_2 = 0.25$) is half that of its stronger opponent. 
\end{example}

Our sharing mechanism, using $\beta_{\omega}$, allows for heterogeneous agents to coexist while preserving the dynamic stemming from the minimum price. In fact, the power parameter $\beta$ not only represents the ability to cut costs but also the work capacity. As strong agents are more rewarded than weak ones in symmetrical plays (see Table \ref{tbl:payoffs}), the payoff structure of the MPG emulates the natural turnover in auction participation of heterogeneous agents observed in markets, making the coexistence between weak and strong agents possible. Indeed, a higher payoff mathematically represents the higher frequency of wins by stronger agents in a strict minimum price-ruled process. When the setting simplifies to a homogeneous game, the sharing mechanism in symmetrical plays smooths out a random turnover between agents of equal power, as shown in Example \ref{ex:homogeneous case}.

\begin{example}[Homogeneous special case]\label{ex:homogeneous case}
    Consider a homogeneous duopoly ($\sigma(\beta) = 0$ and $n = 2$). According to \eqref{eq:Cost Function} and Table \ref{tbl:payoffs}, we get the following bimatrix game:
    \begin{center}
        \begin{tikzpicture}[scale=2]
            \draw[thick] (0,0) rectangle (2,2);
            \draw[thick] (0,1) -- (2,1);
            \draw[thick] (1,0) -- (1,2);
            \draw (1,2) -- (2,1);
            \draw (1,0) -- (0,1);
            \node at (0.5,1.5) {$b/2$};
            \node at (1.5,0.5) {$\alpha b/2$};
            \node at (0.25,0.25) {$0$};
            \node at (0.75,0.75) {$b$};
            \node at (1.25,1.25) {$b$};
            \node at (1.75,1.75) {$0$};
            \node at (-0.3, 1.5) {\textit{FP}};
            \node at (-0.3, 0.5) {\textit{CP}};
            \node at (0.5, 2.2) {\textit{FP}};
            \node at (1.5, 2.2) {\textit{CP}};
            \node at (-1,1) {\textbf{agent 1}};
            \node at (1,2.5) {\textbf{agent 2}};
        \end{tikzpicture}
    \end{center}
    In symmetrical plays, the payoffs are the same for both agents. This is mathematically equivalent to a random selection mechanism for breaking ties between the lowest bidders under the hard minimum price rule. Since homogeneous players produce the same bid, $b$, a selection mechanism would involve a random process to break ties in symmetrical play. This could be done by either using the uniform distribution, $i^* \sim \mathcal{U}(\mathcal{N})$, or by adding a stochastic perturbation to the bids, $b_i + \epsilon_i$.
\end{example}

\begin{proposition}
The $n$-player homogeneous MPG is a Prisoner's Dilemma. 
\label{proposition:Homogeneous MPG is a PD}
\end{proposition}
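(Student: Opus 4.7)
The plan is to verify that under the homogeneous assumption $\beta_i=1/n$ for all $i$, the symmetric payoff tensor of Table \ref{tbl:payoffs} satisfies the three characterizing axioms of an $n$-player Prisoner's Dilemma: (P1) defection (\textit{FP}) strictly dominates cooperation (\textit{CP}) for every agent regardless of opponents' actions; (P2) the universal-defection profile is the unique Nash equilibrium; and (P3) this Nash profile is strictly Pareto dominated by the universal-cooperation profile.

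First, I would specialize Table \ref{tbl:payoffs} by substituting $\beta_i=1/n$, which yields a common bid $b=(1-1/n)v$ and, after computing $\beta_{\Omega}=(k+1)/n$ when the reference agent $i$ is itself among those playing \textit{FP}, the simplified per-agent payoffs $u_i(\textit{FP},k)=b/(k+1)$ for $0\le k\le n-1$, together with $u_i(\textit{CP},0)=\alpha b/n$ and $u_i(\textit{CP},k)=0$ for $k>0$. This step is purely algebraic and eliminates the $\beta$-dependence entirely, leaving a fully symmetric normal-form game in which every player faces the same payoff function of its own action and the count $k$ of defecting opponents.

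Next, I would check (P1) by comparing $u_i(\textit{FP},k)$ and $u_i(\textit{CP},k)$ across the three regimes of Table \ref{tbl:payoffs}. For $k>0$ the inequality is immediate since $u_i(\textit{CP},k)=0$ while $u_i(\textit{FP},k)$ is a strictly positive fraction of $b>0$; this subsumes \eqref{eq:defection_yield}. The delicate case is $k=0$, where strict dominance reduces to $b>\alpha b/n$, i.e.\ $\alpha<n$, and I would justify this via the upper bound $\alpha\le\tau$ together with the stipulation that $\tau$ is chosen to keep bids ``realistic'' (in particular, $\tau<n$). Once (P1) is in place, (P2) follows by iterated elimination of the strictly dominated strategy \textit{CP}, and (P3) reduces to comparing the symmetric payoff $\alpha b/n$ under universal \textit{CP} with $b/n$ under universal \textit{FP}, which is immediate from $\alpha>1$ and also recovers \eqref{eq:cooperation_yield} in the symmetric case.

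The main obstacle is precisely the dominance check at $k=0$: it is the unique step that ties the collusive factor $\alpha$ to the number of players $n$, and it is what separates a genuine PD from the degenerate tie-breaking regime highlighted in Example \ref{ex:homogeneous case}. I would therefore state the constraint $\alpha<n$ explicitly rather than bury it inside $\tau$, so that the boundary of the PD regime is transparent to any practitioner calibrating the MPMG.
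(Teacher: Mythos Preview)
Your proposal is correct and follows essentially the same route as the paper: both specialize the payoffs to $\beta_i=1/n$ and then verify by direct comparison that \textit{FP} strictly dominates \textit{CP} (hence is the unique Nash profile) while universal \textit{CP} Pareto-dominates universal \textit{FP}. The only substantive difference is that the paper imposes $\alpha\le 2$ up front and uses $n\ge 2$ to get the dominance inequality at $k=0$, whereas you isolate the sharper condition $\alpha<n$; your formulation is slightly more general but otherwise the argument is the same.
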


\begin{proof}
    Consider a reference player $i\in\mathcal{N}$. Also, consider $\alpha \leq 2$, a realistic upper bound for the collusive factor. The following payoff structure unfolds by definition:

    \begin{itemize}
        \item $T$: Temptation payoff (player $i$ bids the fair price while all opponents bid their collusive price),
        \item $R$: Reward payoff (everybody cooperates),
        \item $P$: Punishment payoff (everybody defects),
        \item $S$: Sucker’s payoff (player $i$ bids the collusive price while at least one opponent defects).
    \end{itemize}

    \begin{enumerate}[i.]
        \item \textit{$P$ is the dominant strategy profile.}
        
            In the homogeneous MPG, $\beta_i = \frac{1}{n}$ for all $i\in\mathcal{N}$, meaning that $b_i = b$ for all $i\in\mathcal{N}$ according to \eqref{eq:Cost Function}. Hence,
            \[
            P = u_i(\textit{FP}, k=n-1) = \frac{b}{n} \quad \text{and} \quad S = u_i(\textit{CP}, k>0) = 0, 
            \]
            and $P > S$. Since 
            \[
            u_i(\textit{FP}, k=0) = b > \frac{\alpha b}{n} = u_i(\textit{CP}, k=0) \quad \forall i\in\mathcal{N}
            \] 
            for any $\alpha < 2$ (since $n \geq 2$), $T > S$, meaning that \textit{FP} (defection) is the best response for every player. The strategy profile $P$ is therefore the unique Nash equilibrium.  
   
        \item \textit{The strategy profile $R$ is Pareto Optimal.}
    
            A strategy profile is Pareto Optimal if no player can improve their outcome without making at least one other player worse off. Since 
            \[
            u_i(\textit{CP}, k=0) = \frac{\alpha b}{n} >  \frac{b}{n} = u_j(\textit{FP}, k=n-1) \quad \forall (i,j) \in \mathcal{N}, 
            \] 
            because $\alpha > 1$, $R > P$, and $(\textit{CP}, k=0)$ is therefore Pareto Optimal. 
    
        \item \textit{The payoff structure satisfies the inequality $T > R > P > S$.}
        
            Since 
            \[
            u_i(\textit{FP}, k=0) = b >  \frac{\alpha b}{n} = u_i(\textit{CP}, k=0) \quad \forall i\in\mathcal{N},
            \]
            because $\frac{\alpha}{n} < 1$, $T > R$. By combining the results of i. and ii., we conclude $T > R > P > S$.
    \end{enumerate}
\end{proof}

\begin{corollary}
The $n$-player heterogeneous MPG is not a Prisoner's Dilemma.
\label{corollary:generalization}
\end{corollary}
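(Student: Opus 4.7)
The plan is to show that the canonical Prisoner's Dilemma chain $T > R > P > S$ used in Proposition~\ref{proposition:Homogeneous MPG is a PD} can no longer be sustained once heterogeneity ($\sigma(\beta) > 0$) is introduced. Two things actually break: the payoff levels become player-specific, so there is no longer a common $(T,R,P,S)$ tuple valid across the game; and, more importantly, the inequality $T > R$ itself can be reversed for sufficiently strong players. Either of these suffices to rule out the PD structure, but I would push the argument through the second, sharper route.

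First I would recompute the four canonical payoffs keeping the $\beta_i$-dependence explicit, using equation~\eqref{eq:Cost Function} and Table~\ref{tbl:payoffs}: $T_i = (1-\beta_i)v$, $R_i = \alpha\,\beta_i(1-\beta_i)v$, $P_i = \beta_i(1-\beta_i)v$, and $S_i = 0$. The inequalities $R_i > P_i$ and $P_i > S_i$ still reduce to $\alpha > 1$ and $0 < \beta_i < 1$, respectively, and hence persist verbatim from the proof of Proposition~\ref{proposition:Homogeneous MPG is a PD}. The decisive condition is $T_i > R_i$, which after dividing by $(1-\beta_i)v > 0$ becomes $\beta_i < 1/\alpha$.

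In the homogeneous case, $\beta_i = 1/n$ and the bound $\alpha \leq 2 < n$ silently enforces $\beta_i < 1/\alpha$ for every agent. Under heterogeneity, however, $\sum_i \beta_i = 1$ forces some players above the average $1/n$, and nothing in the model prevents a strong player $i^{*}$ from satisfying $\beta_{i^{*}} \geq 1/\alpha$; for such a player, $T_{i^{*}} \leq R_{i^{*}}$ and the PD chain is broken at its leading link. To make this concrete, I would close with a minimal duopoly counterexample, e.g., $\alpha = 2$ and $\beta = (0.4, 0.6)$, where the stronger agent has $T = 0.4\,v$ but $R = 0.48\,v$, directly contradicting $T>R$.

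The main obstacle is conceptual rather than computational: I need to clarify the logical status of the statement. Since the PD definition in Proposition~\ref{proposition:Homogeneous MPG is a PD} requires the ordering to hold across all players with a shared $(T,R,P,S)$ structure, a single witness is already sufficient to disqualify the game. I would therefore emphasize that the corollary does not claim that \emph{every} heterogeneous MPG fails the PD ordering pointwise for every agent, but rather that heterogeneity is incompatible with the symmetric PD structure and can concretely overturn the $T>R$ comparison for the strongest player, which is precisely the feature that aligns strong-player behavior in the MPMG with the empirical observation that market leaders have less incentive to defect than marginal competitors.
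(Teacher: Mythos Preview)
Your proposal is correct and follows essentially the same route as the paper: both arguments identify that the critical inequality $T_i > R_i$ reduces to $\alpha\beta_i < 1$ (equivalently $\beta_i < 1/\alpha$), and that a sufficiently strong player in a heterogeneous market can violate this, overturning condition~iii of Proposition~\ref{proposition:Homogeneous MPG is a PD}. Your version is somewhat more thorough---you compute all four payoffs explicitly, confirm that $R_i>P_i>S_i$ survive, supply a concrete duopoly witness, and discuss the logical status of the claim---whereas the paper simply states the threshold $\alpha\beta_i>1$ and concludes; but the underlying idea is identical.
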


\begin{proof}
    In the heterogeneous MPG, the market powers differ among agents, leading to variations in the payoffs. If the level of heterogeneity is high enough, the strongest agent may receive a higher payoff with $R$ (collusion) than with $T$ (being the only defector). In this case,
    \[ 
    u_i(\textit{CP}, k=0) = \alpha \beta_i b_i \quad \forall i\in\mathcal{N}, 
    \]
    if $\alpha \beta_i > 1$, then $u_i(\textit{CP}, k=0) = R > T = u_i(\textit{FP}, k=0)$, and condition iii. in the proof of Proposition \ref{proposition:Homogeneous MPG is a PD} no longer holds.
\end{proof}

\subsection{Markov Game Formulation}
\label{sec:Markov Game Formulation}

In contrast to iterated games, Markov games evolve over multiple stages, with states dependent on previous actions. This dynamic nature better captures the complexity of real-world markets where strategies evolve over time, especially in the context of behavioral emergence, which may heavily rely on self-strengthening trends. The MPMG extends the framework of the MPG by providing the advantage of the Markov property, that relies on a state space that encodes information about opponents' past plays and characteristics. 

\begin{definition}{\textit{Markov game.}} A Markov game, also known as a stochastic game, is a generalization of a MDP to multiple interacting agents contained in the set \( N \). Each player \( i \) has an associated action space \( \mathcal{A}_i \). The game is defined over a common state space \( \mathcal{S} \). The dynamics of the game are captured by a transition function \( T: \left(\bigtimes_{i \in N} \mathcal{A}_i\right) \times \mathcal{S} \rightarrow [0,1] \), which determines the probability of moving from one state to another based on the actions of all players. Each player \( i \) has a reward function \( R_i: \left(\bigtimes_{i \in N} \mathcal{A}_i\right) \times \mathcal{S} \rightarrow \mathbb{R} \), defining the payoff \( r_i \) for player \( i \) based on the current state and actions taken by all players. The objective for each player \( i \) is to maximize the expected sum of discounted rewards, expressed as \( E\left[\sum_{t=0}^\infty \gamma^t r_{i,t}\right] \), where \( \gamma \) is the discount factor that values present rewards over future rewards. A key feature of an MDP, and therefore of a Markov game, is the Markov property, which asserts that future states depend only on the current state and the actions taken by the agents, not on the sequence of events that preceded it, ensuring that past states are irrelevant once the current state is known.
\end{definition}

\paragraph{Action space.} The action space $\mathcal{A} = \{0, 1\}$ is binary, reflecting the two strategies available to players, with bidding the fair price being represented by $0$ and bidding the collusive price with $1$. This representation simplifies the computation of action and joint action frequencies. Let $a_i^t \in \mathcal{A}$ be the action taken by agent $i$ at time $t$, and $f^t \in \mathcal{A}^n$ be the joint action $(a_1, \dots, a_n)$ in auction $t$. The action and joint action frequencies up to iteration $t$ are then respectively given by
\[
\bar{a}_i^t = \frac{1}{t} \sum_{m=1}^{t} a_i^m,
\]
and
\[
\bar{f}^t = \frac{1}{t} \sum_{m=1}^{t} f^m.
\]

\begin{assumption}[Information availability]
All players have access to information about the game status and basic features of their opponents.
\label{assumption:Availability of information}
\end{assumption}

Assumption \ref{assumption:Availability of information} is supported by the availability of auction data, which results from transparency laws. This assumption is crucial for constructing the state space, as it provides the signals necessary for players to make informed decisions during implementation.

\paragraph{State space.} The choice of state space in a Markov game is often at the discretion of the implementer, depending on the specifics of the problem and the goals of the implementation. However, for our purposes, we define the MPMG with core variables that capture essential aspects of the game. Specifically, these sets include all players' information on action frequencies $\mathbf{\bar{a}}^t = \{ \bar{a}_i^t \}_{i \in \mathcal{N}}$, joint action frequencies $\mathbf{\bar{f}}^t = \{ \bar{f}_i^t \}_{i \in \mathcal{N}}$, average rewards $\mathbf{\bar{r}}^t = \{ \bar{r}_i^t \}_{i \in \mathcal{N}}$, and power parameters $\mathbf{\beta} = \{ \beta_i \}_{i \in \mathcal{N}}$. Therefore, the state space at any time $t$ is $\mathcal{S}_t = \{ \mathbf{\bar{a}}^t, \mathbf{\bar{f}}^t, \mathbf{\bar{r}}^t, \mathbf{\beta} \}$, and is of size $2n + |\mathcal{A}|^n$. This minimal configuration for the state space aligns with Assumption \ref{assumption:Availability of information}. Furthermore, the power parameters also allow for the possibility of modeling a dynamic market in which agents could lose or gain market power.

\paragraph{Reward function.} The payoff structure defined for the MPG leads to a reward function in the MPMG, emphasizing the inherent connection between the Markov game and its RL implementation. The reward function, together with the transition function, which defines how the game state evolves in response to actions, fundamentally shapes the dynamics and outcomes of the game. In fact, in the classical RL setting, the normalized reward function for agent $i$ at each iteration $t$ is expressed as
\[
G_i^t = (1-\gamma)\sum_{m=0}^{t-1} \gamma^{m} r_i^{t-m},
\]
where $\gamma$ is the discount factor and $r_i^{t-m}$ is the immediate reward received by agent $i$ at time $t-m$. The MPMG relies on the tension between Nash and Pareto dominant strategies inherent in the classical Prisoner's Dilemma. However, as with the state space, the reward function can be adjusted for specific implementation purposes (e.g., reward shaping) as long as the fundamental dynamics are preserved. Following Assumption \ref{assumption:Common Value}, we can further assume that the contract value $v$ remains equivalent as the iterations unfold, normalizing it to $1$. Therefore, \eqref{eq:Cost Function} becomes
\[
b_i = 1 - \beta_i.
\]
We do not make this assumption formal, as this normalization does not affect the generality of the model but only simplifies the calculations. The immediate rewards for agent $i$ at any time $t$ are given in Table \ref{tbl:payoffs}.

\begin{assumption}[Single Public Contractor]
All public contractors share the same goal.
\label{assumption:Single Public Contractor}
\end{assumption}

In real-world scenarios, the set of public contractors in a given procurement market usually includes many entities. Despite this plurality, it is safe to assume that all public contractors work towards the same overarching goal. Assumption \ref{assumption:Single Public Contractor} allows us to reduce the set of public contractors to a singleton.

\paragraph{Dynamic.} The MPMG unfolds as an episodic game of arbitrary length (number of episodes), with each episode consisting of a single-step auction (Assumptions \ref{assumption:Single Stage Auction} and \ref{assumption:Simultaneity of Bids}). Agents compete against each other in an environment representing the same entity in each episode (Assumption \ref{assumption:Single Public Contractor}). Due to the Markov property, memory is encapsulated in the observation space $\mathcal{S}$ at any episode $t$. Agents use this observation space to learn their strategies, effectively playing a Markov game where the reward function, $R(a_1, \dots, a_n)$, is deterministic and based on the joint action. The transition probability function is given by $P(s'|s, a)$, where $s'$ is the new state and $s \in \mathcal{S}$. When the state space is not used for decision-making, the agents are effectively playing an iterated MPG.

\section{Computer Experiments}\label{sec:ComputerExperiments}
Cooperation in MARL settings is well studied. However, this algorithmic cooperation mostly relies on explicit designs—such as shared joint action spaces \cite{Zhou2023}, algorithmic communication \cite{Sukhbaatar2016}, reputation-based models \cite{Zhang2021}—which are not suitable for our context, as these approaches emulate scenarios where agents deliberately bias their actions towards cooperation.

In this study, we evaluate the inherent properties of the MPMG, requiring unbiased players. If collusion occurs, it would arise solely from the game’s intrinsic structure. Therefore, we employed basic yet established state-of-the-art MARL algorithms to define artificial agents that are symmetric in their decision-making processes, do not communicate, and do not hold beliefs about their opponents. In other words, we aim to create isolated and equally rational agents.

We conducted a total of twenty-four experiments, combining six agent types with four MPMG configurations (2-player homogeneous, 5-player homogeneous, 2-player heterogeneous, and 5-player heterogeneous MPMG). As discussed in Section \ref{sec:The Minimum Price Markov Game}, \(0 \leq \sigma(\beta) \leq 1\) represents the standard deviation of the power parameters (\(\beta\)) distribution, with \(\sigma(\beta)=0\) representing the homogeneous case. For experiments in the heterogeneous setting, we set \(\sigma(\beta) = 0.5\). Additionally, we set \(\alpha = 1.3\), meaning that the collusive bids are 30\% higher than the fair bids. Each agent was trained on 100 auctions over 100 replications for statistical robustness.

\subsection{Agent Representation} \label{sec:AgentRepresentation}
We based our agents on three foundational approaches of MARL, that is, Multi-Armed Bandits (MAB), Deep Q-learning, and Actor-Critic Policy Gradient. Implementation details are available in Appendix \ref{app:ImplementationDetails}. 

\begin{definition}{\textit{Multiagent Reinforcement Learning.}} MARL extends single-agent RL techniques to settings with multiple agents learning concurrently in a shared environment. In RL, an agent interacts with its environment to maximize cumulative rewards, where key components include agent, environment, state, action, reward, policy, and value function. The agent, acting as the decision-maker, responds to the environment, which consists of external factors and situations represented as states. Actions are possible decisions the agent can take, evaluated by rewards as feedback on their success. The policy is the agent’s strategy for selecting actions based on the current state, while the value function estimates future rewards to determine the desirability of states or actions. In MARL, multiple agents simultaneously learn and update their policies, causing the environment to appear non-stationary from any single agent's perspective. The dynamics lead to more complex interactions compared to single-agent RL, requiring robust techniques to address the challenges of Markov games and multiagent systems in general.
\end{definition}

\paragraph{The MPG as a bandit problem.} MAB algorithms \cite{Lattimore2020} represent a simplified form of RL or MARL, focusing on the exploration-exploitation trade-off without considering future states (\textit{stateless}), that is, without a model of the environment. MAB methods solve the bandit problem, where the decision-maker must sequentially choose among different actions with unknown reward distributions in order to maximize the total reward over time. Each action represents a distinct stochastic process with its own, a priori unknown, probability distribution of rewards. The goal is to minimize \textit{regret}, which is the difference between the total reward the decision-maker could have earned by always choosing the best action and the actual reward obtained by balancing exploration and exploitation. MAB agents actually play an iterated MPG, as they do not utilize the observation space. We deployed $\epsilon$-greedy agents \cite{Vermorel2005} that balance exploitation and exploration by selecting a random action with a probability $\epsilon$ decreasing over time, Bayesian Thompson Sampling (TS) agents \cite{Gupta2011} that maintain a probability distribution for the expected reward of each action, and Upper Confidence Bound (UCB) agents \cite{Auer2002} that select the action with the highest upper confidence bound of the estimated return.

\paragraph{Deep Q-learning.} Methods based on Q-learning \cite{Sutton2018} maintain state-action pair values, named Q-values, based on the state-action pair expected reward. Decision making is then conducted following the highest Q-values associated with the state-action pair in question. In our experiments, we used the Double Dueling Deep Q-network (D3QN) algorithm \cite{Hasselt2016}. In the D3QN, a multi-layer perceptron is used to approximate the Q-values, and the Double Q-learning approach is employed to reduce overestimation bias by maintaining two sets of Q-value estimates. The Dueling architecture, on the other hand, decomposes the Q-value into two separate streams representing the state-value function and the advantage function, which helps the model learn better representations of the value of being in a given state, independent of the specific actions.

\paragraph{Actor-critic policy gradient.} Unlike Q-learning, policy gradient methods focus on optimizing a parameterized policy $\pi_{\theta}(a|s)$ directly by gradient ascent on expected rewards \cite{Sutton2018}. In the actor-critic framework, two function approximators are employed. The \textit{actor}, which is responsible for selecting actions based on the current policy, and the \textit{critic}, which evaluates the action-value function to provide feedback on the quality of the chosen actions. The actor updates the policy in the direction suggested by the critic, while the critic learns to estimate the expected return, enabling a stable and efficient training process by reducing the variance of the policy gradient estimates. Here, we used Multi-Agent Proximal Policy Optimization (MAPPO) agents \cite{Schulman2017} that extend the Proximal Policy Optimization (PPO), a popular policy gradient method, to multi-agent settings. The core idea of PPO is to strike a balance between policy improvement and preventing drastic updates to the policy, which could harm performance. To achieve this, PPO uses a modified objective function that constrains how much the new policy is allowed to diverge from the old policy during an update.

\subsection{Results} \label{sec:Results}

Before presenting the results of the experiments, it is essential to outline several important considerations for evaluating artificial agents in the multi-agent context. Indeed, the interpretation of the results depends on both individual and collective convergence. By individual convergence, we mean the convergence of the algorithms themselves. For D3QN and MAPPO, this is assessed by examining training loss function values over time. In the case of MAPPO, both the actor and critic networks must converge. In our experiment, they all converged, as can be seen in Figure \ref{fig:loss_functions}. Additionally, D3QN and MAPPO offer internal metrics indicating their policies. For MAPPO, we can directly observe the evolution of the policies regarding actions \textit{FP} and \textit{CP}, while D3QN uses Q-values for each action. The evolution of MAPPO policy values and D3QN Q-values during training is shown in Figure \ref{fig:policies}. In the case of the stateless bandit algorithms ($\epsilon$-greedy, TS, and UCB), cumulative regrets are generally used to establish learning in the single-agent context, though they are less meaningful in Markov games. Nevertheless, cumulative regrets are presented in Figure \ref{fig:cumulative_regrets}. Figures \ref{fig:loss_functions}, \ref{fig:policies}, and \ref{fig:cumulative_regrets} are provided in Appendix \ref{app:AdditionalFigures}.

However, the joint action frequencies, or the frequencies at which each strategy profile is played, are metrics common to all experiments, and they provide a clear picture of collective convergence. Therefore, they offer a reliable way to evaluate and compare our MARL agents in terms of behavioral trends. In the single-stage MPMG (MPG) or in similar static games, three distinguishable collective outcomes emerge: agents play the Nash profile, the Pareto profile, or they diverge in their choices, meaning that only a subgroup of players share a nonzero return associated with the Nash play. In dynamic settings, outcomes are based on convergence toward a particular outcome, with agents represented by stochastic processes typically converging toward mixed strategies. The challenge, then, is fitting a discrete interpretation (do they cooperate or not?) onto a continuous spectrum of behaviors.

The evolution of the average joint action frequencies over training episodes is displayed in Figure \ref{fig:joint_action_frequencies} in Appendix \ref{app:AdditionalFigures}. The average joint action frequencies from the last training episode for all experiments and outcomes are displayed in Table \ref{tbl:joint_action_frequencies}, showcasing the impact of atomicity and heterogeneity on the agents' ability to coordinate, as well as the different behavioral trends emerging from each configuration of agents and market parameters.

\begin{table}[h!]
\footnotesize
\centering
\renewcommand{\arraystretch}{1.3}
\caption{\footnotesize Average joint action frequencies across repeats at the last training episode for each outcome in each experiment. Bold values indicate the highest average joint action frequency among the three outcomes.}
\begin{tabular}{l|ccc|ccc|ccc|ccc}
    \toprule
    \multicolumn{1}{c|}{} & \multicolumn{3}{c}{$n=2$, $\sigma(\beta)=0.0$} & \multicolumn{3}{c}{$n=2$, $\sigma(\beta)=0.5$} & \multicolumn{3}{c}{$n=5$, $\sigma(\beta)=0.0$} & \multicolumn{3}{c}{$n=5$, $\sigma(\beta)=0.5$} \\
    \cmidrule(lr){2-4} \cmidrule(lr){5-7} \cmidrule(lr){8-10} \cmidrule(lr){11-13}
    & \textit{FP} & \textit{CP} & Other & \textit{FP} & \textit{CP} & Other & \textit{FP} & \textit{CP} & Other & \textit{FP} & \textit{CP} & Other \\
    \midrule
    D3QN & 0.37 & 0.16 & \textbf{0.47} & 0.37 & 0.17 & \textbf{0.47} & 0.07 & 0.02 & \textbf{0.91} & 0.07 & 0.02 & \textbf{0.91} \\
    TS & \textbf{0.72} & 0.04 & 0.24 & \textbf{0.69} & 0.04 & 0.27 & \textbf{0.41} & 0.01 & 0.59 & 0.32 & 0.01 & \textbf{0.67} \\
    $\epsilon$-greedy & \textbf{0.70} & 0.03 & 0.27 & \textbf{0.70} & 0.03 & 0.27 & 0.35 & 0.01 & \textbf{0.64} & 0.35 & 0.01 & \textbf{0.64} \\
    MAPPO & \textbf{0.52} & 0.10 & 0.38 & \textbf{0.51} & 0.11 & 0.39 & 0.24 & 0.01 & \textbf{0.75} & 0.21 & 0.01 & \textbf{0.78} \\
    UCB & 0.40 & \textbf{0.58} & 0.02 & 0.43 & \textbf{0.55} & 0.02 & 0.34 & \textbf{0.43} & 0.23 & 0.22 & 0.08 & \textbf{0.70} \\
    \bottomrule
\end{tabular}
\label{tbl:joint_action_frequencies}
\end{table}

In addition, Figure \ref{fig:heatmap} ranks each agent-MPMG configuration pair in terms of collusive potential by using scores based on the data in Table \ref{tbl:joint_action_frequencies}. More specifically, each value in Figure \ref{fig:heatmap} represents the difference between the average joint action frequency of the last training episode of the Nash and Pareto outcomes. The resulting value is then scaled from $0$ (most Nash) to $1$ (most Pareto) using the \textit{minmax} formula.

\begin{figure}[!ht]
    \centering
    \caption{\small One-dimensional heat-map highlighting the relative difference among agents and MPMG configurations in terms of their collusive potential. From the most Nash (left) to the most Pareto (right) configuration.}
    \includegraphics[width=\textwidth]{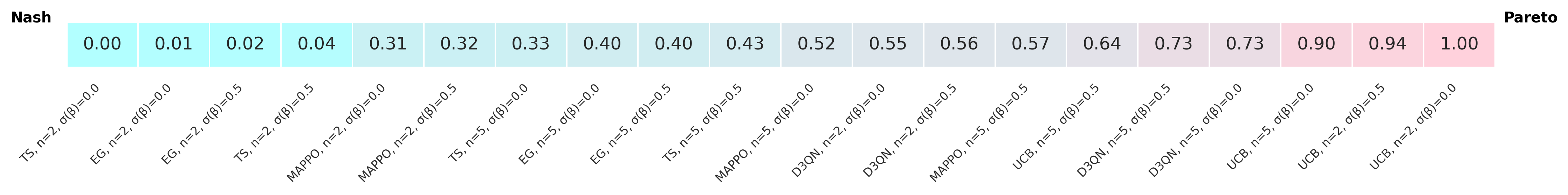}
    \label{fig:heatmap}
\end{figure}

\subsection{Analysis} \label{sec:conclusion}
Our results, summarized in Table \ref{tbl:joint_action_frequencies} and Figure \ref{fig:heatmap}, highlight the varying effectiveness of different MARL approaches in the MPMG. We can see that, in the $5$-player heterogeneous MPMG, and in contrast to the results of the $2$-player homogeneous MPMG, agents always converge towards a suboptimal outcome far from the Pareto profile, but closer to the Nash Equilibrium. This shows the impact of atomicity and heterogeneity on the capacity of agents to coordinate themselves, let alone cooperate in collusion.

Also, UCB agents seem to favor the Pareto profile, consistently across different configurations. In fact, UCB is the only algorithm that not only fosters higher rates of convergence towards the Pareto profile but also shows robustness against atomicity and heterogeneity, at least when they are increased individually. This is clear in Figure \ref{fig:heatmap}, where we can see that the three highest scores belong to UCB agents. However, we can see in Table \ref{tbl:joint_action_frequencies} that in the $5$-player heterogeneous MPMG, UCB agents behave similarly to their counterparts.

Moreover, it seems there is no correlation between algorithmic sophistication and collusive potential. Stateless bandits can either converge towards an approximate Nash Equilibrium (TS, $\epsilon$-greedy) or favor the Pareto outcome (UCB), while methods using more information either favor the Nash Equilibrium (MAPPO) or other suboptimal outcomes (D3QN).

\section{Discussion}\label{sec:Discussion}

In this section, we discuss, interpret, and put our results into perspective.

\subsection{Interpretation}
Three key insights emerge from the computer experiments.

First, the MPMG is a reliable approximation of market dynamics, as it exhibits properties consistent with Jean Tirole's seminal work on market collusion potential \cite{Tirole2003}. Specifically, as shown in Section \ref{sec:Static Form}, in the MPMG, the incentive to defect while opponents cooperate is asymmetrical with respect to the power parameter $\beta$, thereby undermining the intrinsic value of the Pareto strategy. This effect strengthens as the number of agents increases, further reducing the likelihood of successful collusion.

Second, tacit coordination relies heavily on self-reinforcing trends. Indeed, UCB displays counter-intuitive behaviors for a no-regret algorithm. In such settings, the regret of not defecting is higher than the regret associated with not cooperating (see Table \ref{tbl:payoffs}). Additionally, other no-regret algorithms, TS and $\epsilon$-greedy, favor the Nash Equilibrium. The particularity of UCB, at least in the original version we implemented \cite{Auer2002}, is that it is a deterministic method because of the way exploration is handled. Therefore, when fully symmetric UCB agents populate the game, coordination towards the Pareto outcome builds upon self-reinforcing beliefs as every agent acts the same. As we can see from Figure \ref{fig:ucb}, the confidence intervals for the average (across repeats) joint action frequencies associated with defection and cooperation of the UCB agents are null. In contrast, Figure \ref{fig:mappo} shows non-null confidence intervals for MAPPO agents, and Figure \ref{fig:joint_action_frequencies} in Appendix \ref{app:AdditionalFigures} shows that only UCB-related graphs can have null confidence intervals. Furthermore, it has been shown that UCB agents are the RL agents most prone to cooperate in the Iterated Prisoner's Dilemma, with cooperation rates similar to ours \cite{Lin2022}.

\begin{figure}[!ht]
    \centering
    \caption{\small Average joint action frequencies of UCB (a) and MAPPO (b) agents in the $2$-player homogeneous MPMG.}
    \scalebox{0.7}{ 
    \begin{minipage}{\textwidth}  
        \centering
        \begin{subfigure}[b]{0.45\textwidth}  
            \centering
            \includegraphics[width=\textwidth]{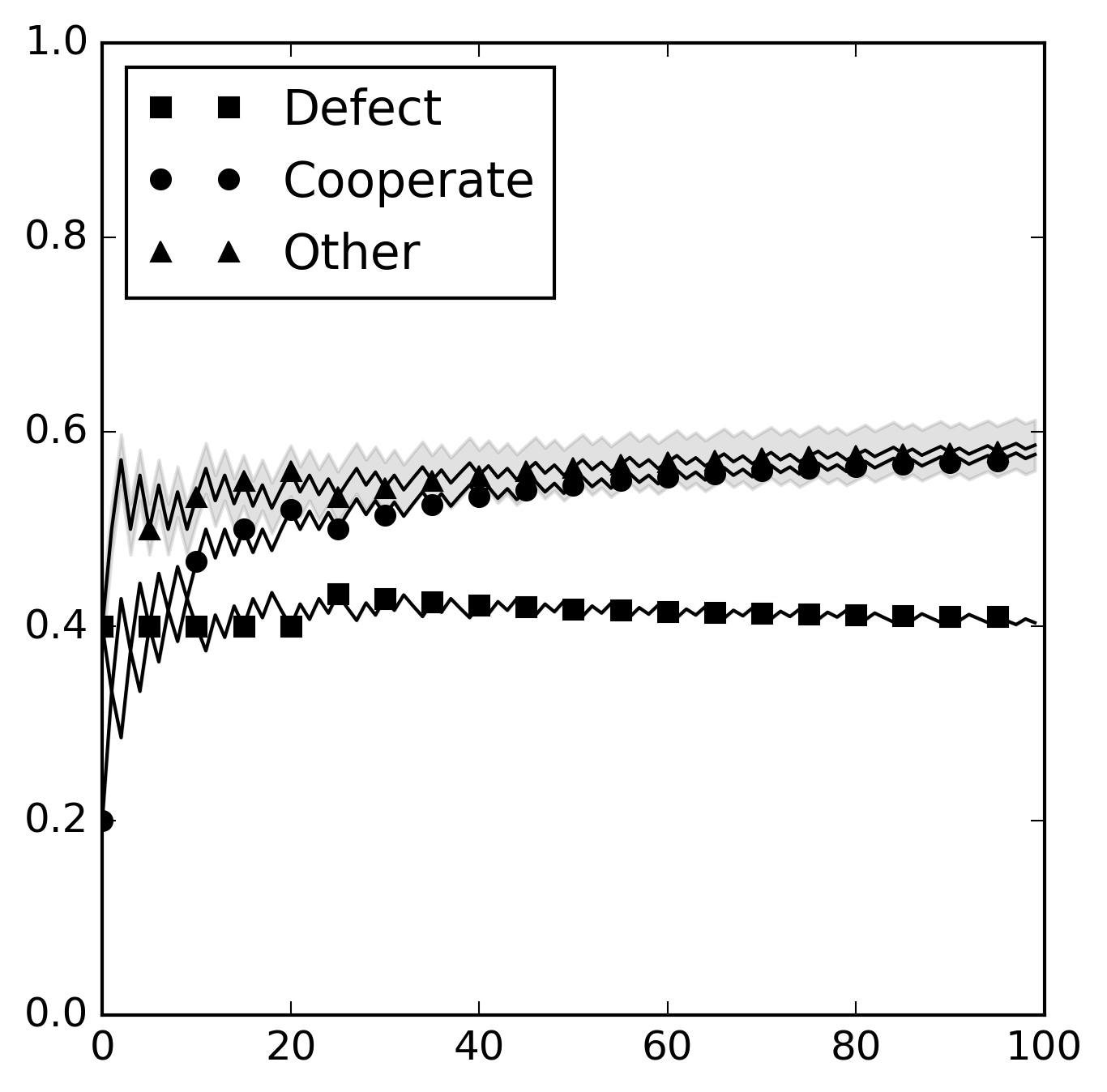}
            \caption{\small UCB}
            \label{fig:ucb}
        \end{subfigure}
        \hfill
        \begin{subfigure}[b]{0.45\textwidth}  
            \centering
            \includegraphics[width=\textwidth]{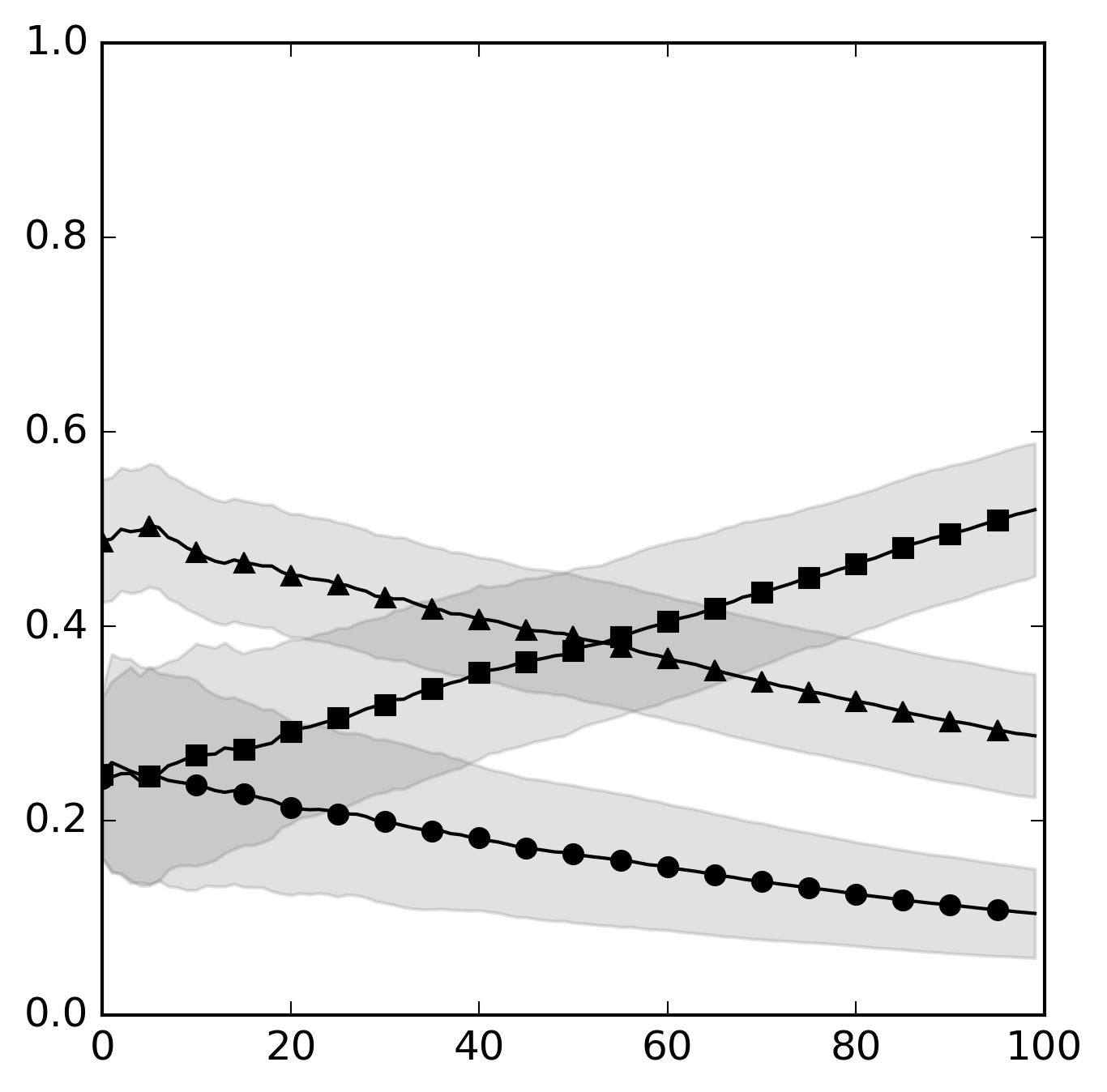}
            \caption{\small MAPPO}
            \label{fig:mappo}
        \end{subfigure}    
        
        \label{fig:ucb_mappo_ci}
    \end{minipage}
    }
\end{figure}

Finally, more information does not lead to more cooperation. There appears to be no positive correlation between the sophistication of our AI agents and the emergence of tacit coordination, a conclusion we share with other authors \cite{Lin2022, Vassiliades2011, Marty2024}. In fact, in the Prisoner's Dilemma, no amount of information can a priori defuse the tension created by the existence of the Nash Equilibrium, as long as agents do not engage in self-reinforcing trust over cooperation.

We can therefore conclude that, provided the MPMG reasonably approximates minimum price-ruled auctions, and although collusion can occur in the most naive settings, the minimum price rule is resilient to non-engineered tacit coordination among rational actors, as is any rule preserving the structure of a social dilemma.

\subsection{Practical Considerations}
This study focuses on evaluating the behavior of artificial agents without explicitly quantifying their collusive potential in terms of collusive gains. Although the Pareto outcome provides higher returns compared to the Nash outcome, agents cannot strictly maximize their returns, as this would involve behavior divergent from typical learning processes. Moreover, even during evaluation—when agents no longer explore and strictly follow their policies—they often converge to mixed strategies rather than a pure Nash or Pareto strategy. Additionally, unlike bandits, D3QN and MAPPO map a policy to state observation, meaning that even when they learn the pure Pareto profile, their collusive gains are highly context-dependent. As a result, their returns in evaluation remain stochastic, and assessing them in terms of collusive gains may be both challenging and irrelevant.

It is also important to note that, in the context of the MPMG, other (non-Nash) suboptimal outcomes can be viewed as rotation schemes, that is, turnovers in the winning set of players. This kind of strategy emerges from mixed behavior and can lead to greater returns than a consistent Nash play.

\subsection{Algorithmic Pricing and Legal Perspective}
Our results align with specialized studies that reveal the challenge of achieving algorithmic cooperation, whether tacit or explicit \cite{Miklos-Thal2019}. However, they do not contradict the legal literature, where scholars often assert that algorithmic collusion is relatively straightforward to achieve \cite{Ezrachi2016a, Ezrachi2016b, Li2023, Green2015}. This study offers a framework and benchmark for tacit collusion and helps explain the contradictory reports in the literature by shedding light on the opaque nature of algorithmic tacit coordination.

Nonetheless, it is essential to add nuance for a comprehensive understanding of this complex issue. While legal concerns stem from tangible evidence found in digital open markets and auctions (see Section \ref{sec:Introduction}), algorithmic implicit coordination has been shown to be impeded in the specific context of unbiased social dilemmas, as discussed in Section \ref{sec:Related Work}. Additionally, the potential for algorithmic and tacit collusion in transparent markets (including public markets) will always exist. Complex cooperative behaviors can be engineered in virtually any large-scale agent system \cite{Gupta2017}, even if such implementations remain challenging. For instance, cooperation in competitive settings can be elusive even when explicit communication is permitted \cite{Harrington2016, Fonseca2012}. What is certain, at least from our perspective, is that further research is required to establish the precise theoretical and algorithmic conditions, as well as the practical considerations, regarding algorithmic collusion in public

\subsection{Future Work}
\paragraph{Extending the MPMG.} Future research should aim to extend the MPMG in both quantitative and qualitative dimensions. Quantitatively, this could involve creating environments that more closely mimic actual simulations, using approaches like empirical games \cite{Greenwald2024}. Qualitatively, it could involve relaxing certain assumptions, such as the common value assumption (Assumption \ref{assumption:Common Value}), particularly in highly heterogeneous market scenarios. Ultimately, developing a realistic, full-scale simulation of such a system would also necessitate a continuous formulation for the action space and a modular environment allowing for the implementation of various supplier selection models, thus generalizing the MPMG to a general auction Markov game.

Additionally, the public contractor could be modeled as an active participant in the environment, represented by an adaptive learning agent. This scenario has been explored by \cite{Brero2022}, where reinforcement learning was used in the context of Stackelberg games as an algorithmic defense against algorithmic pricing for e-commerce platforms. Practically, these perspectives could shape the future of more robust E-procurement designs, and theoretically, they offer a computer-based experimental framework that complements laboratory-based behavioral studies.

\paragraph{AI rationality and cyber cartels.} In an AI-governed world, the concept of rationality takes on new dimensions, particularly in economic interactions and market behavior. The experiments in this study operate under a set of restrictions to maintain an environment unbiased towards cooperation. However, as mentioned in Section \ref{sec:ComputerExperiments}, market players could align on algorithmic pricing using centralized learning and execution (i.e., a shared joint action space). This raises the question: could tacit collusion, emerging from a deliberately biased algorithmic structure, be easily detectable?

The potential existence of cyber cartels introduces new research opportunities into how agents might evade detection and punishment by law enforcers through planned algorithmic collusion. Such a consortium could gain advantages in both efficiency and stealth. Explicit coding for collusion would typically require real-life coordination among market players, potentially leaving traces and being susceptible to algorithmic auditing by legal authorities. However, implicit algorithmic coordination can either be programmed or accidental, creating a grey zone that complicates regulatory efforts. It is also worth noting that both digitized and non-digitized markets can be affected, as the only requirement for algorithmic pricing is the availability of data.

\clearpage

\addcontentsline{toc}{section}{References}
\bibliographystyle{utphys}
\bibliography{bib}

\providecommand{\href}[2]{#2}\begingroup\raggedright\begin{thebibliography}{10}

\bibitem{Gerlick2020}
J.~A. Gerlick and S.~M. Liozu, ``Ethical and legal considerations of artificial intelligence and algorithmic decision-making in personalized pricing,'' \href{http://dx.doi.org/10.1057/S41272-019-00225-2/FIGURES/1}{{\em Journal of Revenue and Pricing Management} {\bfseries 19} (4, 2020) 85--98}. \url{https://link.springer.com/article/10.1057/s41272-019-00225-2}.

\bibitem{Gormsen2022}
L.~L. Gormsen, ``Algorithmic antitrust and consumer choice,'' \href{http://dx.doi.org/10.1007/978-3-030-85859-9_3}{{\em Economic Analysis of Law in European Legal Scholarship} {\bfseries 12} (2022) 65--86}.

\bibitem{Ezrachi2020}
A.~Ezrachi and M.~Stucke, ``Sustainable and unchallenged algorithmic tacit collusion,'' {\em Northwestern Journal of Technology and Intellectual Property} {\bfseries 17} (2020) .

\bibitem{Li2023}
Q.~Li, N.~Philipsen, C.~Cauffman, and C.~C. Nl, ``Ai-enabled price discrimination as an abuse of dominance: a law and economics analysis,'' \href{http://dx.doi.org/10.1007/S12689-023-00099-Z}{{\em China-EU Law Journal 2023 9:1} {\bfseries 9} (4, 2023) 51--72}. \url{https://link.springer.com/article/10.1007/s12689-023-00099-z}.

\bibitem{Calvano2020a}
E.~Calvano, G.~Calzolari, V.~Denicolò, and S.~Pastorello, ``Artificial intelligence, algorithmic pricing, and collusion,'' \href{http://dx.doi.org/10.1257/AER.20190623}{{\em American Economic Review} {\bfseries 110} (10, 2020) 3267--97}.

\bibitem{Assad2020}
S.~Assad, R.~Clark, D.~Ershov, and L.~Xu, ``Algorithmic pricing and competition: Empirical evidence from the german retail gasoline market,'' \href{http://dx.doi.org/10.2139/SSRN.3682021}{{\em SSRN Electronic Journal} (11, 2020) }. \url{https://papers.ssrn.com/abstract=3682021}.

\bibitem{Rothkopf1999}
M.~H. Rothkopf, ``Daily repetition: A neglected factor in the analysis of electricity auctions,'' {\em The Electricity Journal} {\bfseries 12} (1999) 60--70. \url{https://sci-hub.st/https://www.sciencedirect.com/science/article/pii/S104061909900010X}.

\bibitem{Ye2019}
Y.~Ye, D.~Qiu, M.~Sun, D.~Papadaskalopoulos, and G.~Strbac, ``Deep reinforcement learning for strategic bidding in electricity markets,'' \href{http://dx.doi.org/10.1109/tsg.2019.2936142}{{\em IEEE Transactions on Smart Grid} (8, 2019) 1--1}.

\bibitem{Tellidou2007}
A.~C. Tellidou and A.~G. Bakirtzis, ``Agent-based analysis of capacity withholding and tacit collusion in electricity markets,'' \href{http://dx.doi.org/10.1109/TPWRS.2007.907533}{{\em IEEE Transactions on Power Systems} {\bfseries 22} (11, 2007) 1735--1742}.

\bibitem{Viehmann2021}
J.~Viehmann, S.~Lorenczik, and R.~Malischek, ``Multi-unit multiple bid auctions in balancing markets: An agent-based q-learning approach,'' \href{http://dx.doi.org/10.1016/j.eneco.2020.105035}{{\em Energy Economics} {\bfseries 93} (2021) }.

\bibitem{Matsukawa2019}
I.~Matsukawa, ``Detecting collusion in retail electricity markets: Results from japan for 2005 to 2010,'' \href{http://dx.doi.org/10.1016/j.jup.2018.12.005}{{\em Utilities Policy} {\bfseries 57} (4, 2019) 16--23}.

\bibitem{Baranek2021}
B.~Baranek, V.~Titl, and L.~Musolff, ``Detection of collusive networks in e-procurement,'' \href{http://dx.doi.org/10.2139/SSRN.3864186}{{\em SSRN Electronic Journal} (6, 2021) }. \url{https://papers.ssrn.com/abstract=3864186}.

\bibitem{Laffont1997a}
J.-J. Laffont and D.~Martimort, ``Collusion under asymmetric information,'' \href{http://dx.doi.org/10.2307/2171943}{{\em Econometrica} {\bfseries 65} (1997) 875}. \url{https://www.researchgate.net/publication/4898655}.

\bibitem{Lebrun2006}
B.~Lebrun, ``Uniqueness of the equilibrium in first-price auctions,'' \href{http://dx.doi.org/10.1016/J.GEB.2005.01.006}{{\em Games and Economic Behavior} {\bfseries 55} (4, 2006) 131--151}.

\bibitem{Bergemann2017}
D.~Bergemann, B.~Brooks, and S.~Morris, ``First-price auctions with general information structures: Implications for bidding and revenue,'' \href{http://dx.doi.org/10.3982/ECTA13958}{{\em Econometrica} {\bfseries 85} (1, 2017) 107--143}. \url{https://onlinelibrary.wiley.com/doi/full/10.3982/ECTA13958 https://onlinelibrary.wiley.com/doi/abs/10.3982/ECTA13958 https://onlinelibrary.wiley.com/doi/10.3982/ECTA13958}.

\bibitem{Chassin2010}
Y.~Chassin and M.~Joanis, ``Détecter et prévenir la collusion dans les marchés publics en construction: Meilleures pratiques favorisant la concurrence,'' 2010.
\newblock \url{https://www.researchgate.net/publication/254398401%0ADétecter}.

\bibitem{Harrington2008}
J.~E.~H. Jr., ``Detecting cartels,'' {\em Handbook of Antitrust Economics} (2008) 213--258.

\bibitem{McAfee1992}
R.~P. Mcafee and J.~Mcmillan, ``Bidding rings,'' 1992.
\newblock \url{https://www.jstor.org/stable/2117323}.

\bibitem{Tas2017}
B.~K.~O. Tas, ``Collusion detection in public procurement with limited information,'' \href{http://dx.doi.org/10.2139/ssrn.2929222}{{\em SSRN Electronic Journal} (2017) }.

\bibitem{Chassang2019}
S.~Chassang and J.~Ortner, ``Collusion in auctions with constrained bids: Theory and evidence from public procurement,'' \href{http://dx.doi.org/10.1086/701812}{{\em Journal of Political Economy} {\bfseries 127} (2019) 2269--2300}.

\bibitem{Calvano2020b}
E.~Calvano, G.~Calzolari, V.~Denicolò, J.~E. Harrington, and S.~Pastorello, ``Protecting consumers from collusive prices due to ai,'' \href{http://dx.doi.org/10.1126/SCIENCE.ABE3796}{{\em Science} {\bfseries 370} (11, 2020) 1040--1042}. \url{https://www.science.org/doi/10.1126/science.abe3796}.

\bibitem{Axelrod1981}
R.~Axelrod, ``The emergence of cooperation among egoists,'' \href{http://dx.doi.org/10.2307/1961366}{{\em American Political Science Review} {\bfseries 75} (6, 1981) 306--318}. \url{https://www.cambridge.org/core/journals/american-political-science-review/article/abs/emergence-of-cooperation-among-egoists/EEAB3C6460F5BC63A4DE813E1B010B21}.

\bibitem{Sadoune2024}
I.~Sadoune, M.~Joanis, and A.~Lodi, ``Implementing a hierarchical deep learning approach for simulating multilevel auction data,'' \href{http://dx.doi.org/10.1007/S10614-024-10622-4}{{\em Computational Economics 2024} (5, 2024) 1--28}. \url{https://link.springer.com/article/10.1007/s10614-024-10622-4}.

\bibitem{Graf2023}
C.~Graf, V.~Zobernig, J.~Schmidt, and C.~Klöckl, ``Computational performance of deep reinforcement learning to find nash equilibria,'' \href{http://dx.doi.org/10.1007/s10614-022-10351-6}{{\em Computational Economics} (2023) }.

\bibitem{Bichler2023}
M.~Bichler, M.~Fichtl, and M.~Oberlechner, ``Computing bayes nash equilibrium strategies in auction games via simultaneous online dual averaging,'' \href{http://dx.doi.org/10.1145/3580507.3597713}{{\em EC 2023 - Proceedings of the 24th ACM Conference on Economics and Computation} (8, 2022) 294}. \url{https://arxiv.org/abs/2208.02036v2}.

\bibitem{Brero2017}
G.~Brero, B.~Lubin, and S.~Seuken, ``Probably approximately efficient combinatorial auctions via machine learning,'' \href{http://dx.doi.org/10.1609/AAAI.V31I1.10624}{{\em Proceedings of the AAAI Conference on Artificial Intelligence} {\bfseries 31} (2, 2017) 397--405}. \url{https://ojs.aaai.org/index.php/AAAI/article/view/10624}.

\bibitem{Beyeler2021}
M.~Beyeler, G.~Brero, B.~Lubin, and S.~Seuken, ``imlca: Machine learning-powered iterative combinatorial auctions with interval bidding,''. \url{https://dl.acm.org/doi/10.1145/3465456.3467535}.

\bibitem{Hummel2016}
P.~Hummel and R.~P. McAfee, ``Machine learning in an auction environment,'' \href{http://dx.doi.org/10.1145/2566486.2567974}{{\em WWW 2014 - Proceedings of the 23rd International Conference on World Wide Web} {\bfseries 17} (2014) 7--17}.

\bibitem{Nguyen2020}
K.~T. Nguyen, ``A bandit learning algorithm and applications to auction design,'' {\em Advances in Neural Information Processing Systems} {\bfseries 33} (2020) 12070--12079.

\bibitem{Gao2021}
G.~Gao, H.~Huang, M.~Xiao, J.~Wu, Y.~E. Sun, and S.~Zhang, ``Auction-based combinatorial multi-armed bandit mechanisms with strategic arms,'' \href{http://dx.doi.org/10.1109/INFOCOM42981.2021.9488765}{{\em Proceedings - IEEE INFOCOM} {\bfseries 2021-May} (5, 2021) }.

\bibitem{Basu2022}
S.~Basu and A.~Sankararaman, ``Double auctions with two-sided bandit feedback,''. \url{https://arxiv.org/abs/2208.06536v1}.

\bibitem{Goktas2022}
D.~Goktas, S.~Zhao, and A.~Greenwald, ``Zero-sum stochastic stackelberg games,'' {\em Advances in Neural Information Processing Systems} {\bfseries 35} (12, 2022) 11658--11672.

\bibitem{Fudenberg1991}
D.~Fudenberg and J.~Tirole, {\em Game Theory}.
\newblock MIT Press, 1991.

\bibitem{Brero2022}
G.~Brero, A.~Eden, D.~Chakrabarti, M.~Gerstgrasser, A.~Greenwald, V.~Li, and D.~C. Parkes, ``Stackelberg pomdp: A reinforcement learning approach for economic design,''. \url{https://arxiv.org/abs/2210.03852v3}.

\bibitem{Brero2021a}
G.~Brero, A.~Eden, M.~Gerstgrasser, D.~Parkes, and D.~Rheingans-Yoo, ``Reinforcement learning of sequential price mechanisms,'' \href{http://dx.doi.org/10.1609/AAAI.V35I6.16659}{{\em Proceedings of the AAAI Conference on Artificial Intelligence} {\bfseries 35} (5, 2021) 5219--5227}. \url{https://ojs.aaai.org/index.php/AAAI/article/view/16659}.

\bibitem{Brero2021b}
G.~Brero, D.~Chakrabarti, A.~Eden, M.~Gerstgrasser, V.~Li, and D.~Parkes, ``Learning stackelberg equilibria in sequential price mechanisms,''
\newblock Proc. ICML Workshop for Reinforcement Learning Theory, 2021.

\bibitem{Skrzypacz2004}
A.~Skrzypacz and H.~Hopenhayn, ``Tacit collusion in repeated auctions,'' \href{http://dx.doi.org/10.1016/S0022-0531(03)00128-5}{{\em Journal of Economic Theory} {\bfseries 114} (1, 2004) 153--169}.

\bibitem{Waltman2008}
L.~Waltman and U.~Kaymak, ``Q-learning agents in a cournot oligopoly model,'' \href{http://dx.doi.org/10.1016/j.jedc.2008.01.003}{{\em Journal of Economic Dynamics and Control} {\bfseries 32} (10, 2008) 3275--3293}.

\bibitem{Kimbrough2005}
S.~O. Kimbrough, M.~Lu, and F.~Murphy, {\em Learning and Tacit Collusion by Artificial Agents in Cournot Duopoly Games}, \href{http://dx.doi.org/10.1007/3-540-26989-4_19}{pp.~477--492}.
\newblock Springer-Verlag, 12, 2005.

\bibitem{Gupta2017}
J.~K. Gupta, M.~Egorov, and M.~Kochenderfer, \href{http://dx.doi.org/10.1007/978-3-319-71682-4}{``Cooperative multi-agent control using deep reinforcement learning,''} pp.~66--83.
\newblock 2017.

\bibitem{Foerster2018a}
J.~Foerster, R.~Y. Chen, M.~Al-Shedivat, S.~Whiteson, P.~Abbeel, and I.~Mordatch, ``Learning with opponent-learning awareness,'' {\em Proceedings of the International Joint Conference on Autonomous Agents and Multiagent Systems, AAMAS} {\bfseries 1} (2018) 122--130.

\bibitem{Harper2017}
M.~Harper, V.~Knight, M.~Jones, G.~Koutsovoulos, N.~E. Glynatsi, and O.~Campbell, ``Reinforcement learning produces dominant strategies for the iterated prisoner’s dilemma,'' \href{http://dx.doi.org/10.1371/JOURNAL.PONE.0188046}{{\em PLOS ONE} {\bfseries 12} (12, 2017) e0188046}. \url{https://journals.plos.org/plosone/article?id=10.1371/journal.pone.0188046}.

\bibitem{Vassiliades2011}
V.~Vassiliades, A.~Cleanthous, and C.~Christodoulou, ``Multiagent reinforcement learning: Spiking and nonspiking agents in the iterated prisoner's dilemma,'' \href{http://dx.doi.org/10.1109/TNN.2011.2111384}{{\em IEEE Transactions on Neural Networks} {\bfseries 22} (4, 2011) 639--653}.

\bibitem{Li2019b}
K.~Li and D.~Hao, ``Cooperation enforcement and collusion resistance in repeated public goods games,'' \href{http://dx.doi.org/10.1609/aaai.v33i01.33012085}{{\em Proceedings of the AAAI Conference on Artificial Intelligence} {\bfseries 33} (7, 2019) 2085--2092}.

\bibitem{Chaudhuri2021}
R.~Chaudhuri, K.~Mukherjee, R.~Narayanam, and R.~D. Vallam, ``Collaborative reinforcement learning framework to model evolution of cooperation in sequential social dilemmas,'' \href{http://dx.doi.org/10.1007/978-3-030-75762-5_2/COVER}{{\em Lecture Notes in Computer Science (including subseries Lecture Notes in Artificial Intelligence and Lecture Notes in Bioinformatics)} {\bfseries 12712 LNAI} (2021) 15--26}. \url{https://link.springer.com/chapter/10.1007/978-3-030-75762-5_2}.

\bibitem{Baye2008}
M.~R. Baye and D.~Kovenock, ``Bertrand competition,'' \href{http://dx.doi.org/10.1057/978-1-349-95121-5_2462-1}{{\em The New Palgrave Dictionary of Economics} (2008) 1--7}. \url{https://www.semanticscholar.org/paper/783221d378c9fe1ff3839b3ba5bf33c10e258705}.

\bibitem{Tirole2003}
J.~Tirole, ``The economics of tacit collusion,'' {\em Development} (2003) .

\bibitem{Zhou2023}
Y.~Zhou, S.~Liu, Y.~Qing, K.~Chen, T.~Zheng, Y.~Huang, J.~Song, and M.~Song, ``Is centralized training with decentralized execution framework centralized enough for marl?''. \url{https://arxiv.org/abs/2305.17352v1}.

\bibitem{Sukhbaatar2016}
S.~Sukhbaatar, A.~Szlam, and R.~Fergus, ``Learning multiagent communication with backpropagation,'' {\em Advances in Neural Information Processing Systems} (2016) 2252--2260.

\bibitem{Zhang2021}
K.~Zhang, Z.~Yang, and T.~Başar, ``Multi-agent reinforcement learning: A selective overview of theories and algorithms,'' \href{http://dx.doi.org/10.1007/978-3-030-60990-0_12}{{\em Studies in Systems, Decision and Control} {\bfseries 325} (2021) 321--384}.

\bibitem{Lattimore2020}
T.~Lattimore and C.~Szepesva\'ari, {\em Bandit algorithms}.
\newblock Cambridge University Press, 7, 2020.
\newblock \url{https://books.google.com/books/about/Bandit_Algorithms.htmlid=bbjpDwAAQBAJ}.

\bibitem{Vermorel2005}
J.~Vermorel and M.~Mohri, ``Multi-armed bandit algorithms and empirical evaluation,''.

\bibitem{Gupta2011}
N.~Gupta, O.-C. Granmo, and A.~Agrawala, ``Thompson sampling for dynamic multi-armed bandits,'' \href{http://dx.doi.org/10.1109/ICMLA.2011.144}{{\em 2011 10th International Conference on Machine Learning and Applications and Workshops} {\bfseries 1} (2011) 484--489}.

\bibitem{Auer2002}
P.~Auer, N.~Cesa-Bianchi, and P.~Fischer, ``Finite-time analysis of the multiarmed bandit problem,'' \href{http://dx.doi.org/10.1023/A:1013689704352/METRICS}{{\em Machine Learning} {\bfseries 47} (5, 2002) 235--256}. \url{https://link.springer.com/article/10.1023/A:1013689704352}.

\bibitem{Sutton2018}
R.~S. Sutton and A.~G. Barto, {\em Reinforcement Learning: An Introduction, second edition}.
\newblock The MIT Press, 2018.

\bibitem{Hasselt2016}
H.~V. Hasselt, A.~Guez, and D.~Silver, ``Deep reinforcement learning with double q-learning,'' pp.~2094--2100.
\newblock 2016.
\newblock \url{www.aaai.org}.

\bibitem{Schulman2017}
J.~Schulman, F.~Wolski, P.~Dhariwal, A.~Radford, and O.~Klimov, ``Proximal policy optimization algorithms,'' {\em arXiv.org} (2017) .

\bibitem{Lin2022}
B.~Lin, D.~Bouneffouf, and G.~Cecchi, ``Online learning in iterated prisoner’s dilemma to mimic human behavior,'' \href{http://dx.doi.org/10.1007/978-3-031-20868-3_10}{{\em Lecture Notes in Computer Science (including subseries Lecture Notes in Artificial Intelligence and Lecture Notes in Bioinformatics)} {\bfseries 13631 LNCS} (2022) 134--147}. \url{https://link.springer.com/chapter/10.1007/978-3-031-20868-3_10}.

\bibitem{Marty2024}
F.~Marty and T.~Warin, ``Deciphering algorithmic collusion: Insights from bandit algorithms and implications for antitrust enforcement,'' \href{http://dx.doi.org/10.1016/J.JECT.2024.10.001}{{\em Journal of Economy and Technology} (10, 2024) }. \url{https://linkinghub.elsevier.com/retrieve/pii/S2949948824000519}.

\bibitem{Miklos-Thal2019}
J.~Miklós-Thal and C.~Tucker, ``Collusion by algorithm: Does better demand prediction facilitate coordination between sellers?'' \href{http://dx.doi.org/10.1287/mnsc.2019.3287}{{\em Management Science} {\bfseries 65} (4, 2019) 1552--1561}.

\bibitem{Ezrachi2016a}
A.~Ezrachi and M.~E. Stucke, ``How pricing bots could form cartels and make things more expensive,'' {\em Havard Business Review} (2016) . \url{https://hbr.org/2016/10/how-pricing-bots-could-form-cartels-and-make-things-more-expensive?autocomplete=true}.

\bibitem{Ezrachi2016b}
A.~Ezrachi and S.~M. E., ``Virtual competition: The promise and perils of the algorithm-driven economy,'' {\em Havard Press University} (2016) .

\bibitem{Green2015}
E.~J. Green, R.~C. Marshall, and L.~M. Marx, ``Tacit collusion in oligopoly,'' {\em The Oxford Handbook of International Antitrust Economics} {\bfseries 2} (2015) 1--25. \url{www.oxfordhandbooks.com}.

\bibitem{Harrington2016}
J.~E. Harrington, R.~H. Gonzalez, and P.~Kujal, ``The relative efficacy of price announcements and express communication for collusion: Experimental findings,'' \href{http://dx.doi.org/10.1016/J.JEBO.2016.05.014}{{\em Journal of Economic Behavior and Organization} {\bfseries 128} (8, 2016) 251--264}.

\bibitem{Fonseca2012}
M.~A. Fonseca and H.-T. Normann, ``Explicit vs. tacit collusion-the impact of communication in oligopoly experiments,'' {\em European Economic Review} {\bfseries 56} (2012) 1759--1772.

\bibitem{Greenwald2024}
M.~P. Wellman, K.~Tuyls, and A.~Greenwald, ``Empirical game-theoretic analysis: A survey,''. \url{https://arxiv.org/abs/2403.04018v1}.

\end{thebibliography}\endgroup

\clearpage

\appendix
\section{Additional Figures}\label{app:AdditionalFigures}

\begin{figure}[!ht]
    \caption{\small Average (across repeats) joint action frequencies over training episodes. Rows (up to down): UCB, $\epsilon$-greedy, TS, D3QN, MAPPO. Columns (left to right): $(n=2, \sigma(\beta)=0.0)$, $(n=2, \sigma(\beta)=0.5)$, $(n=5, \sigma(\beta)=0.0)$, $(n=5, \sigma(\beta)=0.5)$.}
    \centering
    \begin{minipage}[b]{0.23\textwidth}
    \centering
        \includegraphics[width=\textwidth]{png_files/joint_action_frequencies/joint_action_frequencies_ucb_2_0.0_plot.png}
    \end{minipage}
    \begin{minipage}[b]{0.23\textwidth}
        \centering
        \includegraphics[width=\textwidth]{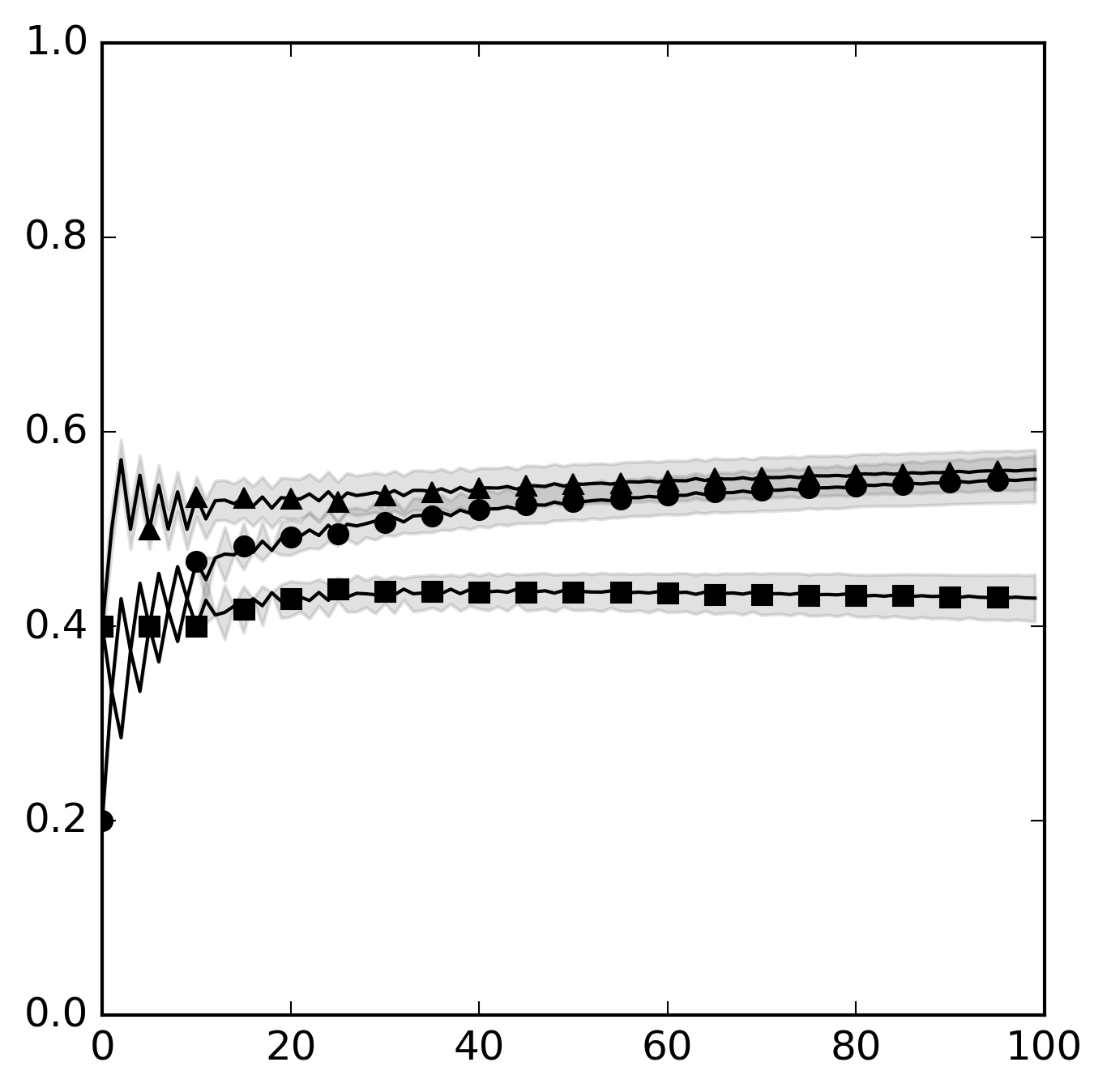}
    \end{minipage}
    \begin{minipage}[b]{0.23\textwidth}
        \centering
        \includegraphics[width=\textwidth]{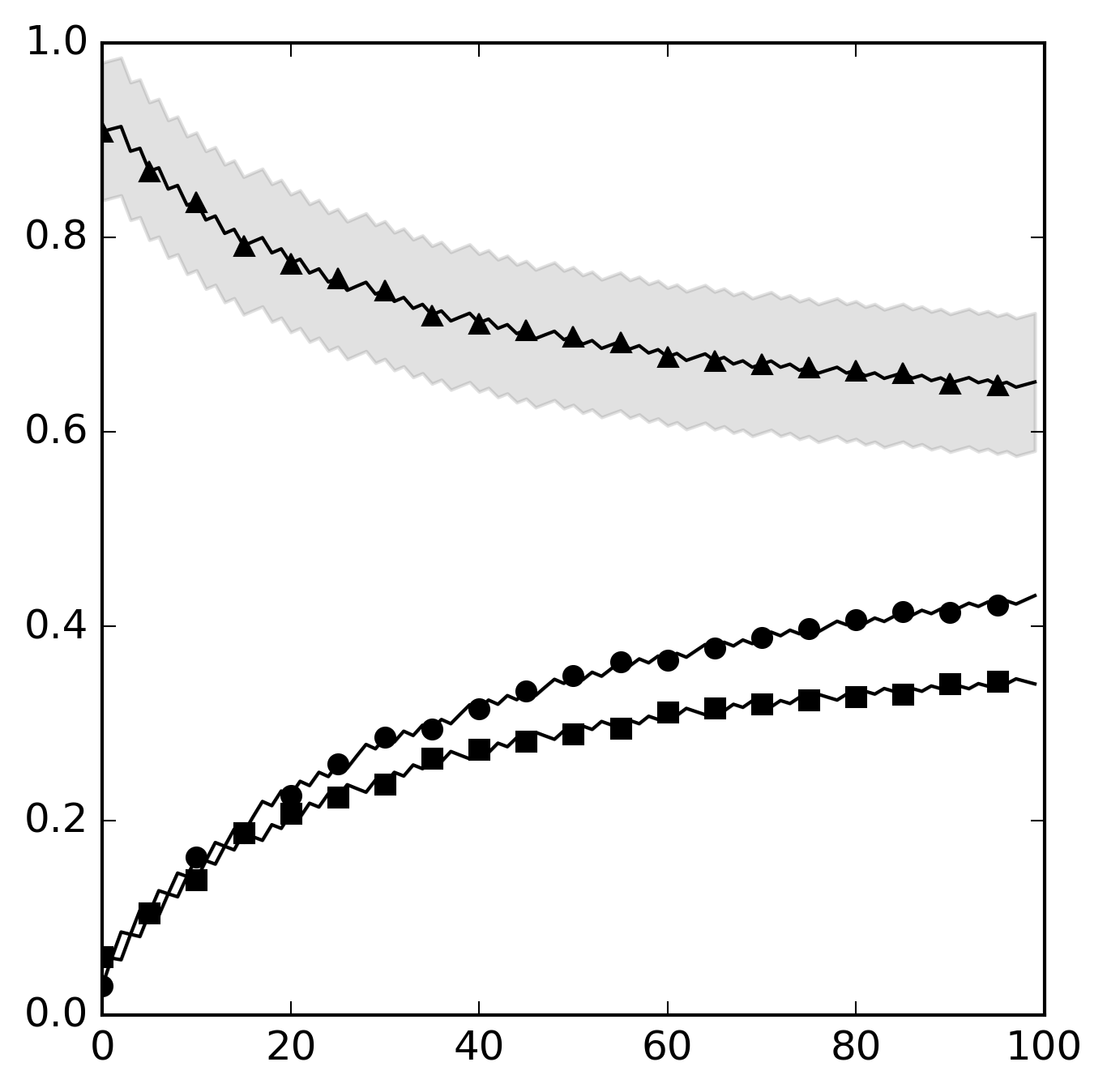}
    \end{minipage}
    \begin{minipage}[b]{0.23\textwidth}
        \centering
        \includegraphics[width=\textwidth]{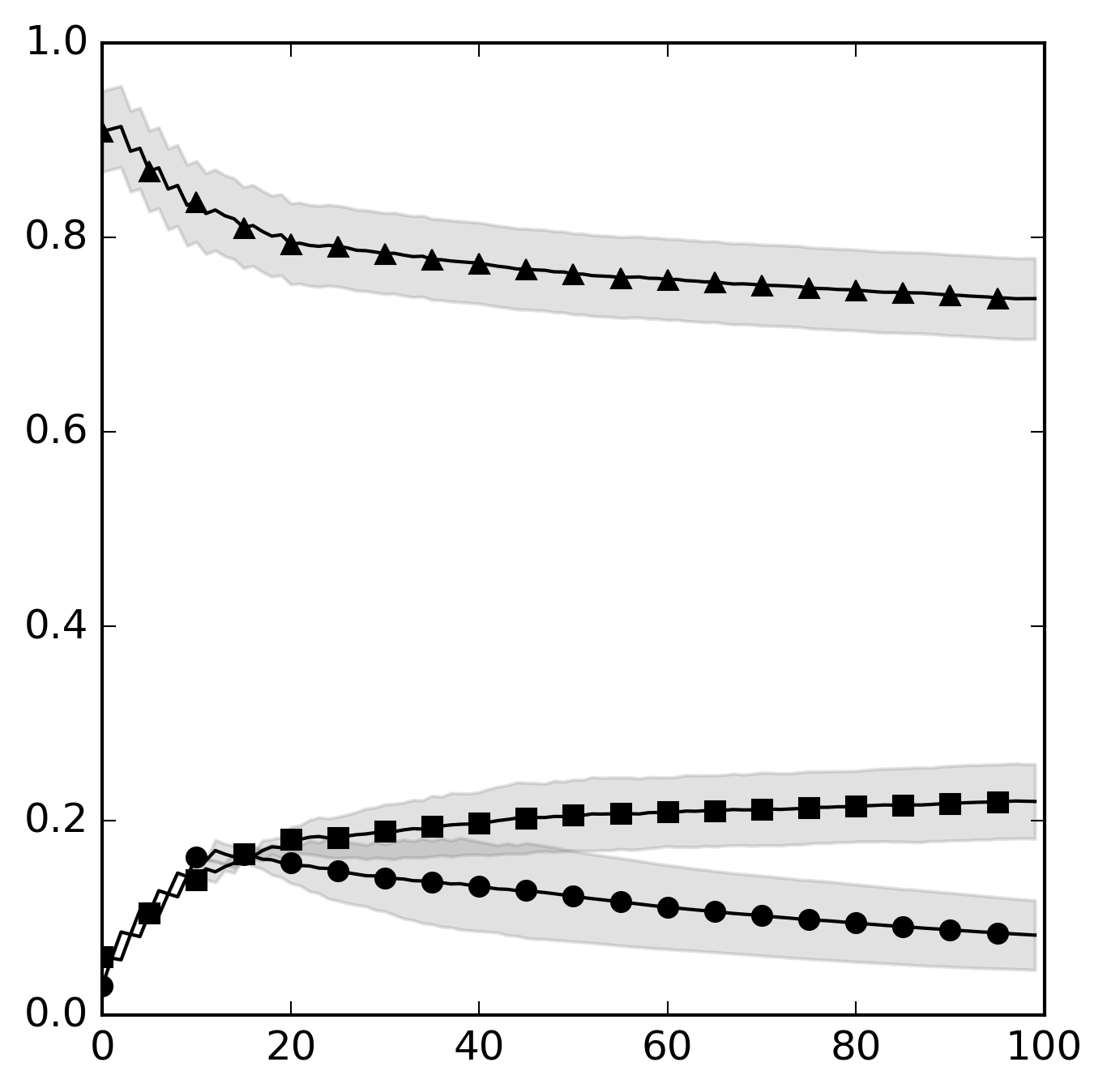}
    \end{minipage}
    \vfill
        \begin{minipage}[b]{0.23\textwidth}
    \centering
        \includegraphics[width=\textwidth]{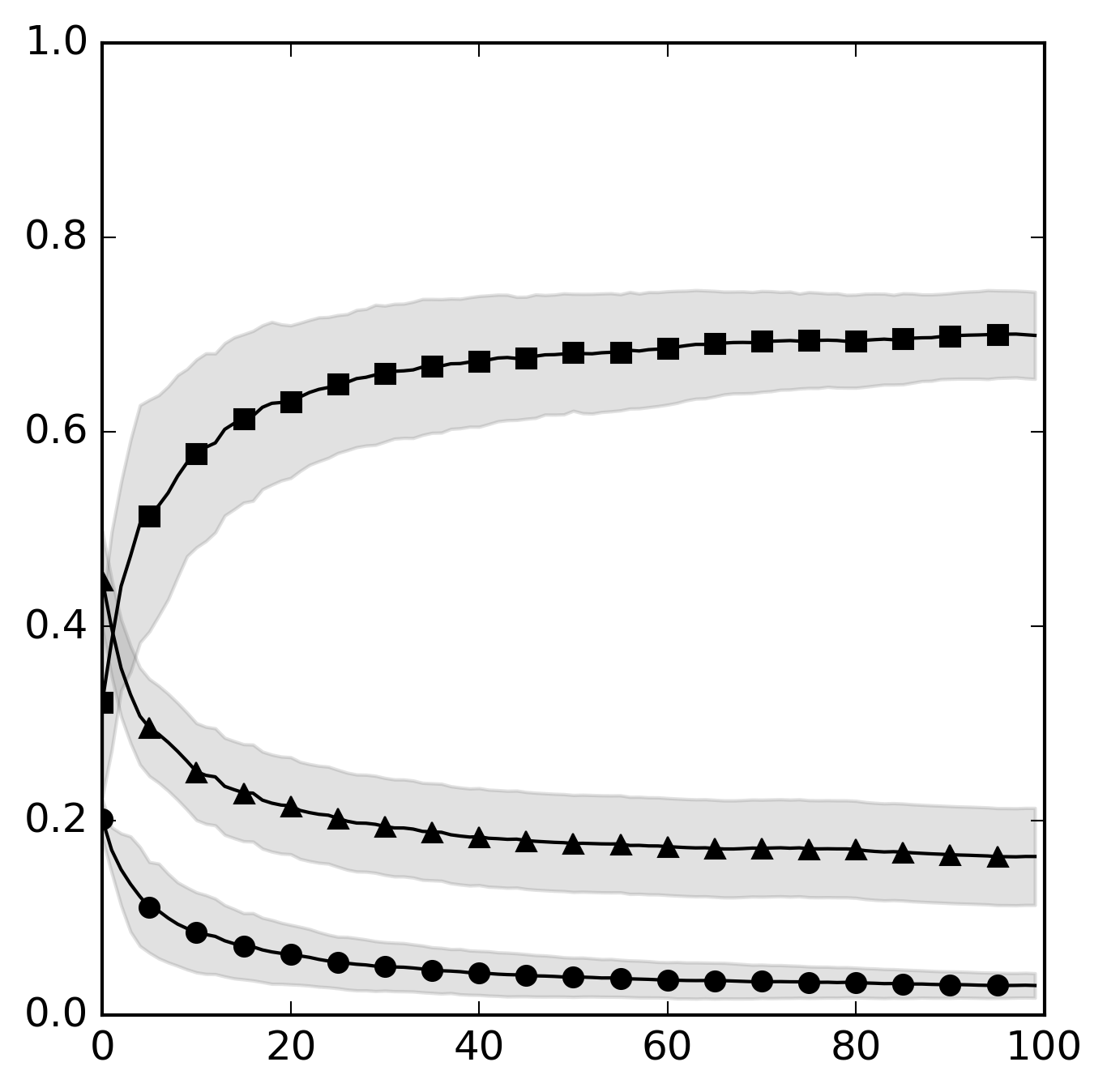}
    \end{minipage}
    \begin{minipage}[b]{0.23\textwidth}
        \centering
        \includegraphics[width=\textwidth]{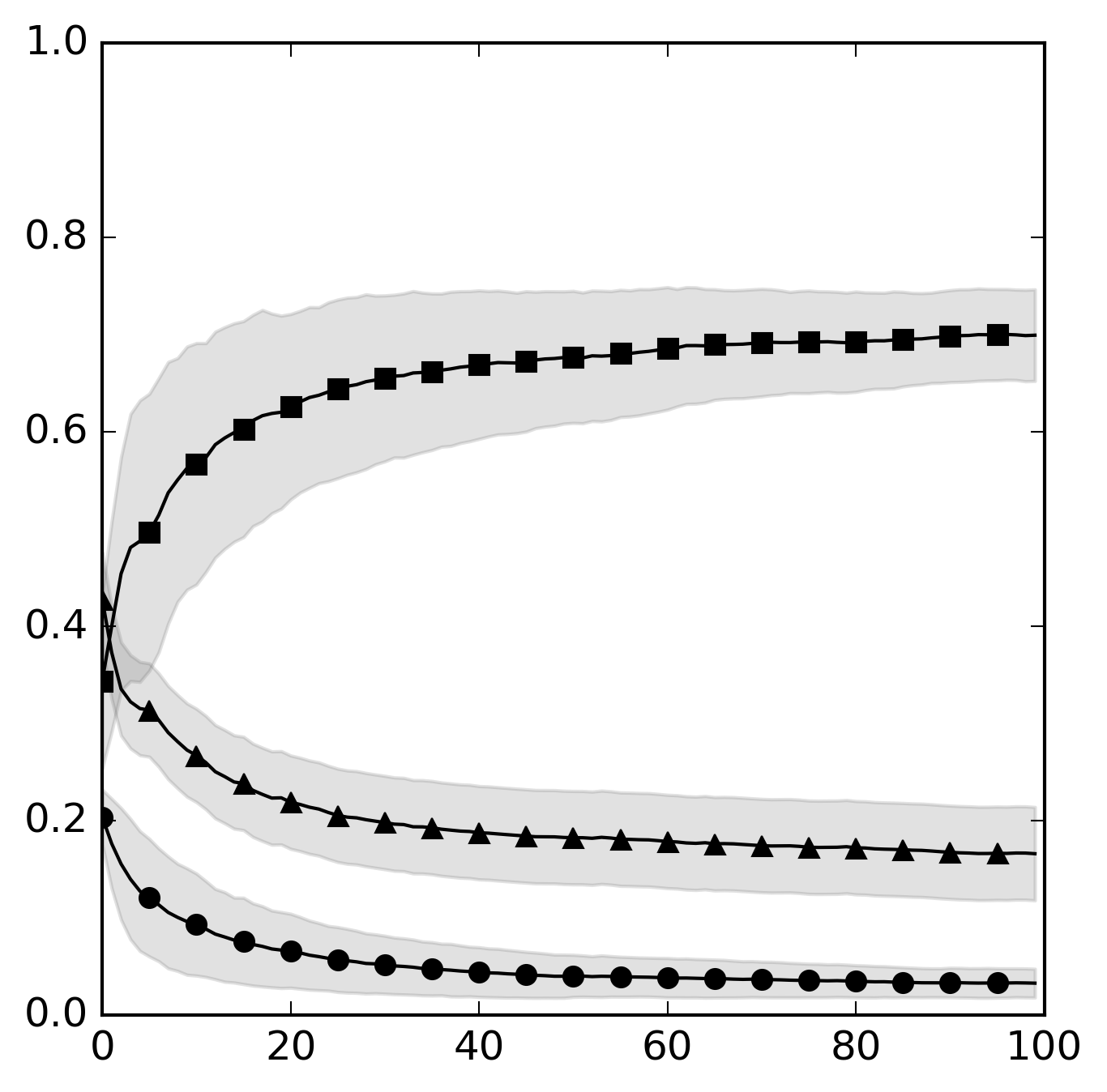}
    \end{minipage}
    \begin{minipage}[b]{0.23\textwidth}
        \centering
        \includegraphics[width=\textwidth]{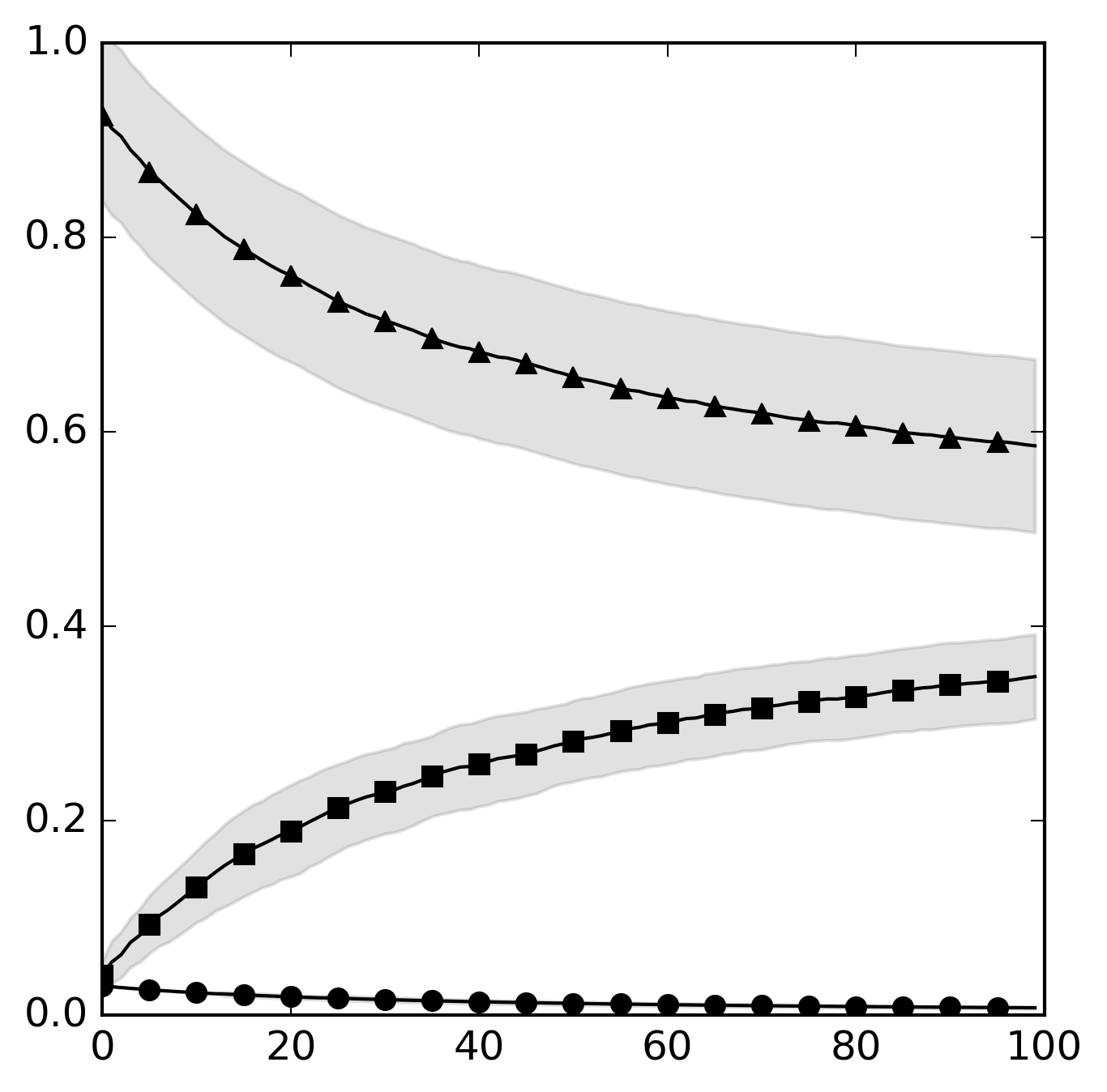}
    \end{minipage}
    \begin{minipage}[b]{0.23\textwidth}
        \centering
        \includegraphics[width=\textwidth]{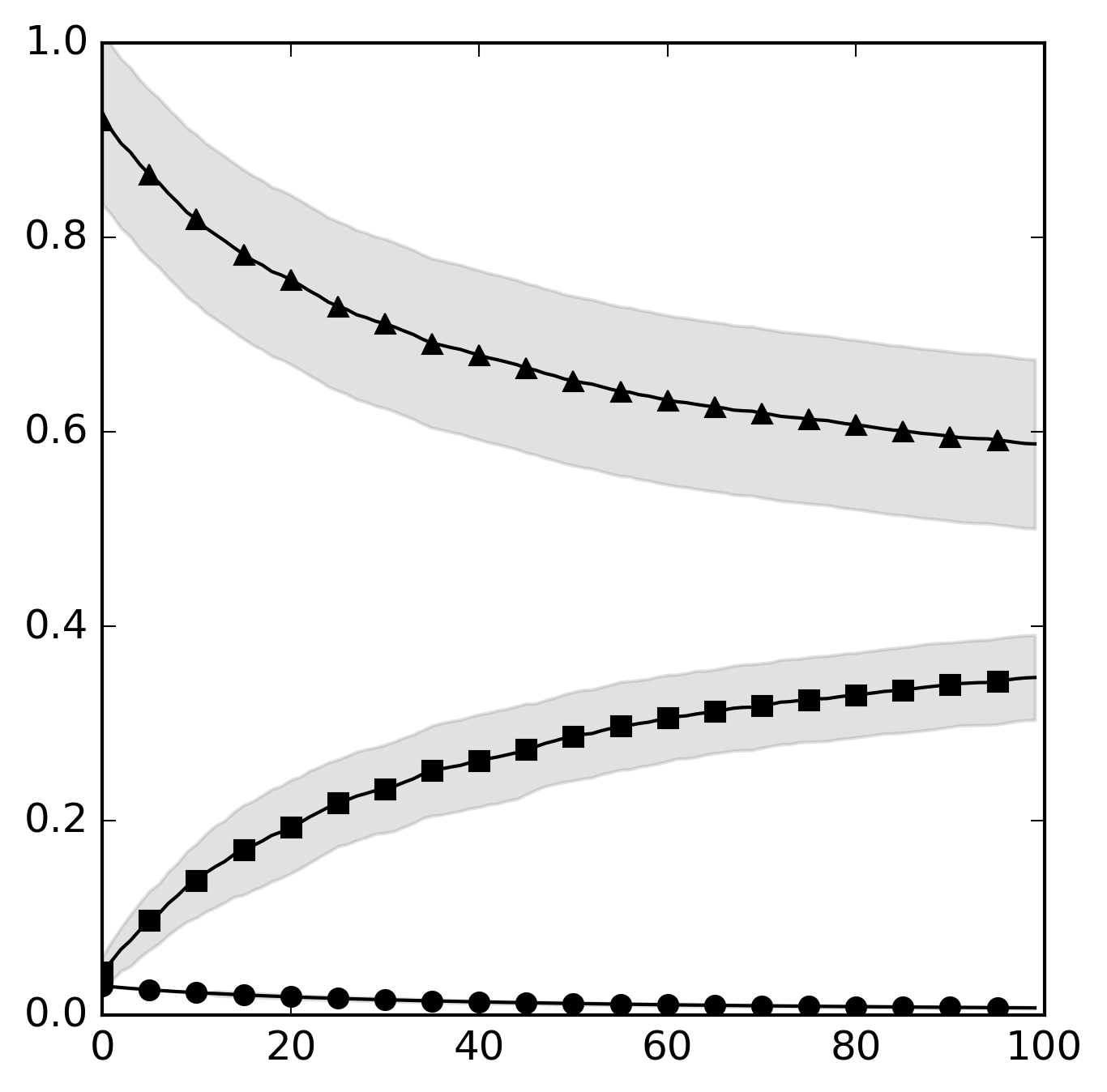}
    \end{minipage}
    \vfill
        \begin{minipage}[b]{0.23\textwidth}
    \centering
        \includegraphics[width=\textwidth]{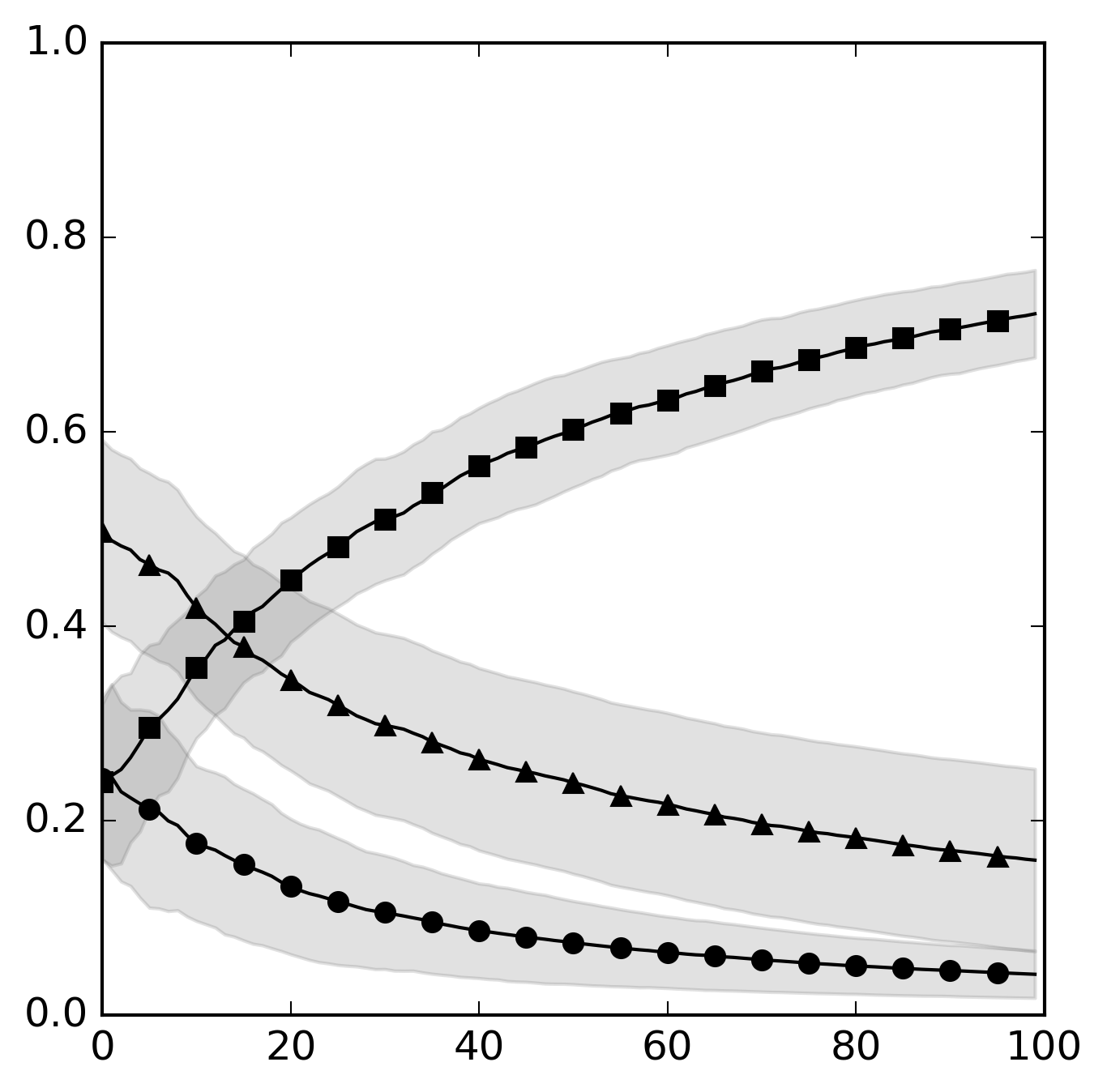}
    \end{minipage}
    \begin{minipage}[b]{0.23\textwidth}
        \centering
        \includegraphics[width=\textwidth]{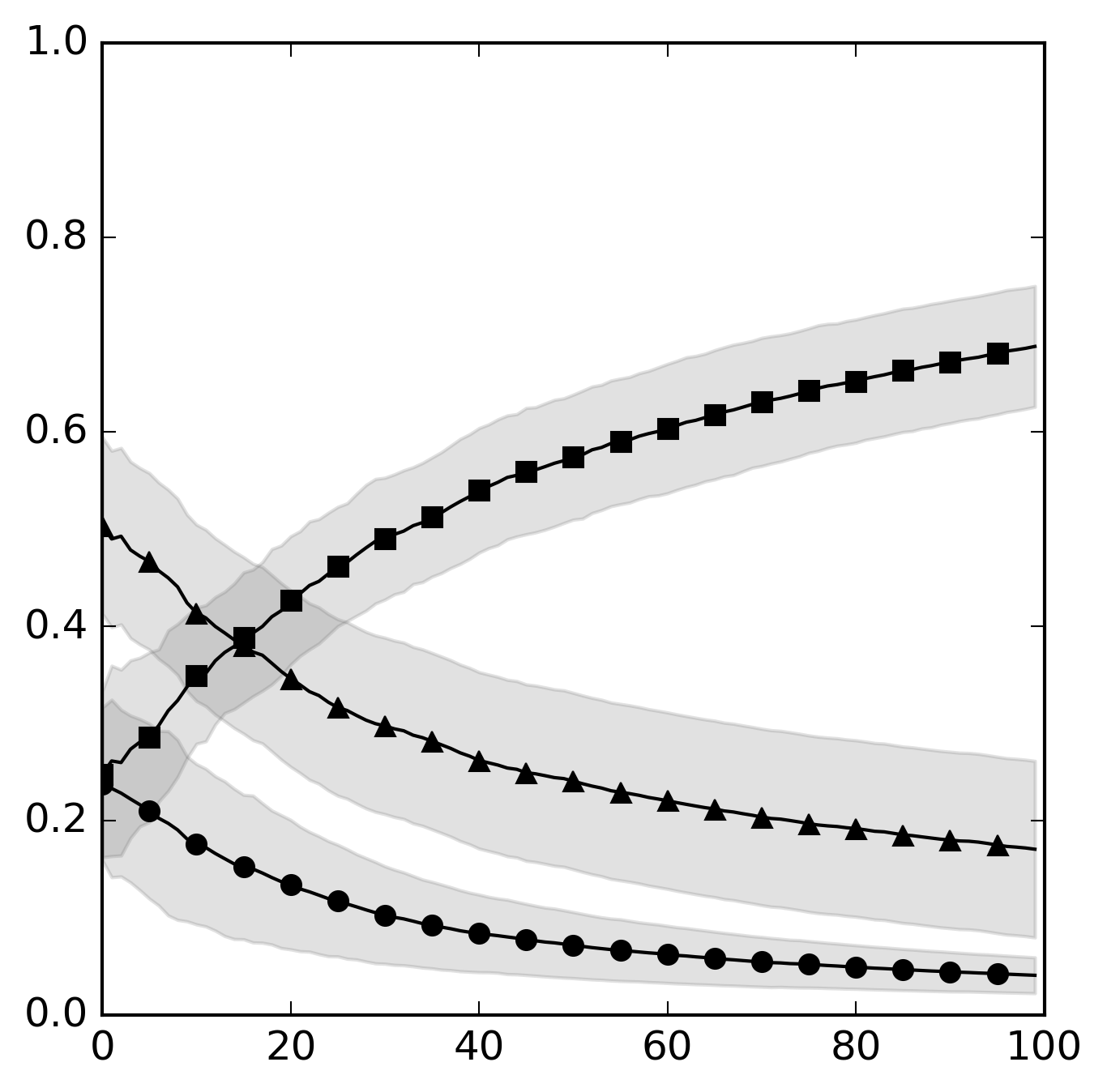}
    \end{minipage}
    \begin{minipage}[b]{0.23\textwidth}
        \centering
        \includegraphics[width=\textwidth]{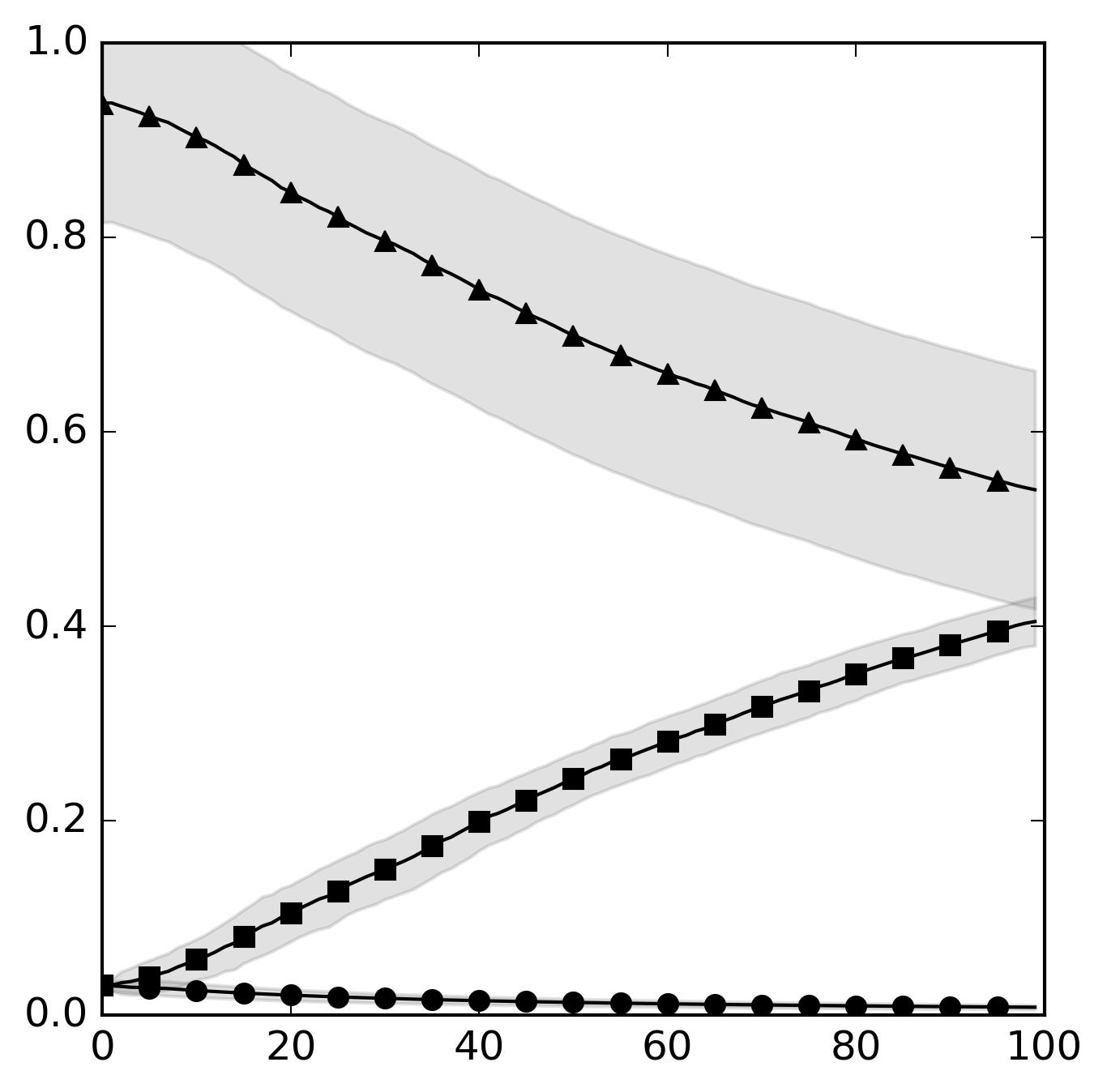}
    \end{minipage}
    \begin{minipage}[b]{0.23\textwidth}
        \centering
        \includegraphics[width=\textwidth]{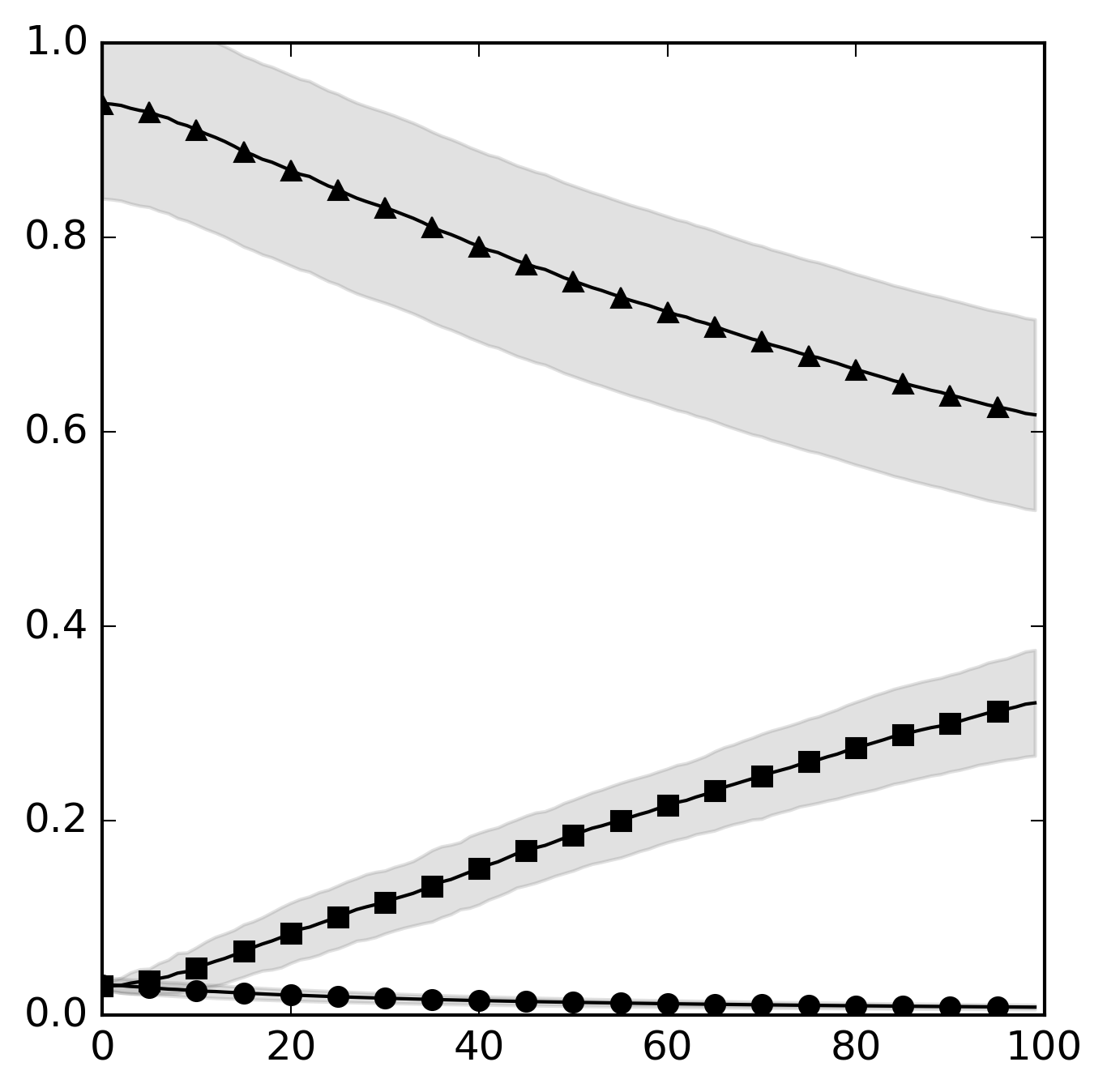}
    \end{minipage}
    \vfill
        \begin{minipage}[b]{0.23\textwidth}
    \centering
        \includegraphics[width=\textwidth]{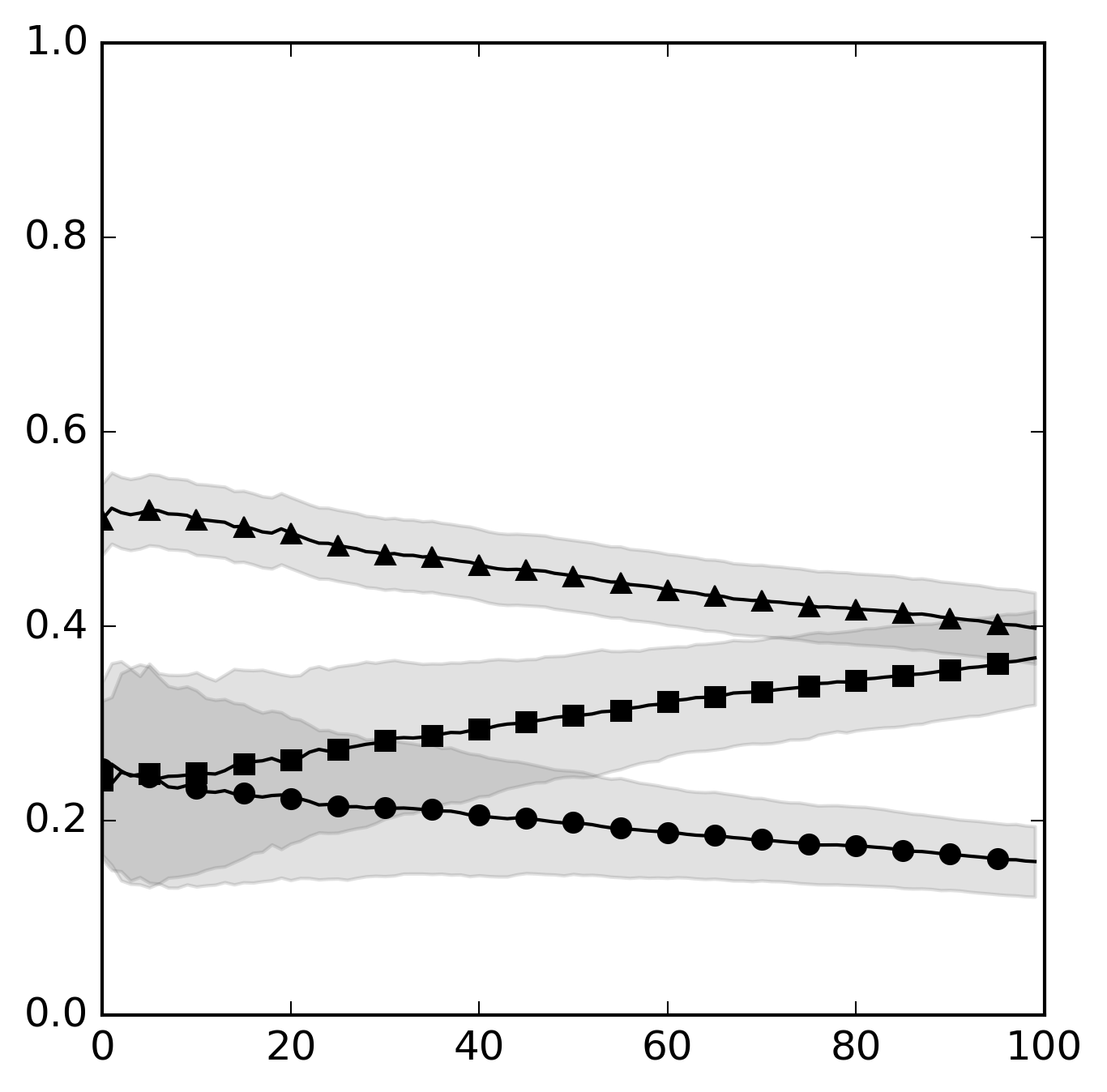}
    \end{minipage}
    \begin{minipage}[b]{0.23\textwidth}
        \centering
        \includegraphics[width=\textwidth]{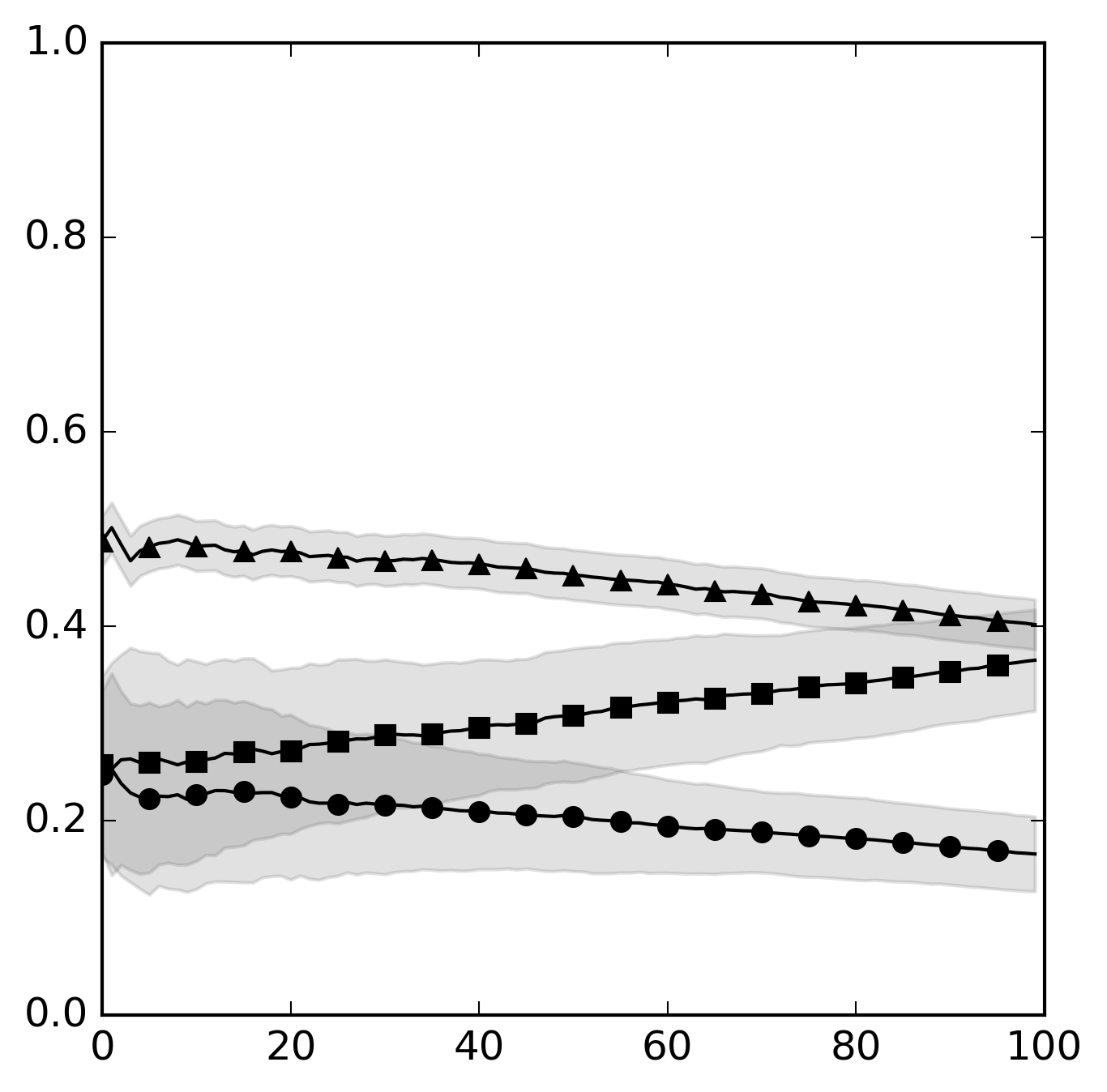}
    \end{minipage}
    \begin{minipage}[b]{0.23\textwidth}
        \centering
        \includegraphics[width=\textwidth]{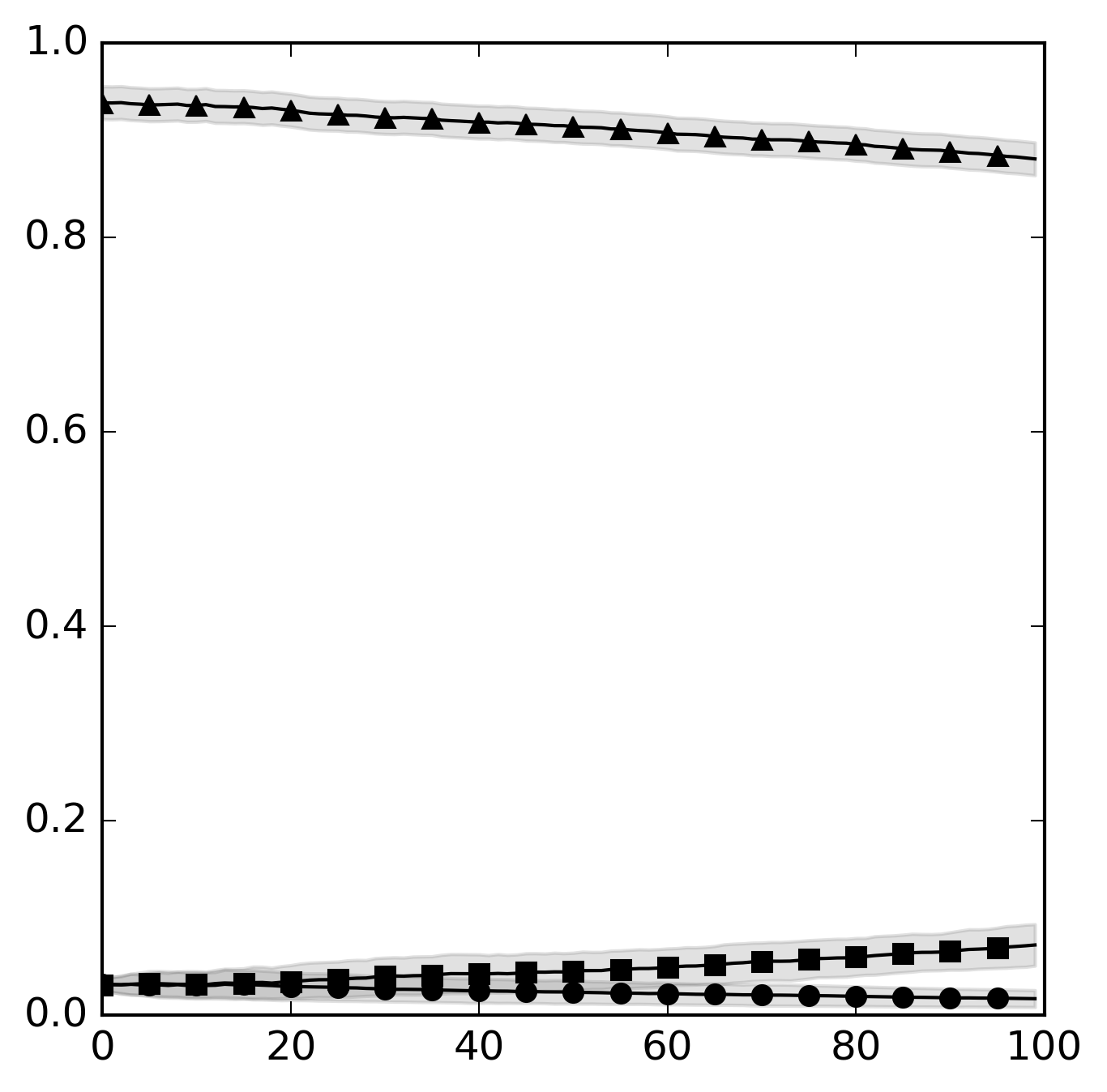}
    \end{minipage}
    \begin{minipage}[b]{0.23\textwidth}
        \centering
        \includegraphics[width=\textwidth]{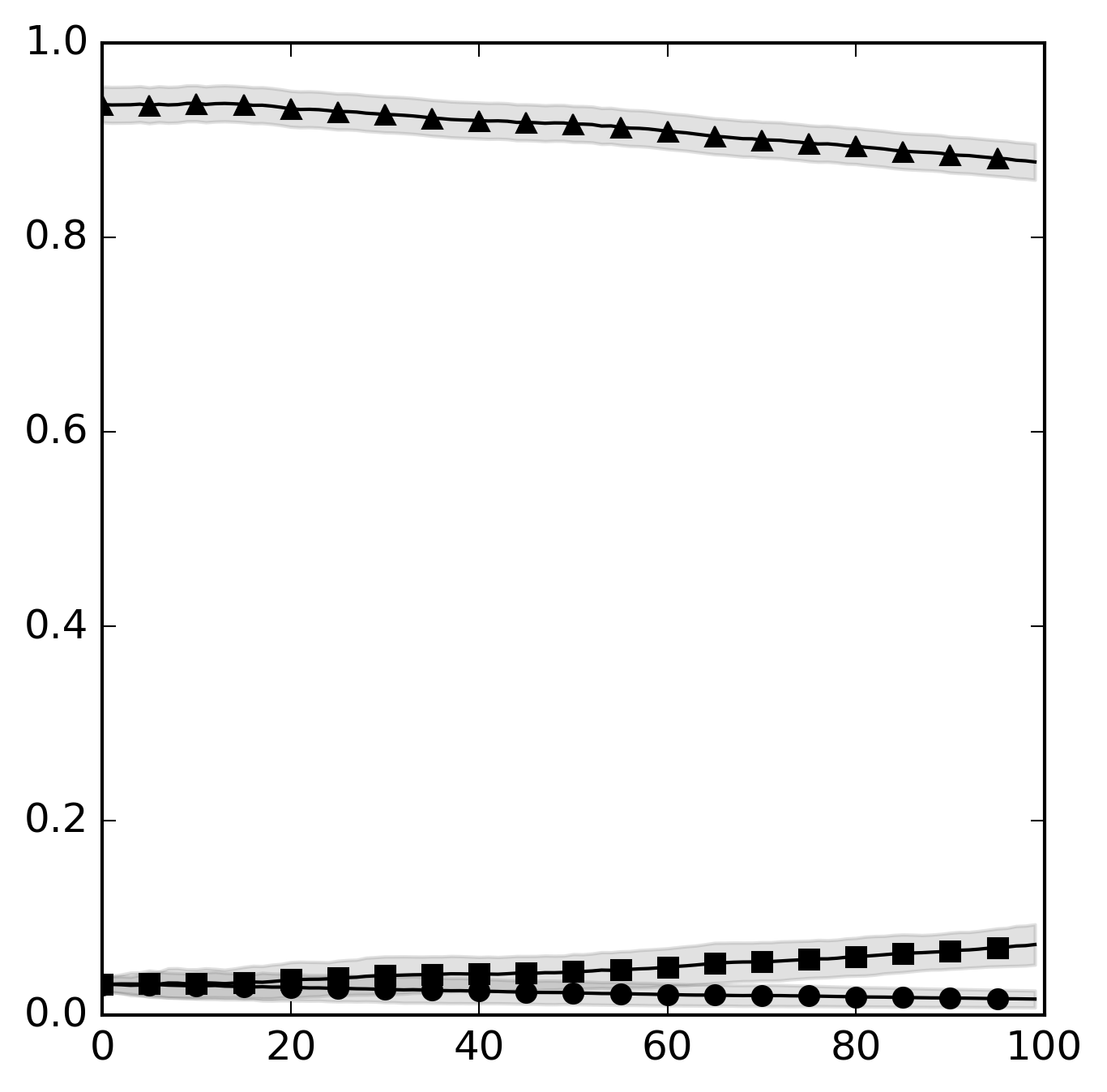}
    \end{minipage}
    \vfill
        \begin{minipage}[b]{0.23\textwidth}
    \centering
        \includegraphics[width=\textwidth]{png_files/joint_action_frequencies/joint_action_frequencies_mappo_2_0.0_plot.png}
    \end{minipage}
    \begin{minipage}[b]{0.23\textwidth}
        \centering
        \includegraphics[width=\textwidth]{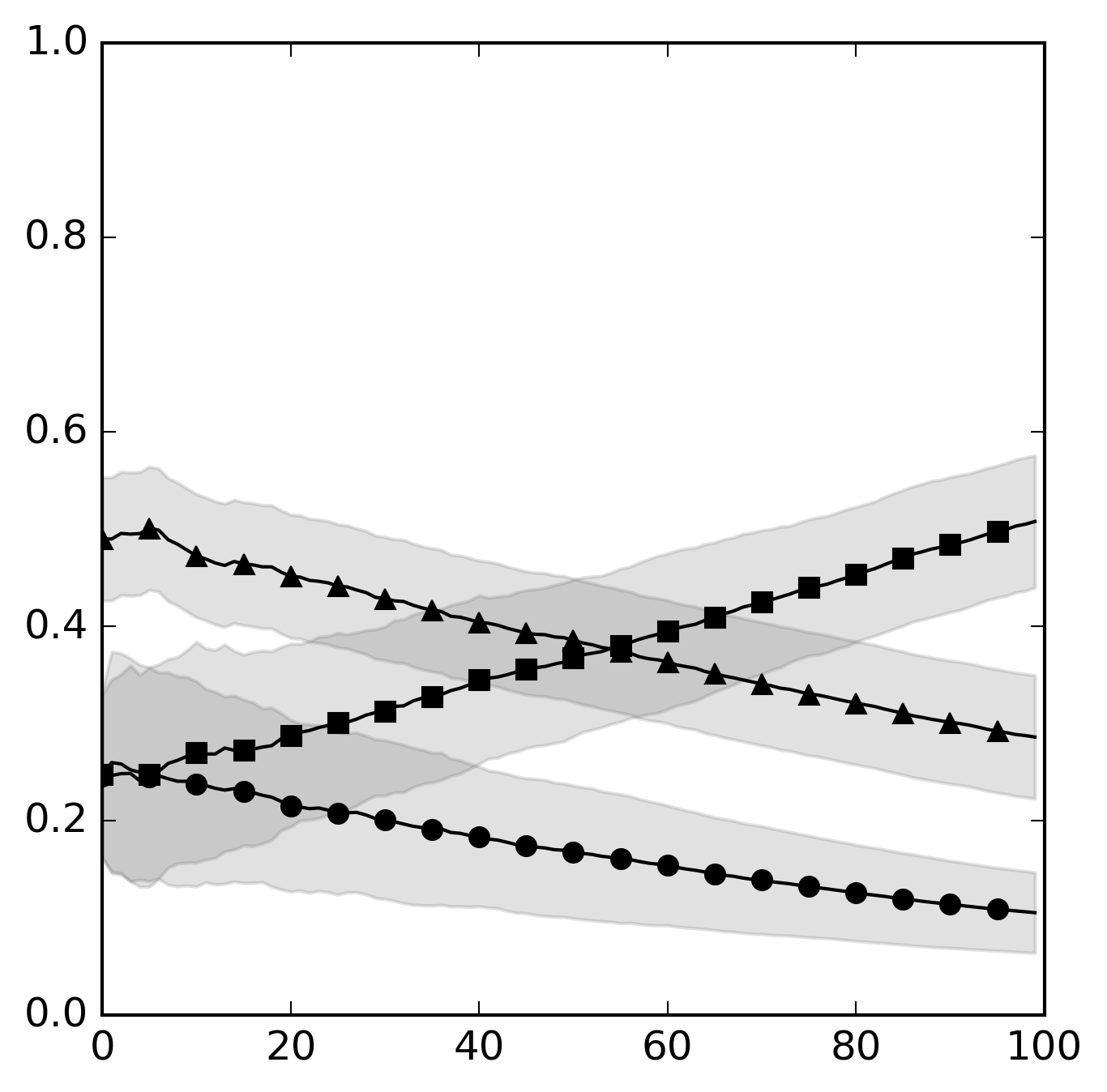}
    \end{minipage}
    \begin{minipage}[b]{0.23\textwidth}
        \centering
        \includegraphics[width=\textwidth]{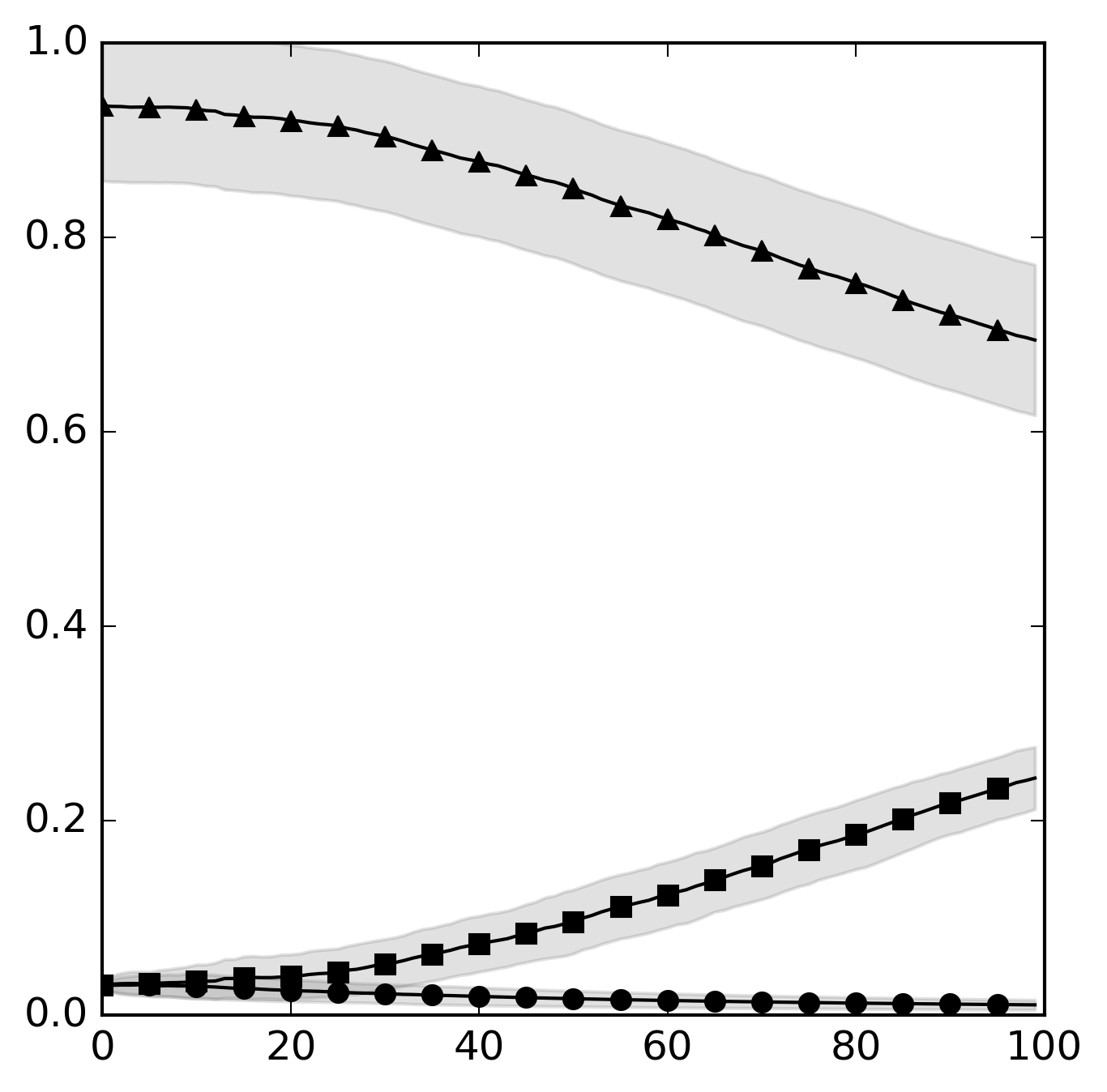}
    \end{minipage}
    \begin{minipage}[b]{0.23\textwidth}
        \centering
        \includegraphics[width=\textwidth]{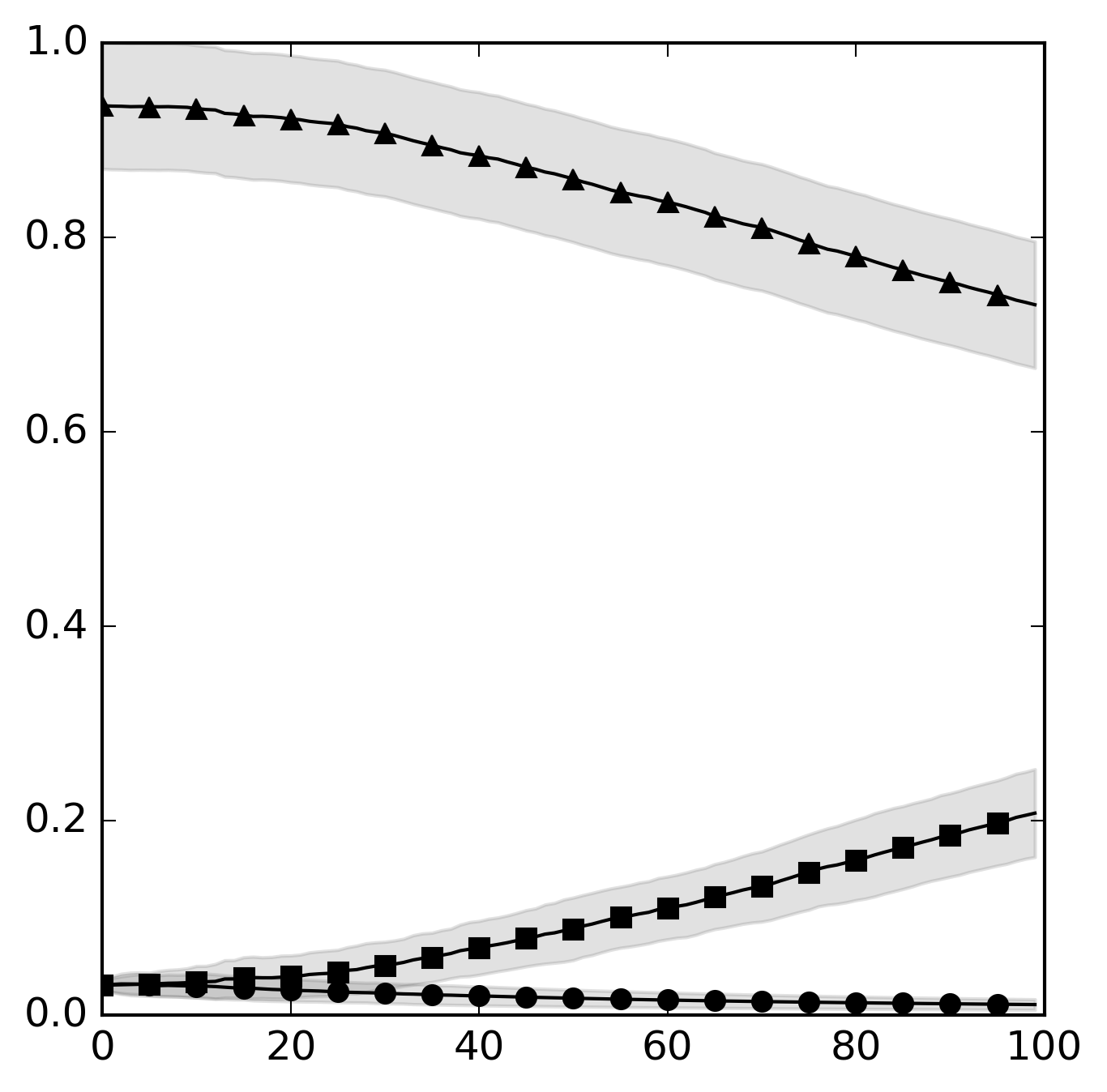}
    \end{minipage}
\label{fig:joint_action_frequencies}
\end{figure}

\begin{figure}[!ht]
    \caption{\small Average (across repeats) training losses. Rows (up to down): D3QN, MAPPO Actor, MAPPO Critic. Columns (left to right): $(n=2, \sigma(\beta)=0.0)$, $(n=2, \sigma(\beta)=0.5)$, $(n=5, \sigma(\beta)=0.0)$, $(n=5, \sigma(\beta)=0.5)$.}
    \centering
    \begin{minipage}[b]{0.23\textwidth}
    \centering
        \includegraphics[width=\textwidth]{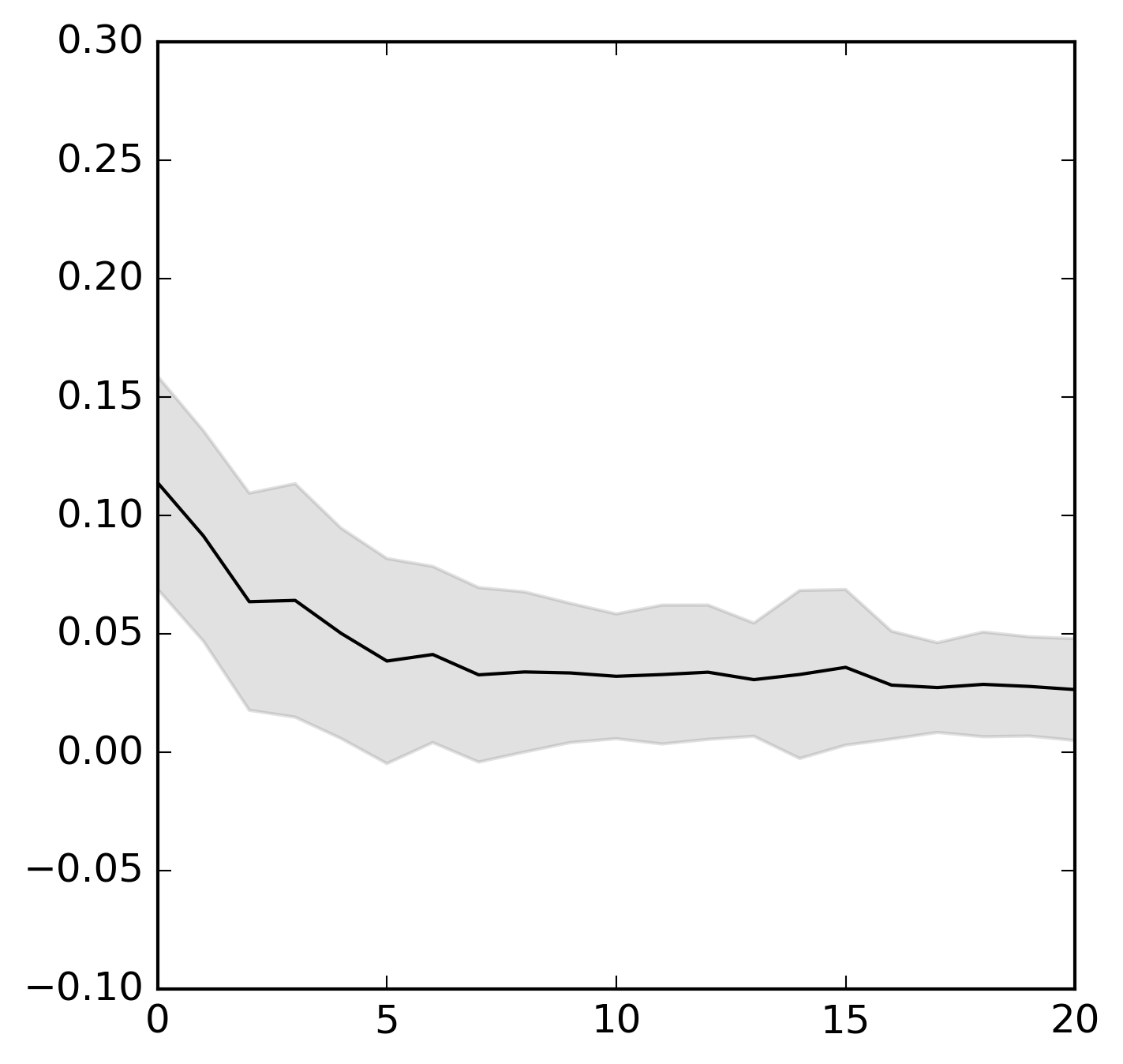}
    \end{minipage}
    \begin{minipage}[b]{0.23\textwidth}
        \centering
        \includegraphics[width=\textwidth]{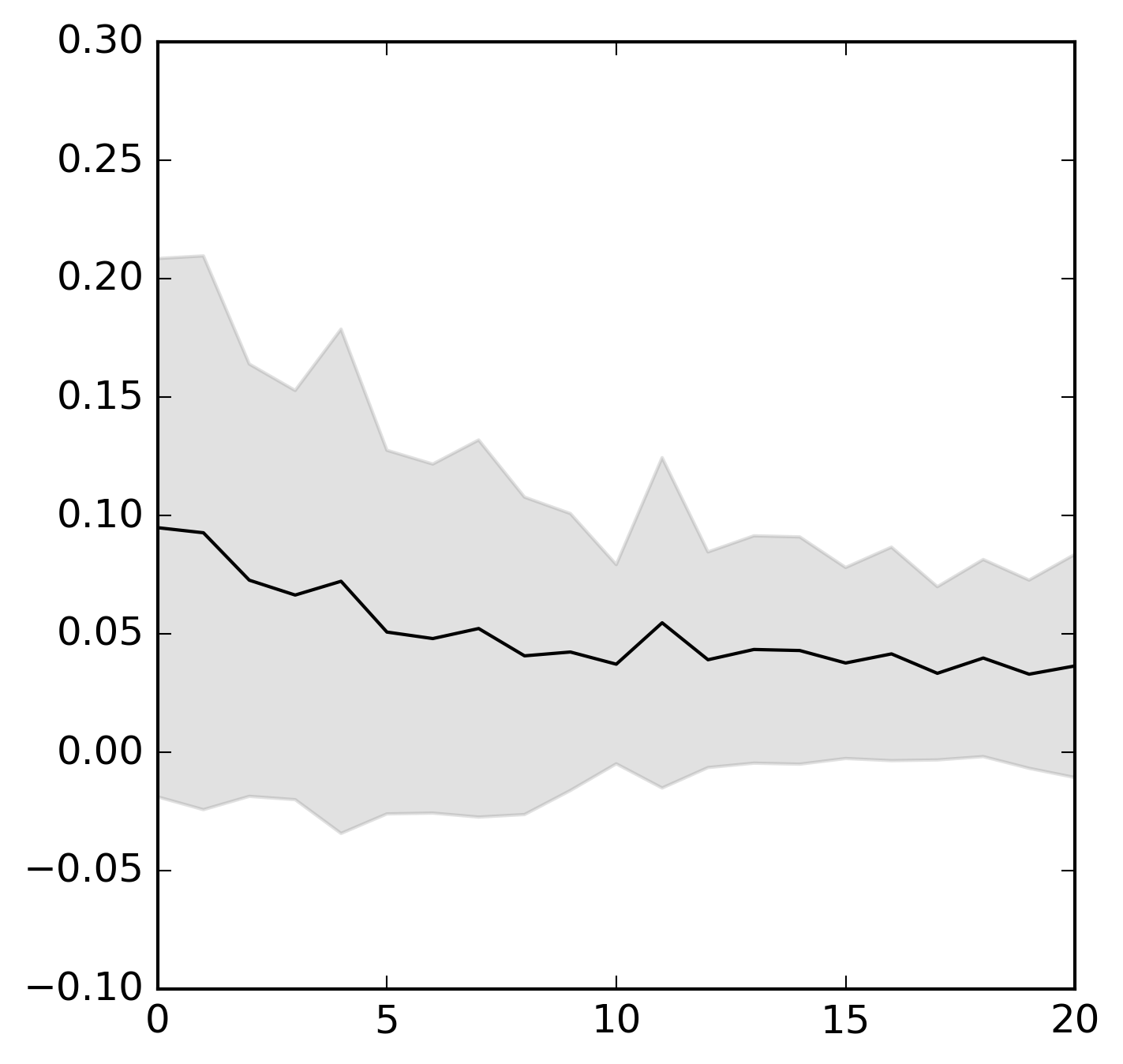}
    \end{minipage}
    \begin{minipage}[b]{0.23\textwidth}
        \centering
        \includegraphics[width=\textwidth]{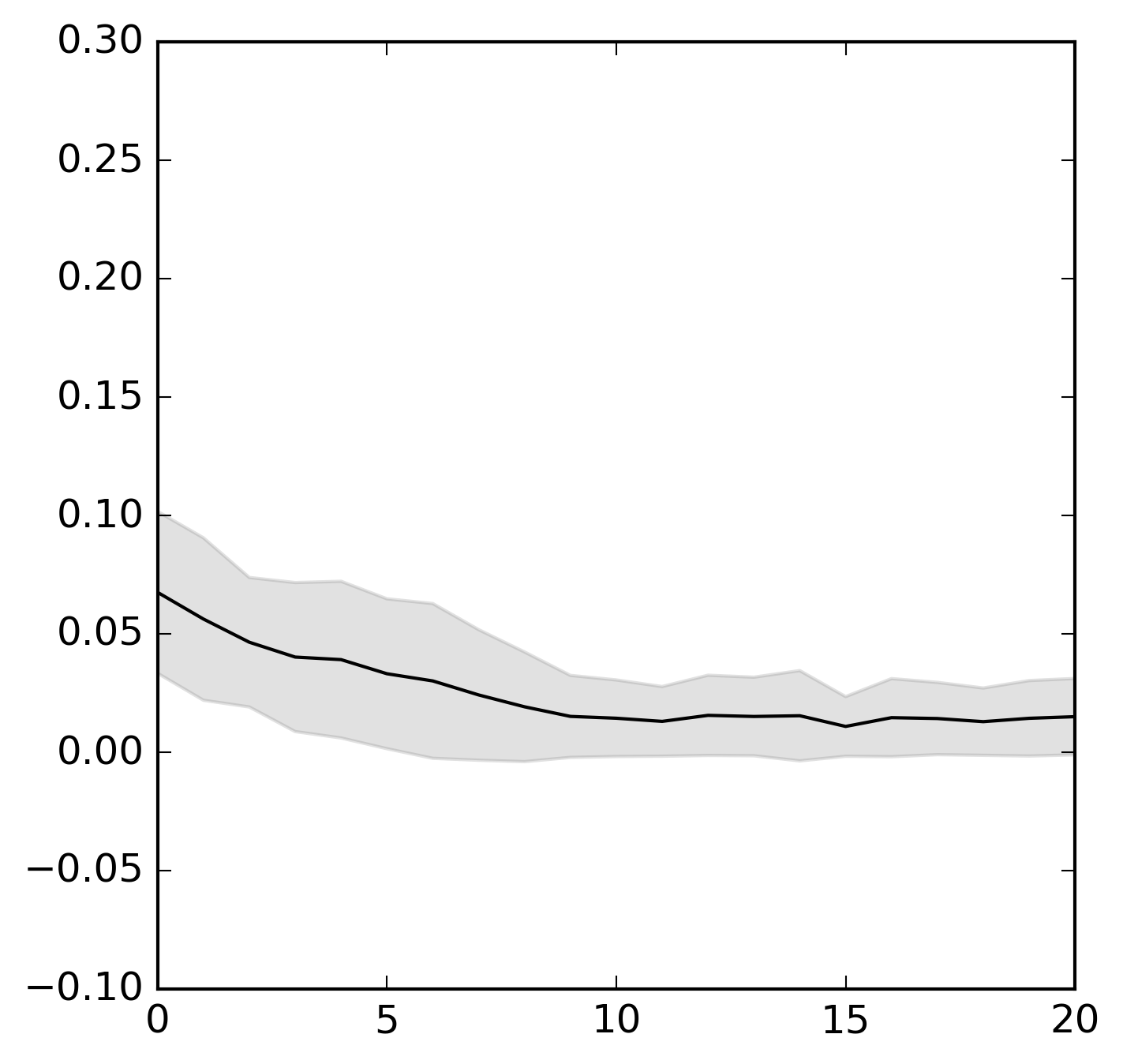}
    \end{minipage}
    \begin{minipage}[b]{0.23\textwidth}
        \centering
        \includegraphics[width=\textwidth]{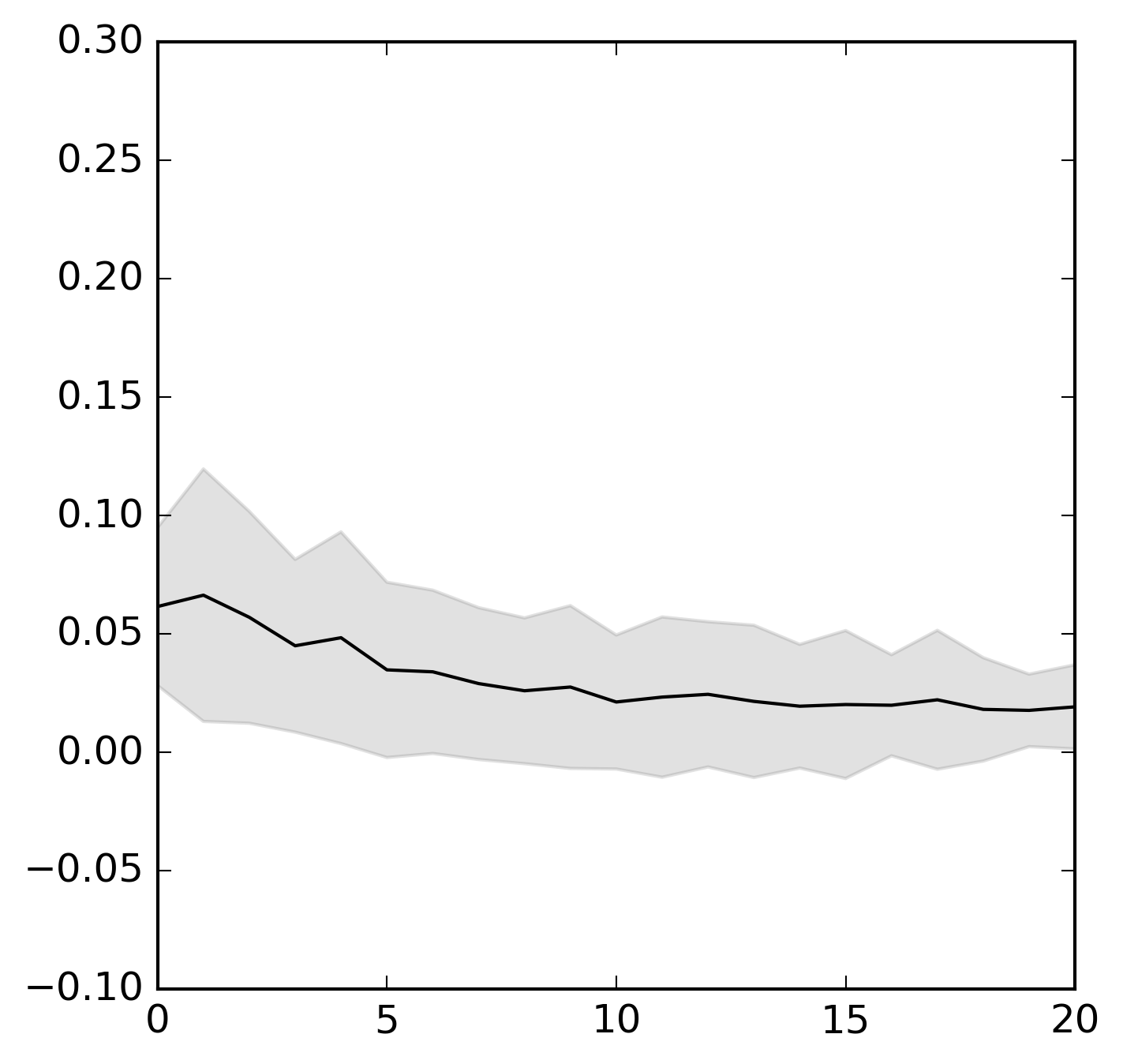}
    \end{minipage}
    \vfill
        \begin{minipage}[b]{0.23\textwidth}
    \centering
        \includegraphics[width=\textwidth]{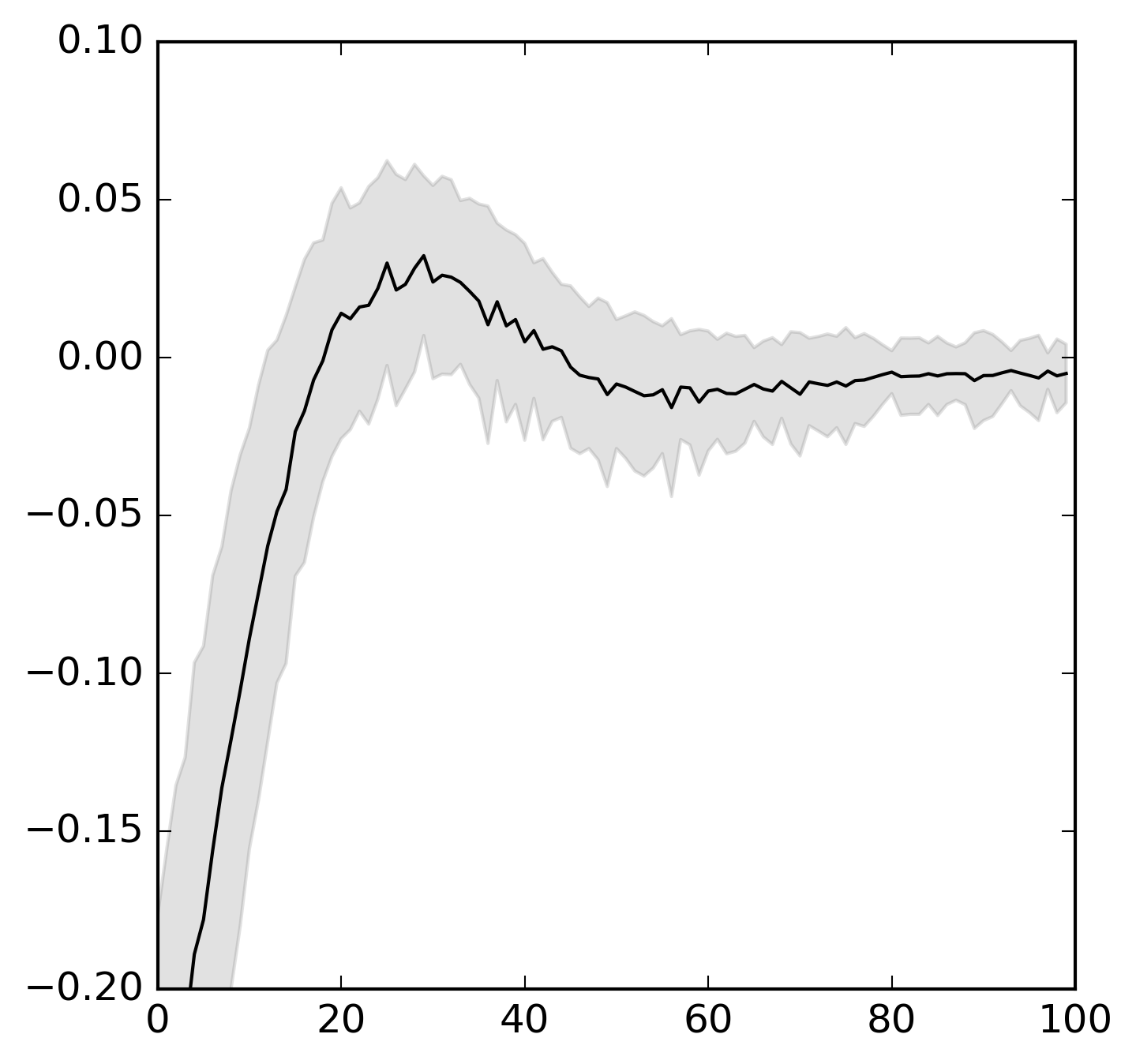}
    \end{minipage}
    \begin{minipage}[b]{0.23\textwidth}
        \centering
        \includegraphics[width=\textwidth]{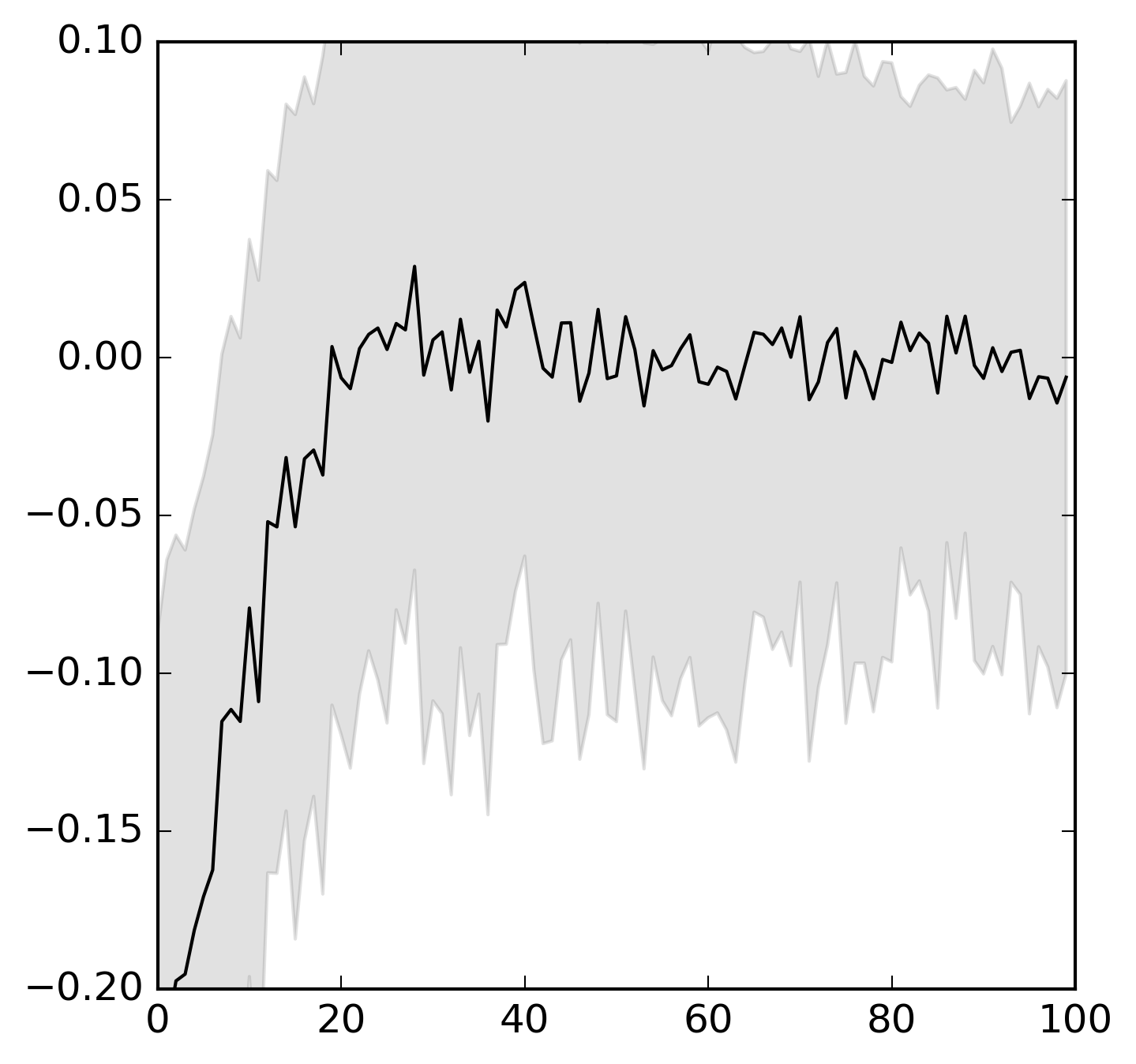}
    \end{minipage}
    \begin{minipage}[b]{0.23\textwidth}
        \centering
        \includegraphics[width=\textwidth]{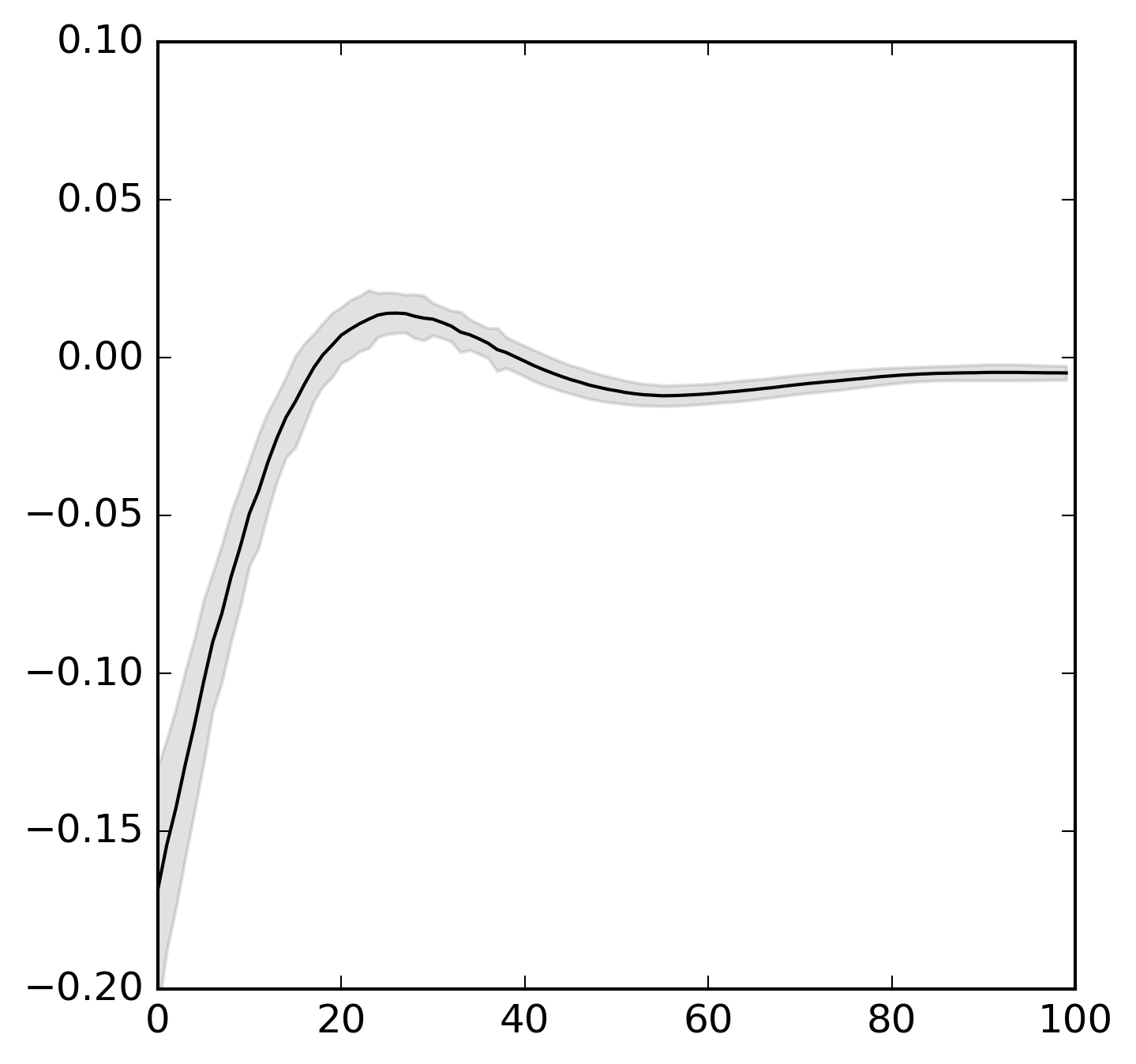}
    \end{minipage}
    \begin{minipage}[b]{0.23\textwidth}
        \centering
        \includegraphics[width=\textwidth]{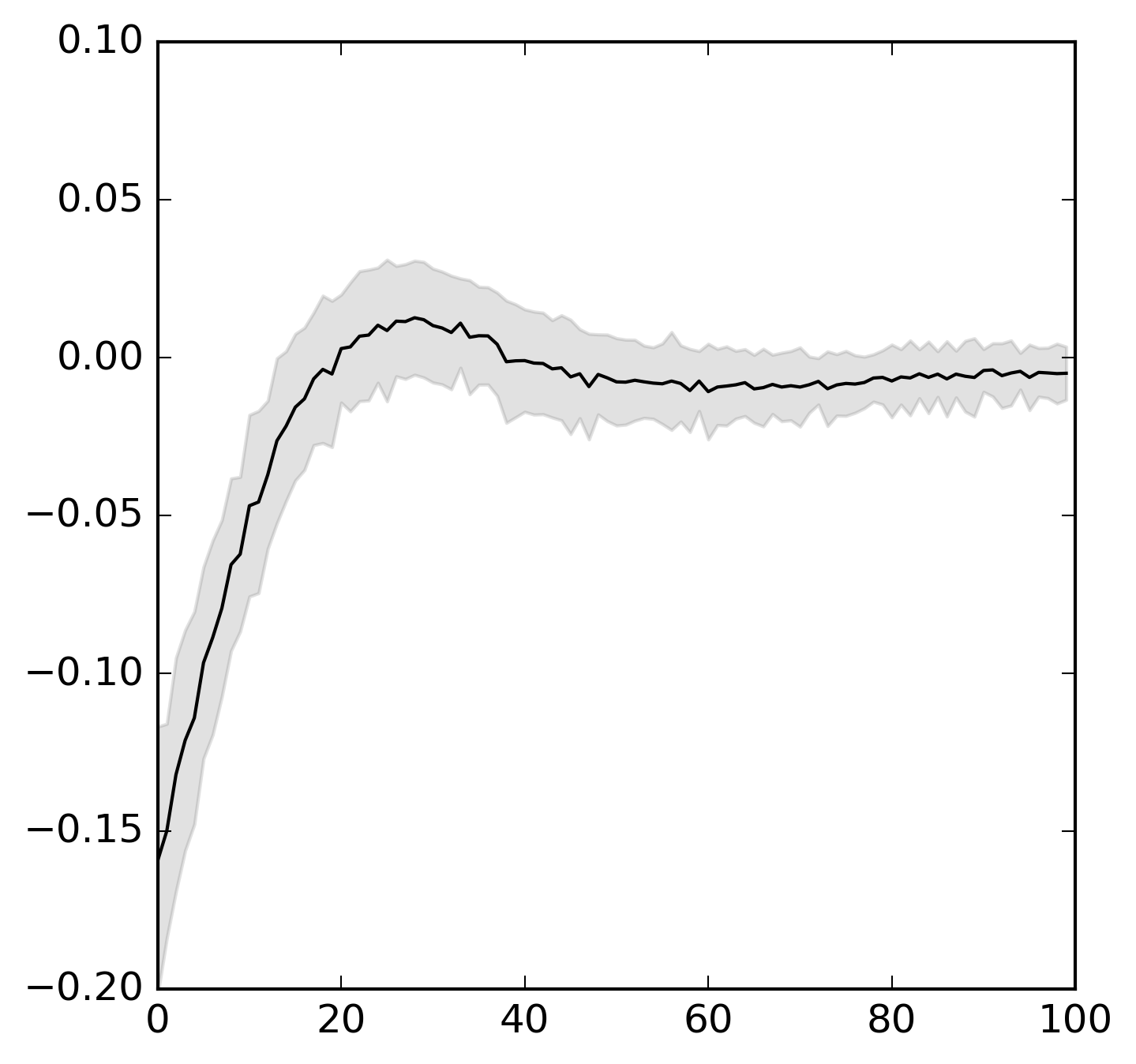}
    \end{minipage}
    \vfill
        \begin{minipage}[b]{0.23\textwidth}
    \centering
        \includegraphics[width=\textwidth]{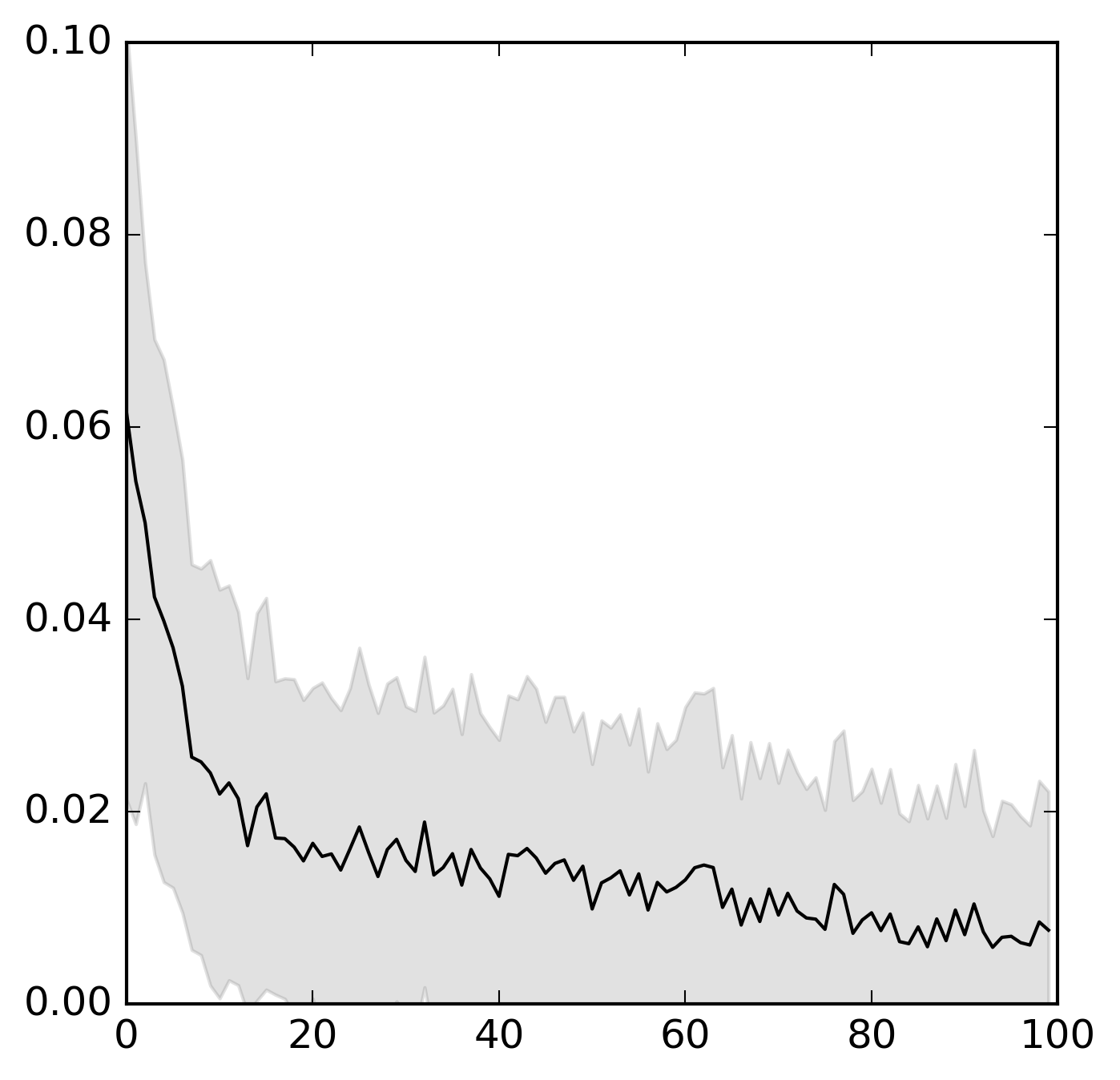}
    \end{minipage}
    \begin{minipage}[b]{0.23\textwidth}
        \centering
        \includegraphics[width=\textwidth]{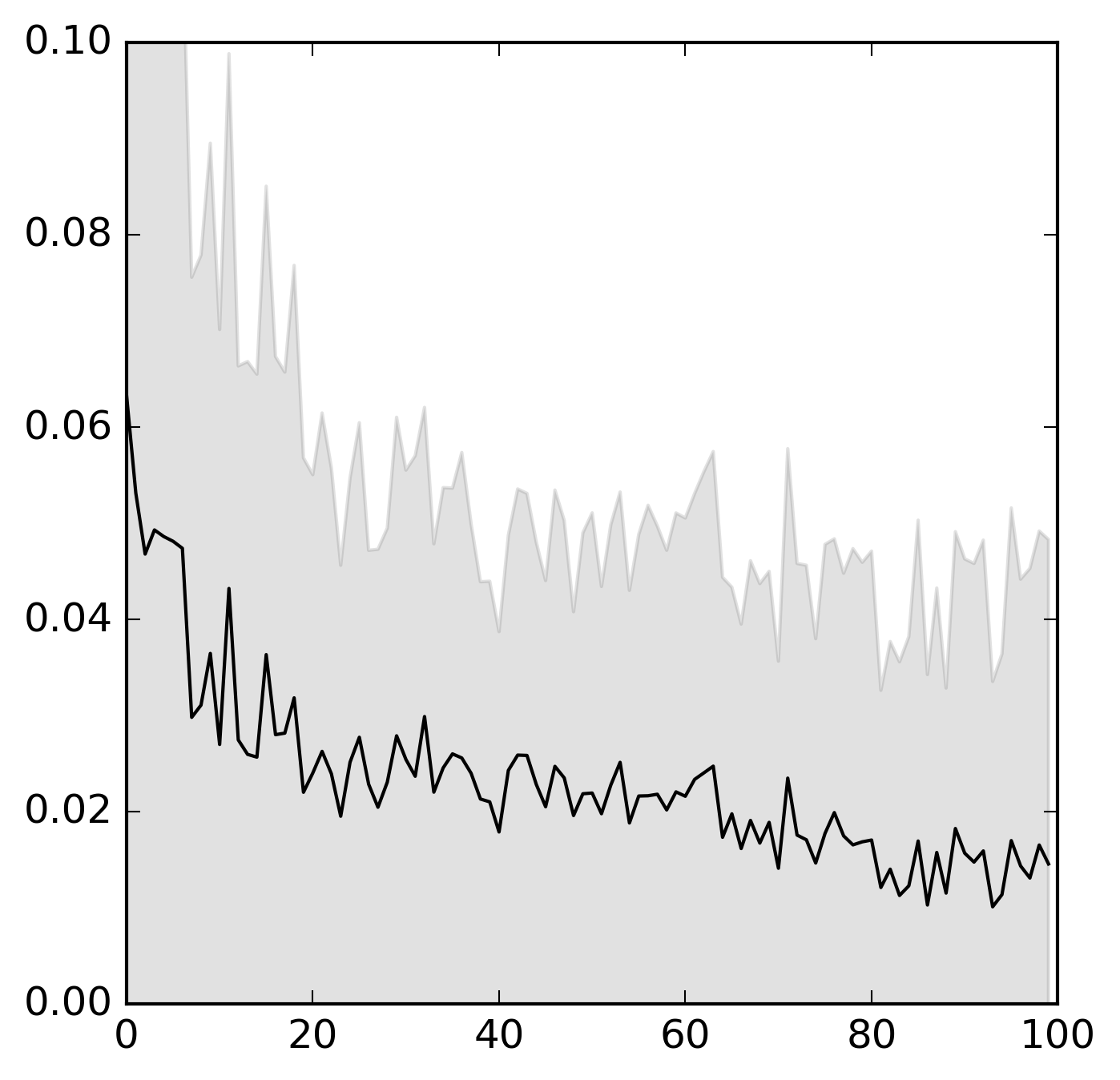}
    \end{minipage}
    \begin{minipage}[b]{0.23\textwidth}
        \centering
        \includegraphics[width=\textwidth]{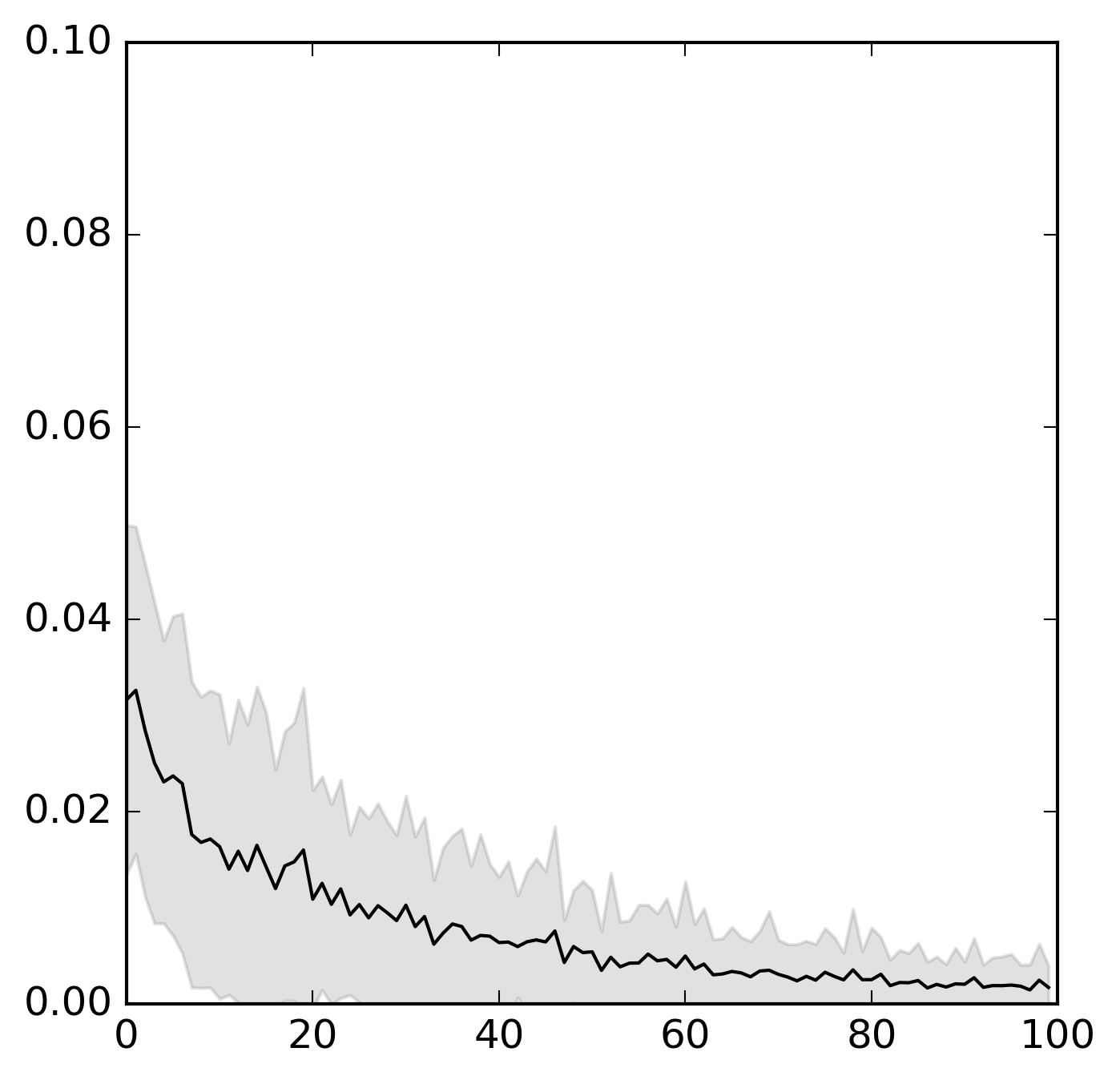}
    \end{minipage}
    \begin{minipage}[b]{0.23\textwidth}
        \centering
        \includegraphics[width=\textwidth]{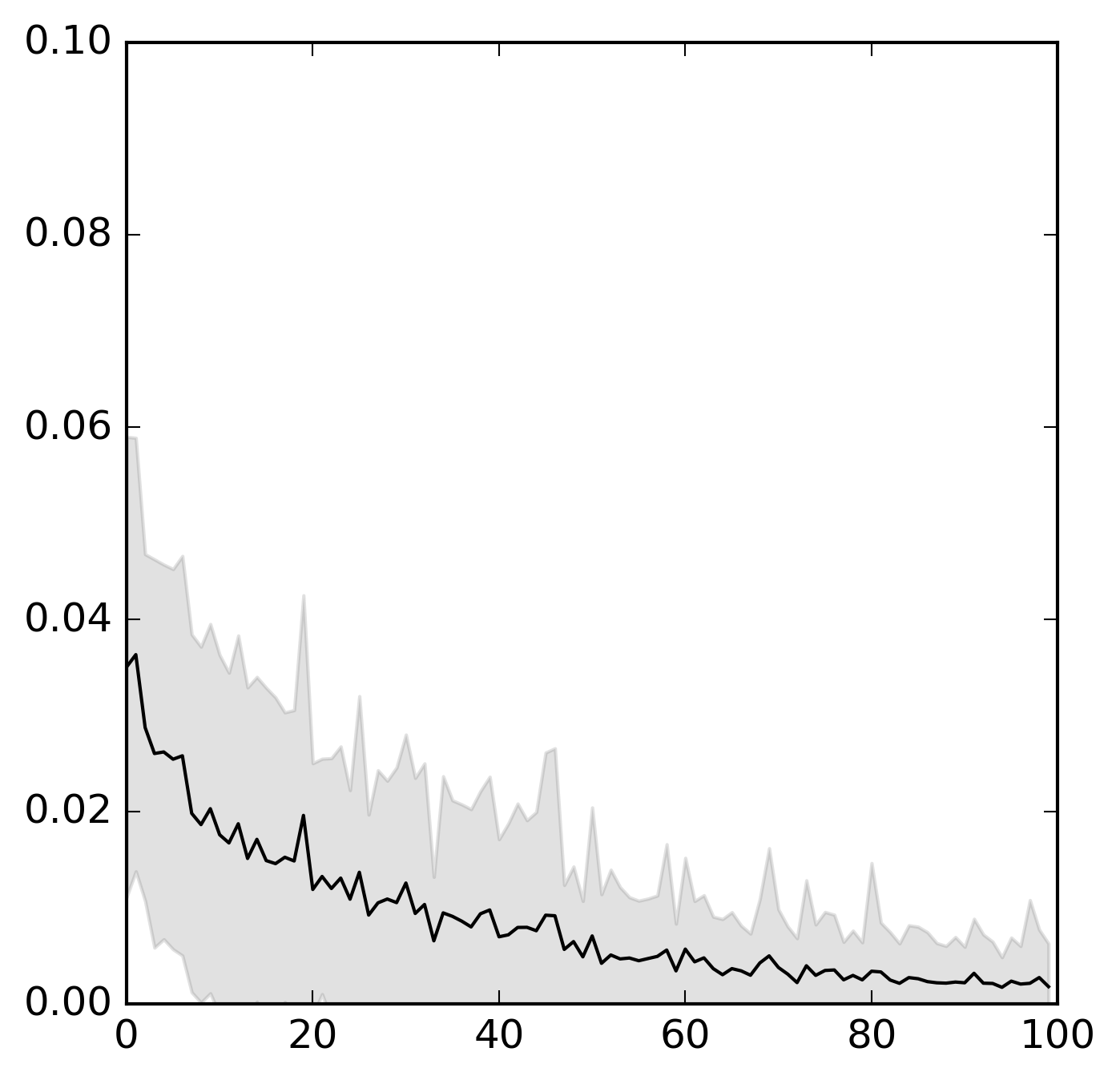}
    \end{minipage}
\label{fig:loss_functions}
\end{figure}

\begin{figure}[!ht]
    \caption{\small Average (across repeats) training policy values. Rows (up to down): D3QN Q-values, MAPPO policy values. Columns (left to right): $(n=2, \sigma(\beta)=0.0)$, $(n=2, \sigma(\beta)=0.5)$, $(n=5, \sigma(\beta)=0.0)$, $(n=5, \sigma(\beta)=0.5)$.}
    \centering
    \begin{minipage}[b]{0.23\textwidth}
    \centering
        \includegraphics[width=\textwidth]{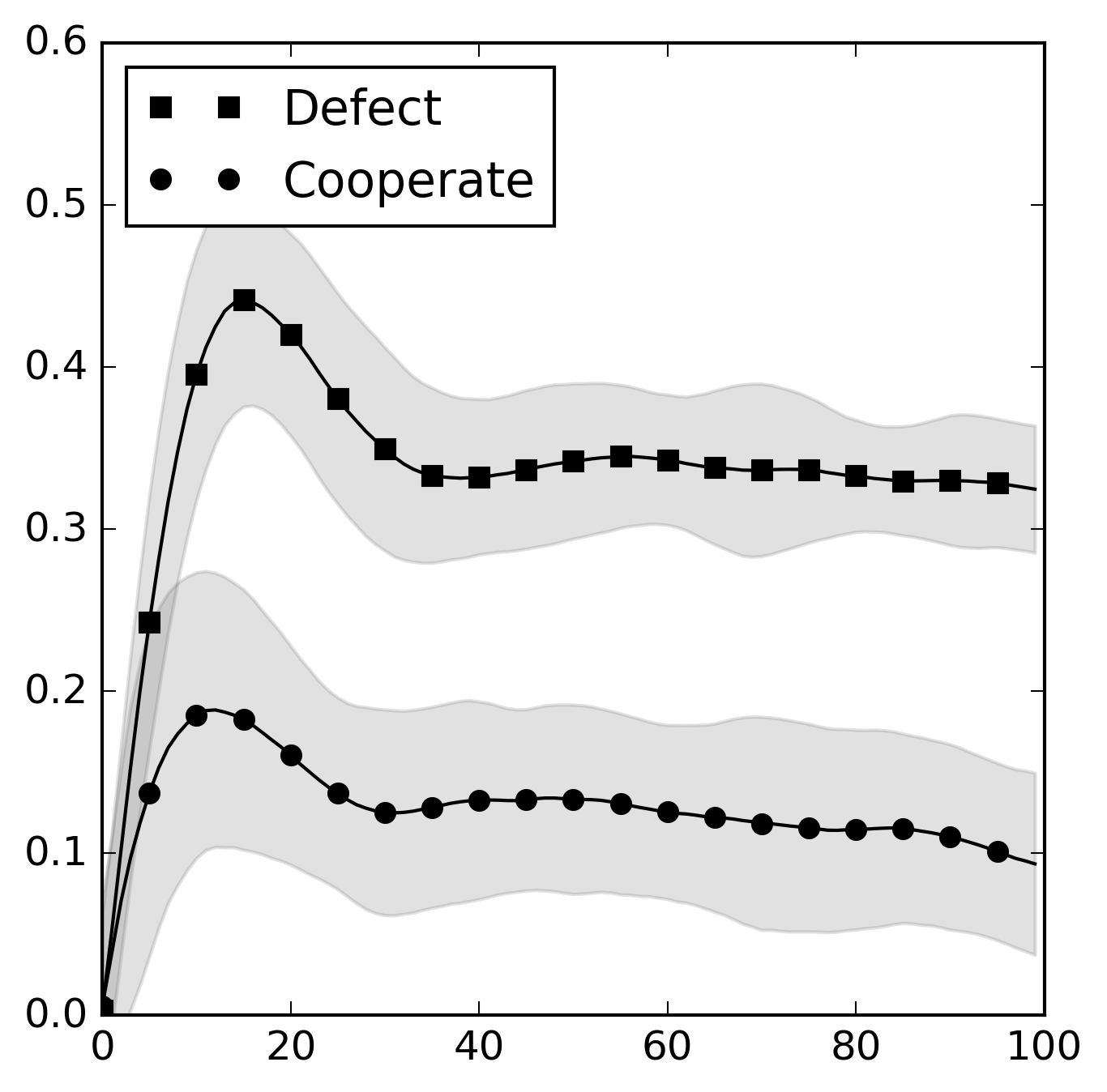}
    \end{minipage}
    \begin{minipage}[b]{0.23\textwidth}
        \centering
        \includegraphics[width=\textwidth]{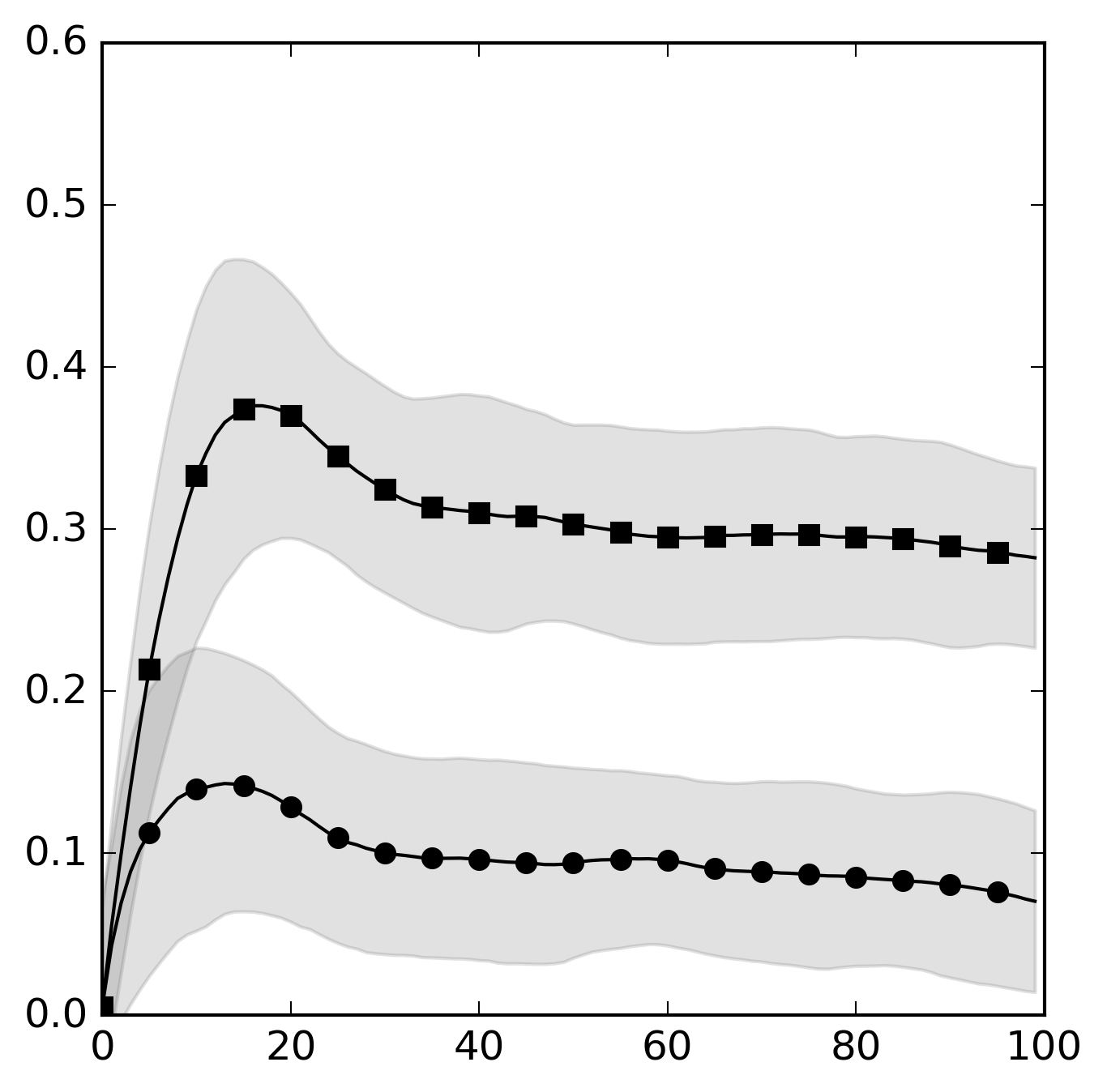}
    \end{minipage}
    \begin{minipage}[b]{0.23\textwidth}
        \centering
        \includegraphics[width=\textwidth]{png_files/loss/loss_d3qn_5_0.0_plot.png}
    \end{minipage}
    \begin{minipage}[b]{0.23\textwidth}
        \centering
        \includegraphics[width=\textwidth]{png_files/loss/loss_d3qn_5_0.5_plot.png}
    \end{minipage}
    \vfill
        \begin{minipage}[b]{0.23\textwidth}
    \centering
        \includegraphics[width=\textwidth]{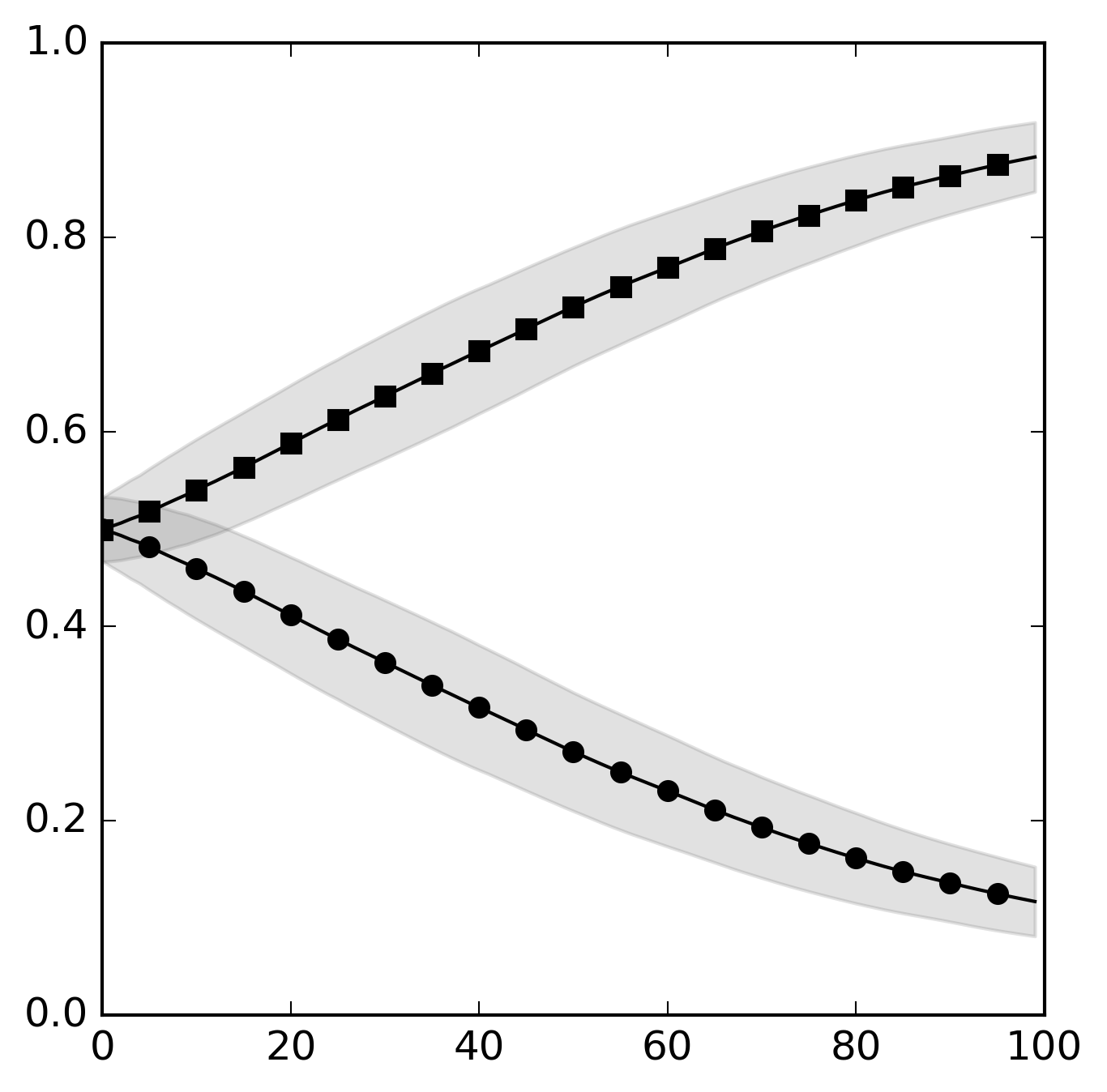}
    \end{minipage}
    \begin{minipage}[b]{0.23\textwidth}
        \centering
        \includegraphics[width=\textwidth]{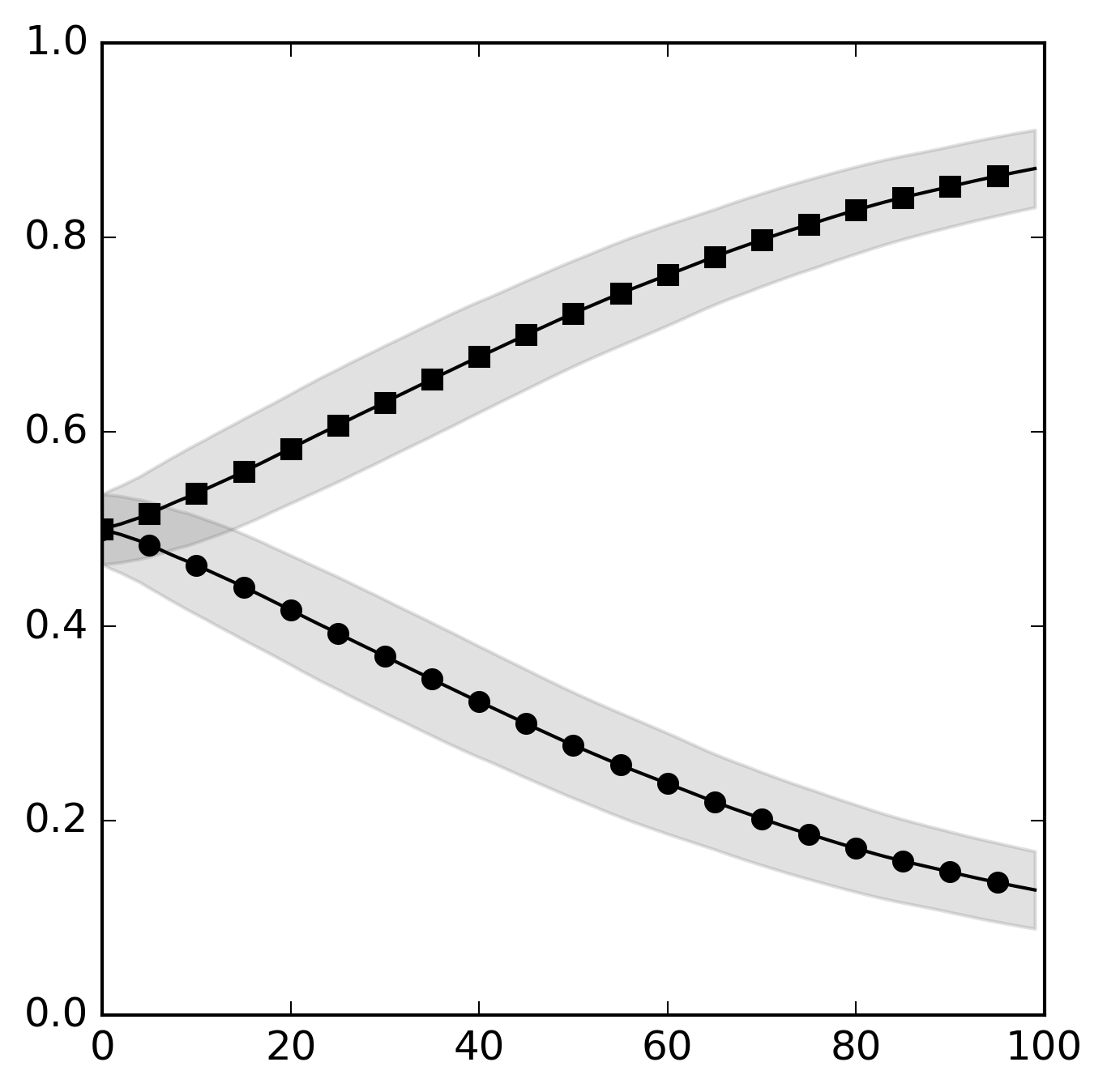}
    \end{minipage}
    \begin{minipage}[b]{0.23\textwidth}
        \centering
        \includegraphics[width=\textwidth]{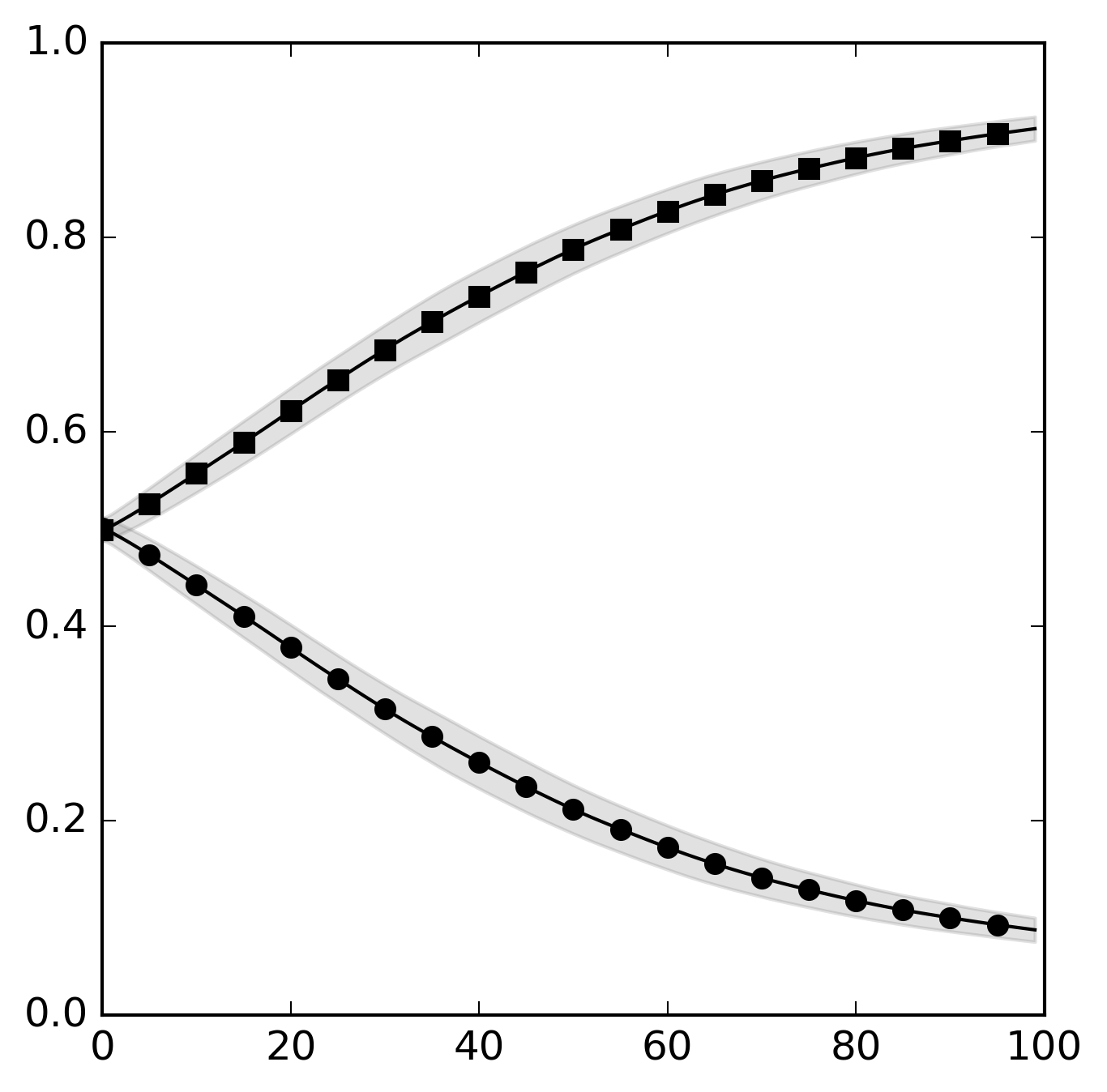}
    \end{minipage}
    \begin{minipage}[b]{0.23\textwidth}
        \centering
        \includegraphics[width=\textwidth]{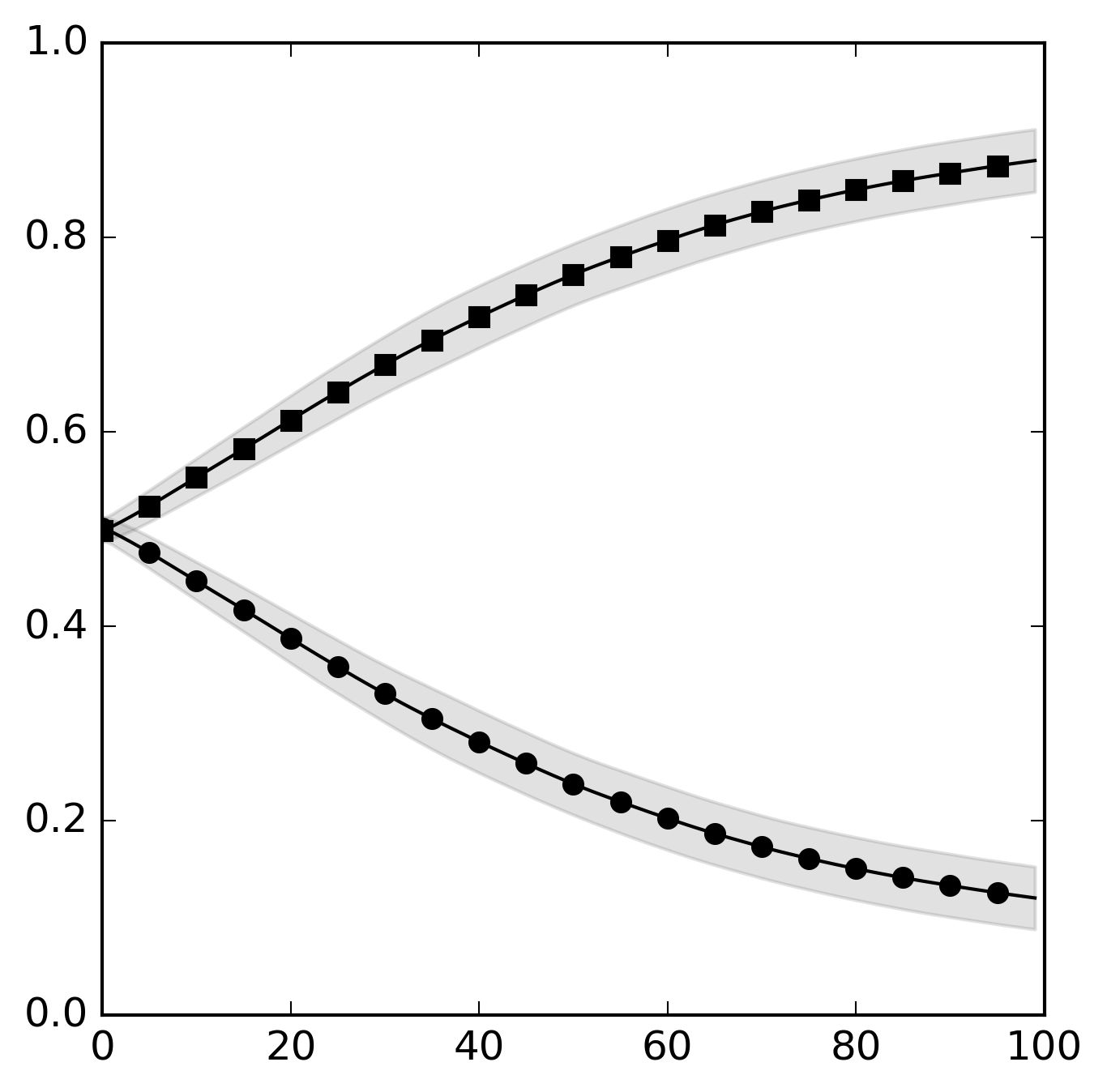}
    \end{minipage}
\label{fig:policies}
\end{figure}

\begin{figure}[!ht]
    \caption{\small Average (across repeats) training losses. Rows (up to down): D3QN, MAPPO Actor, MAPPO Critic. Columns (left to right): $(n=2, \sigma(\beta)=0.0)$, $(n=2, \sigma(\beta)=0.5)$, $(n=5, \sigma(\beta)=0.0)$, $(n=5, \sigma(\beta)=0.5)$.}
    \centering
    \begin{minipage}[b]{0.23\textwidth}
    \centering
        \includegraphics[width=\textwidth]{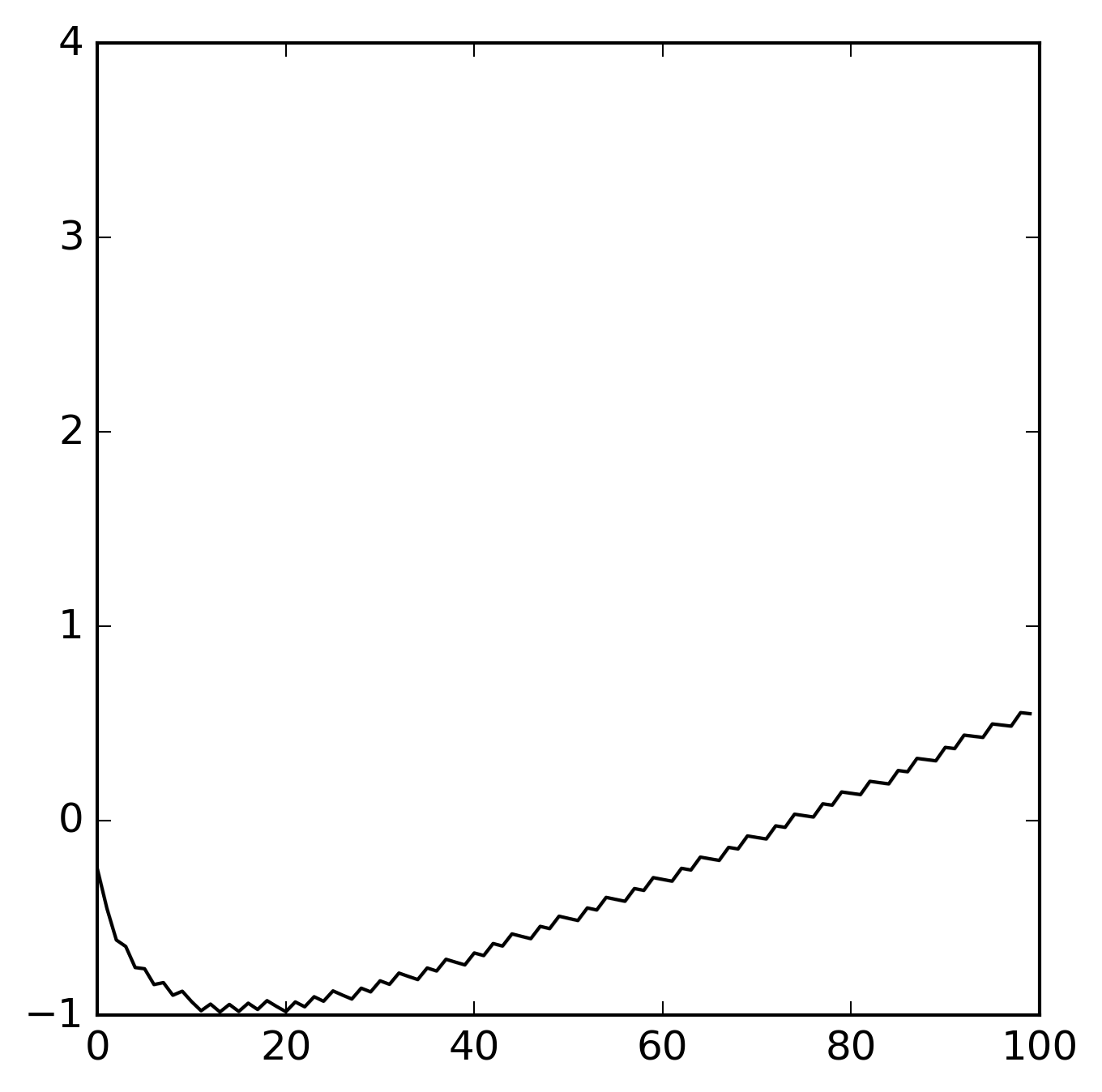}
    \end{minipage}
    \begin{minipage}[b]{0.23\textwidth}
        \centering
        \includegraphics[width=\textwidth]{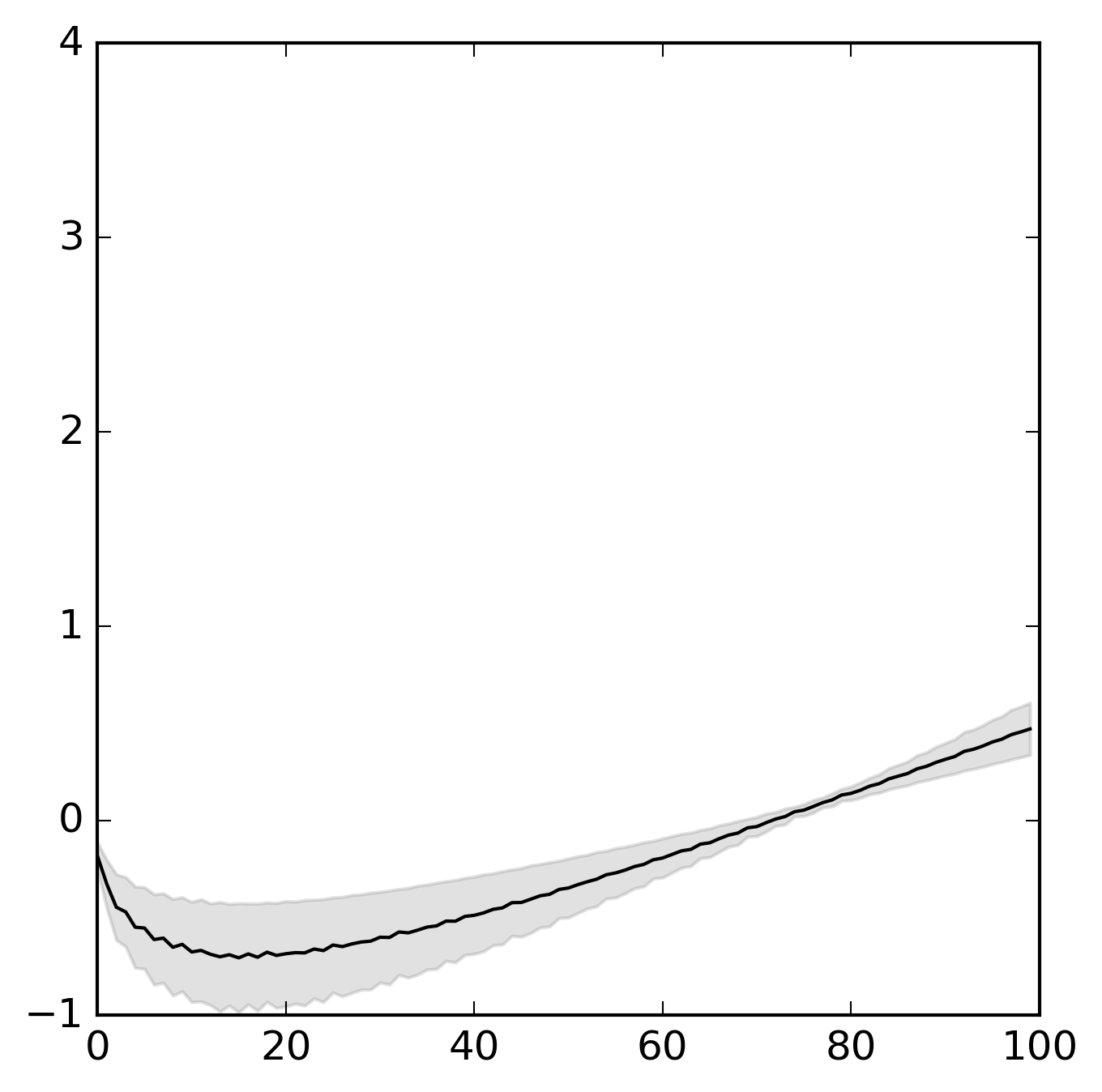}
    \end{minipage}
    \begin{minipage}[b]{0.23\textwidth}
        \centering
        \includegraphics[width=\textwidth]{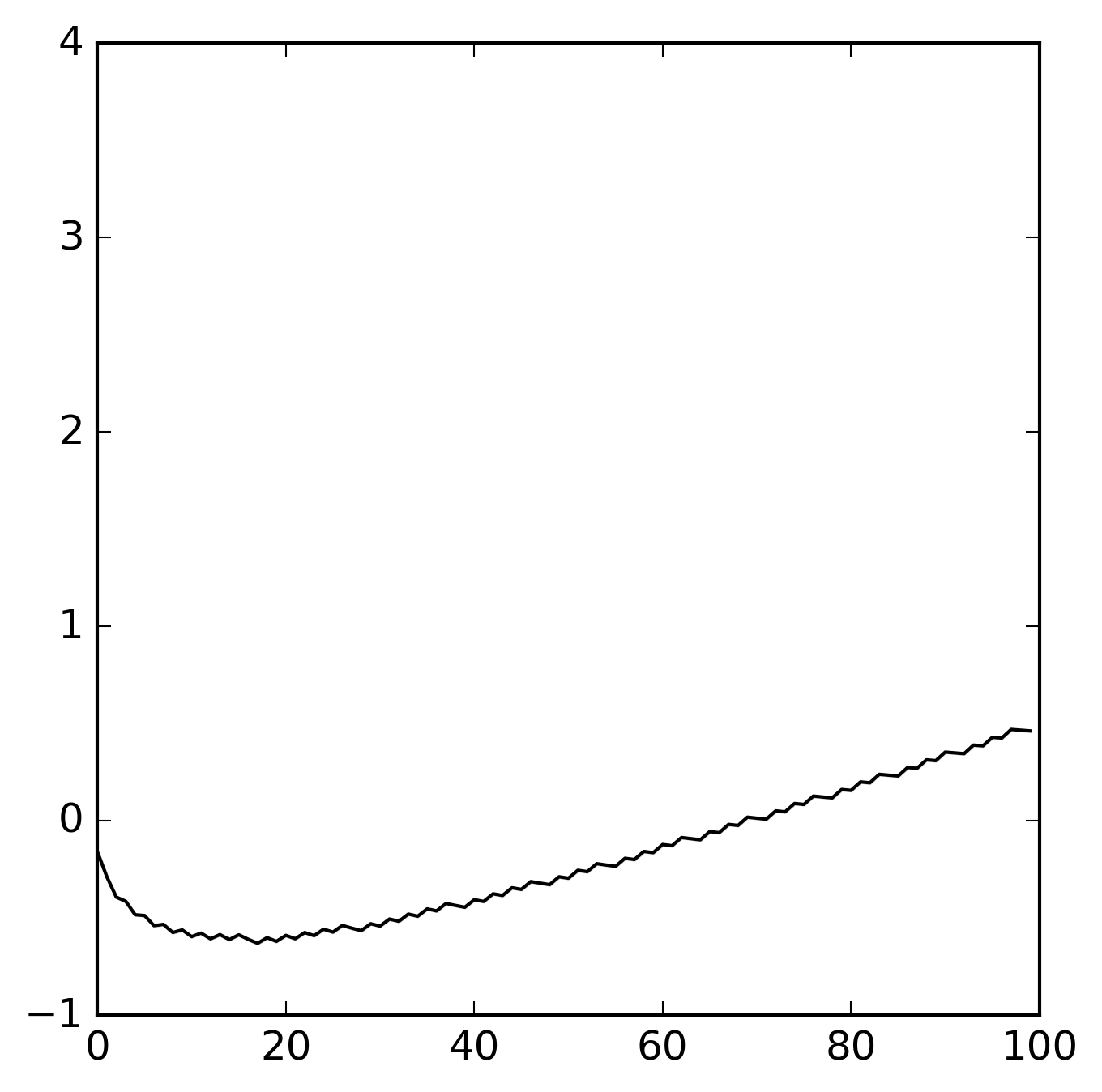}
    \end{minipage}
    \begin{minipage}[b]{0.23\textwidth}
        \centering
        \includegraphics[width=\textwidth]{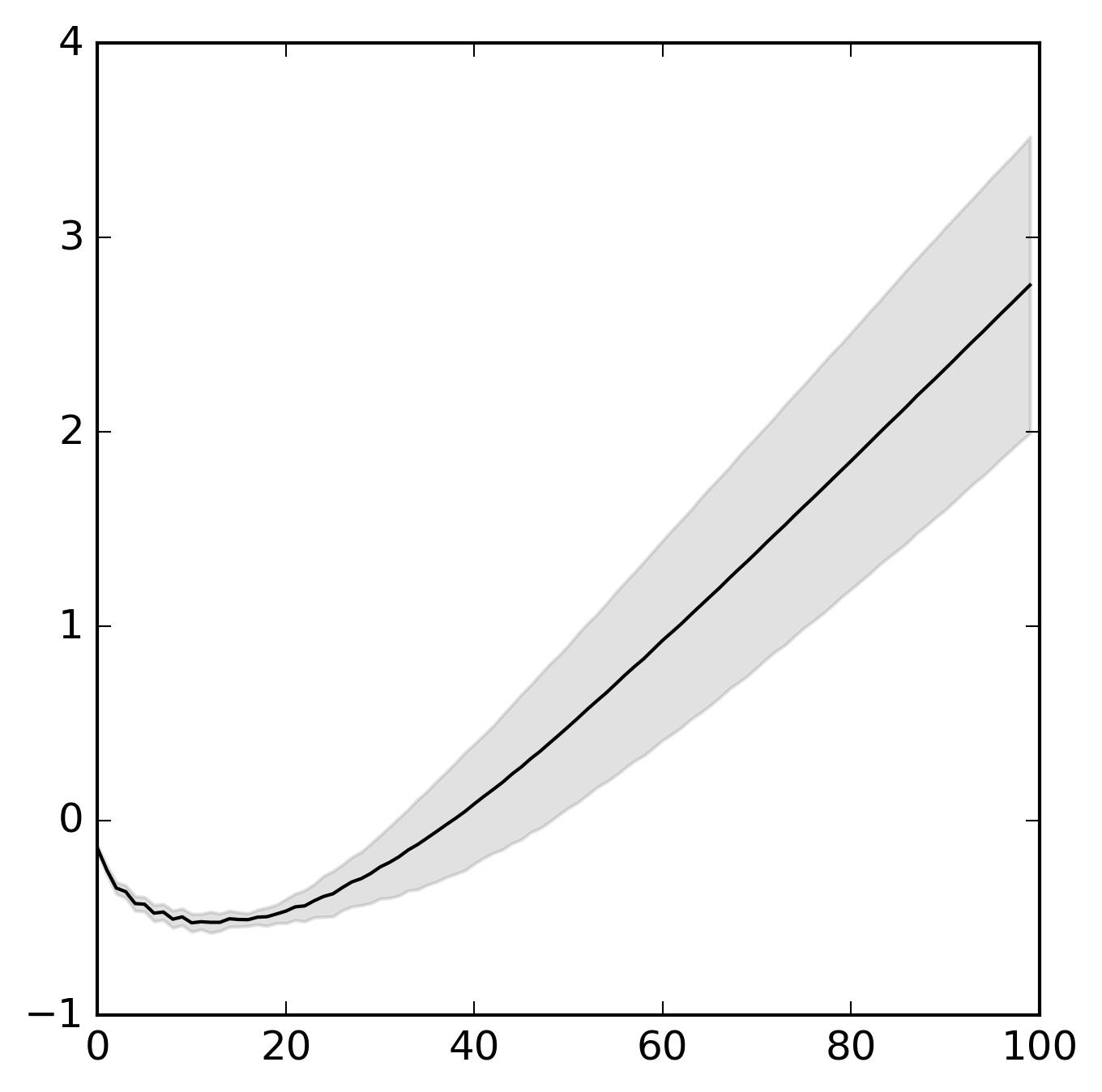}
    \end{minipage}
    \vfill
        \begin{minipage}[b]{0.23\textwidth}
    \centering
        \includegraphics[width=\textwidth]{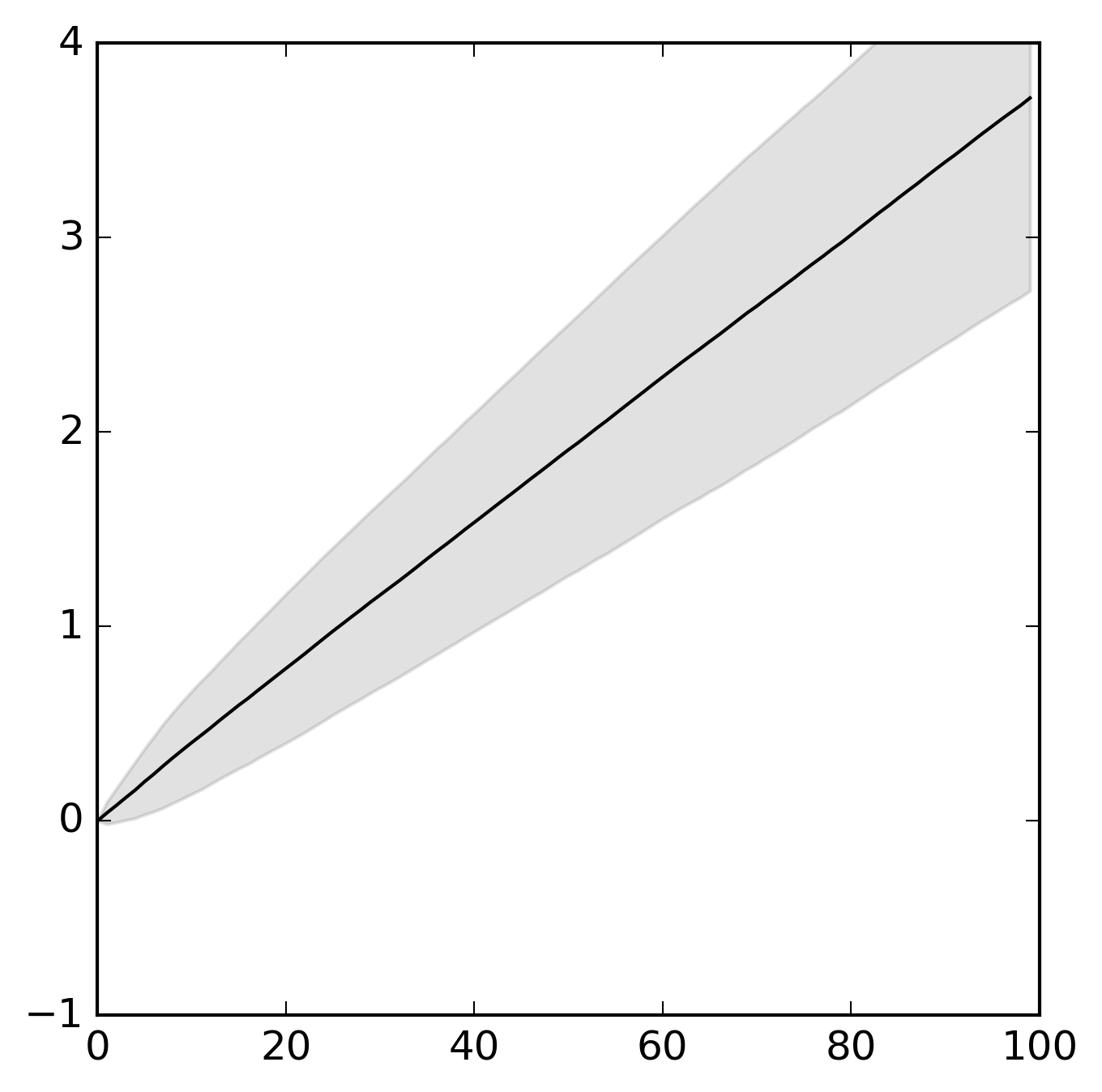}
    \end{minipage}
    \begin{minipage}[b]{0.23\textwidth}
        \centering
        \includegraphics[width=\textwidth]{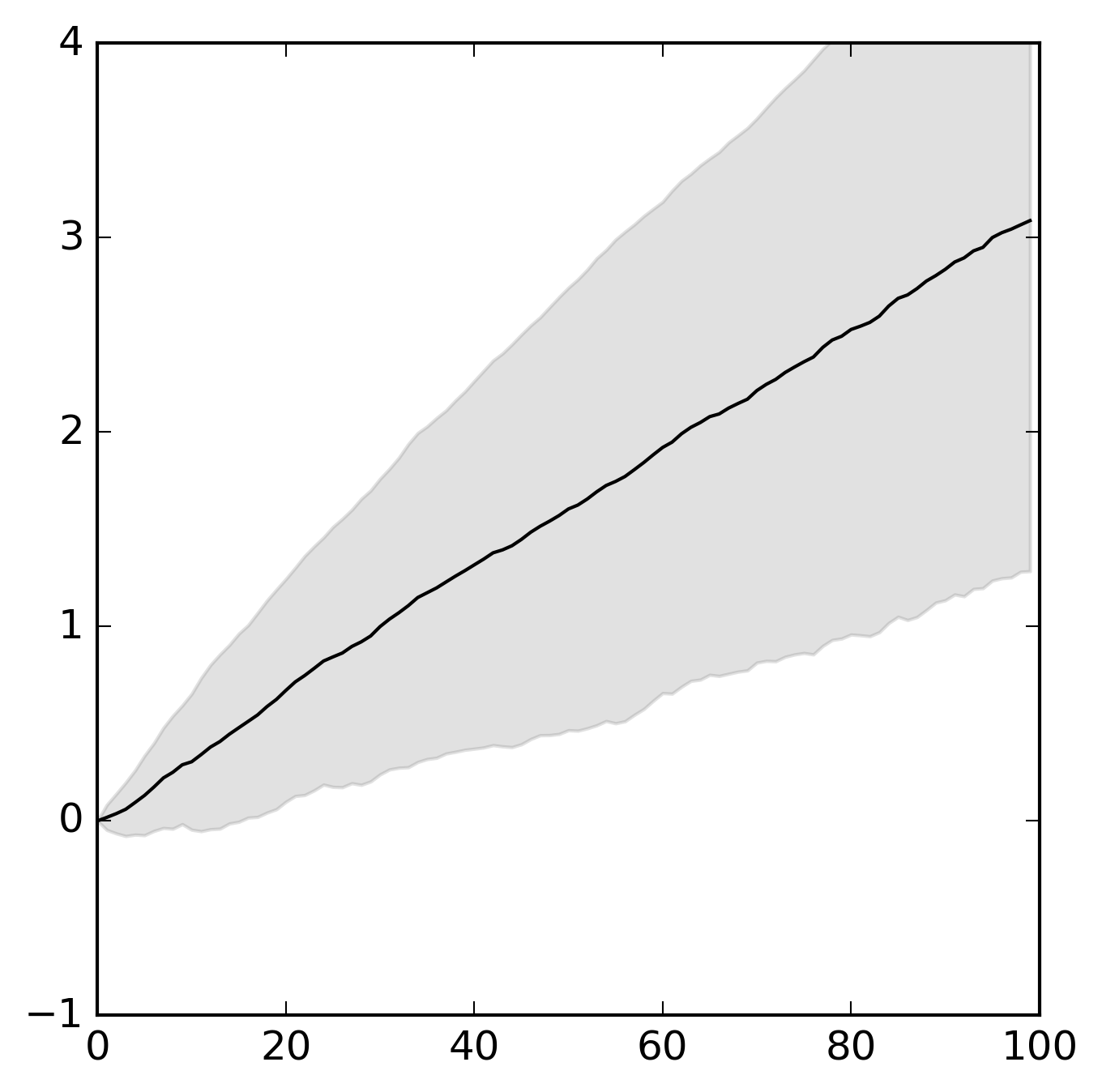}
    \end{minipage}
    \begin{minipage}[b]{0.23\textwidth}
        \centering
        \includegraphics[width=\textwidth]{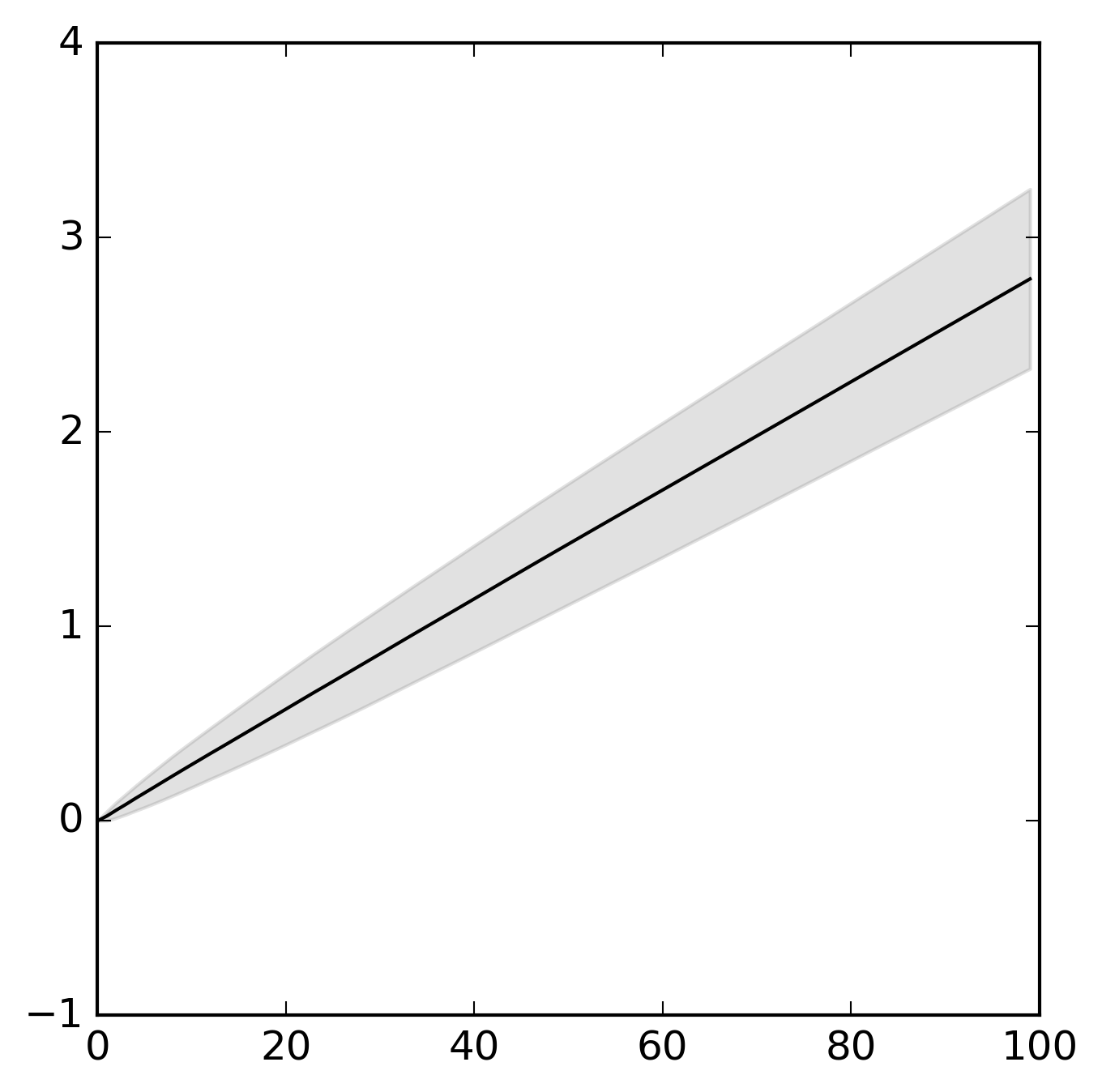}
    \end{minipage}
    \begin{minipage}[b]{0.23\textwidth}
        \centering
        \includegraphics[width=\textwidth]{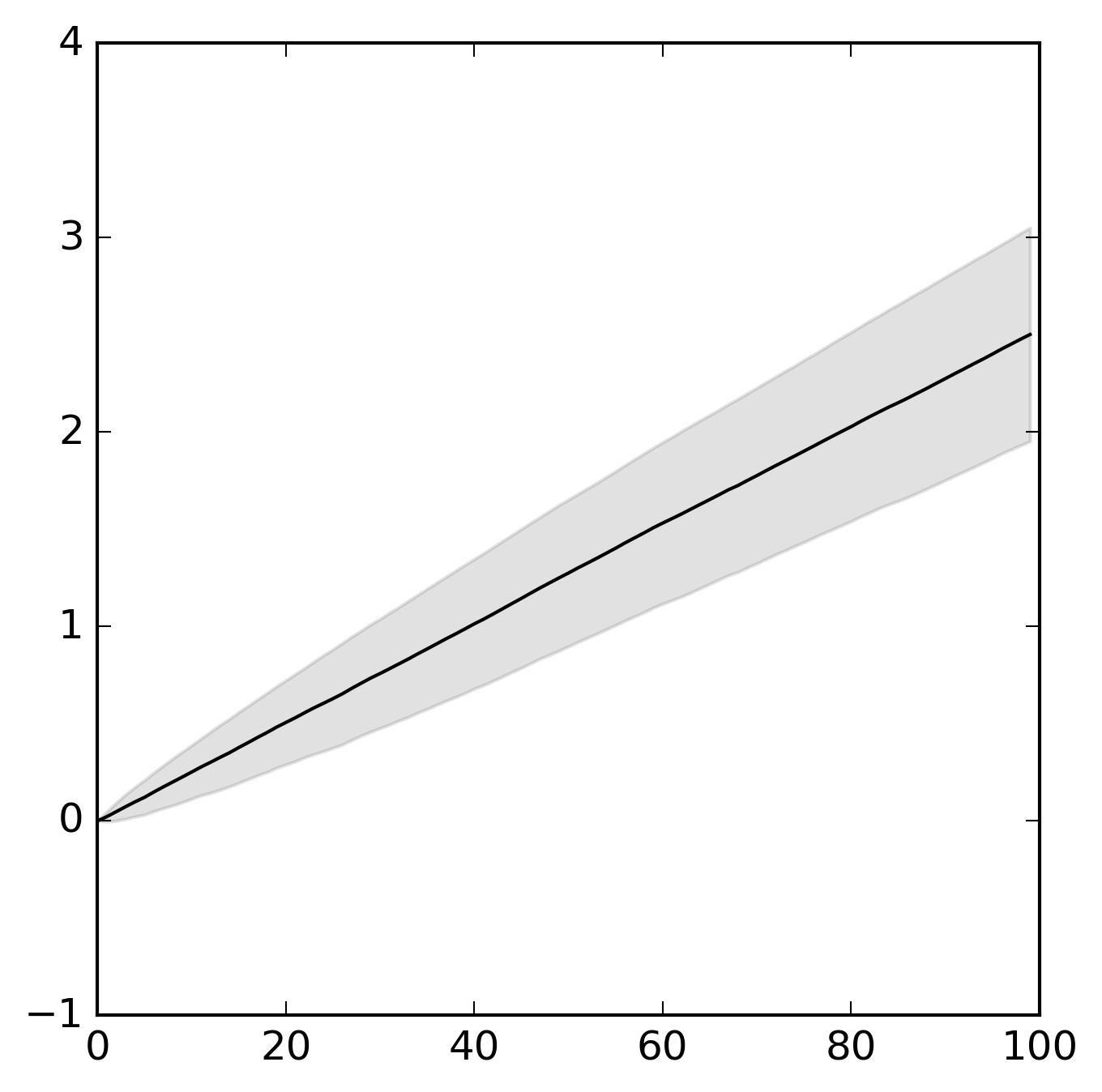}
    \end{minipage}
    \vfill
        \begin{minipage}[b]{0.23\textwidth}
    \centering
        \includegraphics[width=\textwidth]{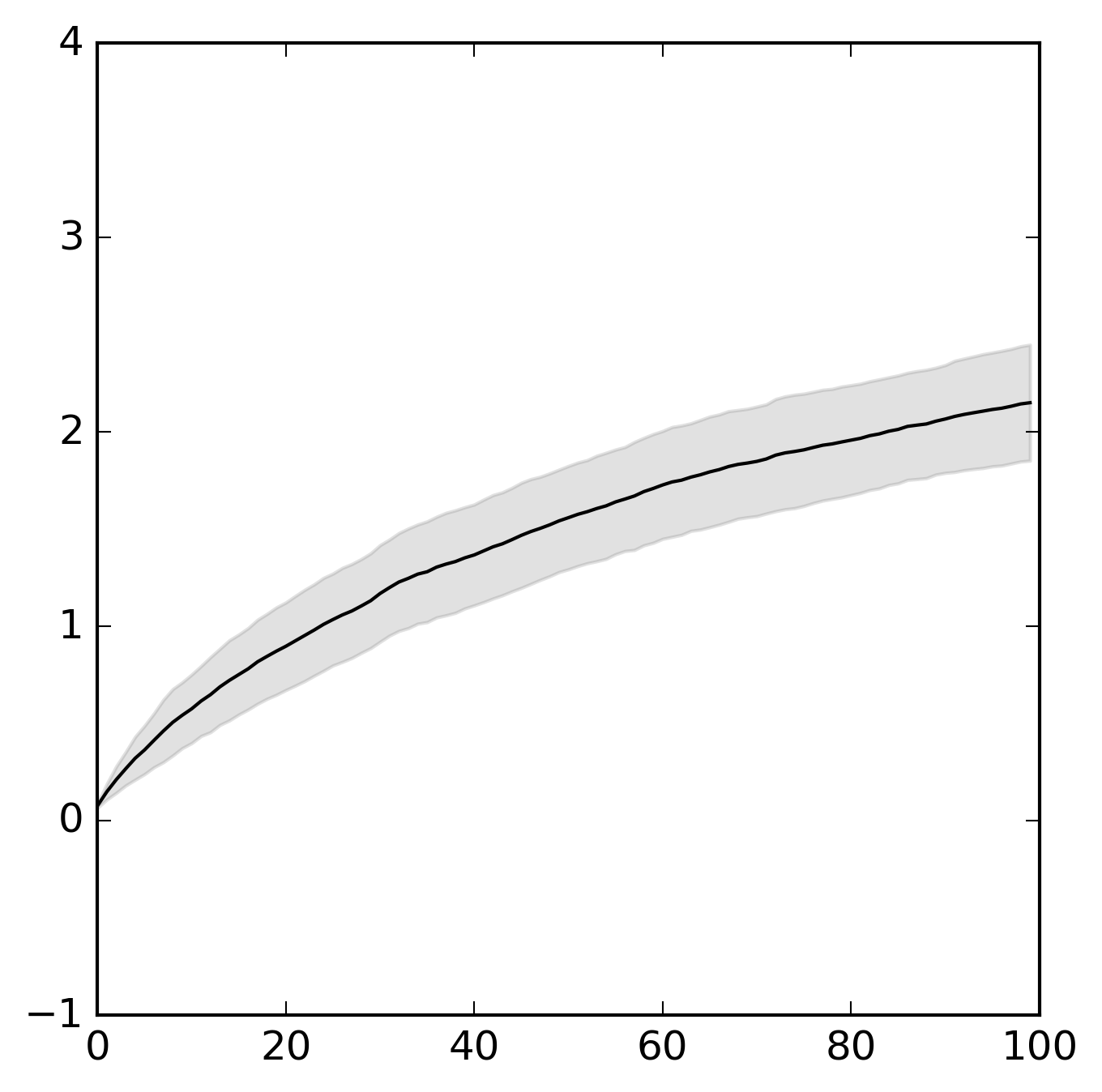}
    \end{minipage}
    \begin{minipage}[b]{0.23\textwidth}
        \centering
        \includegraphics[width=\textwidth]{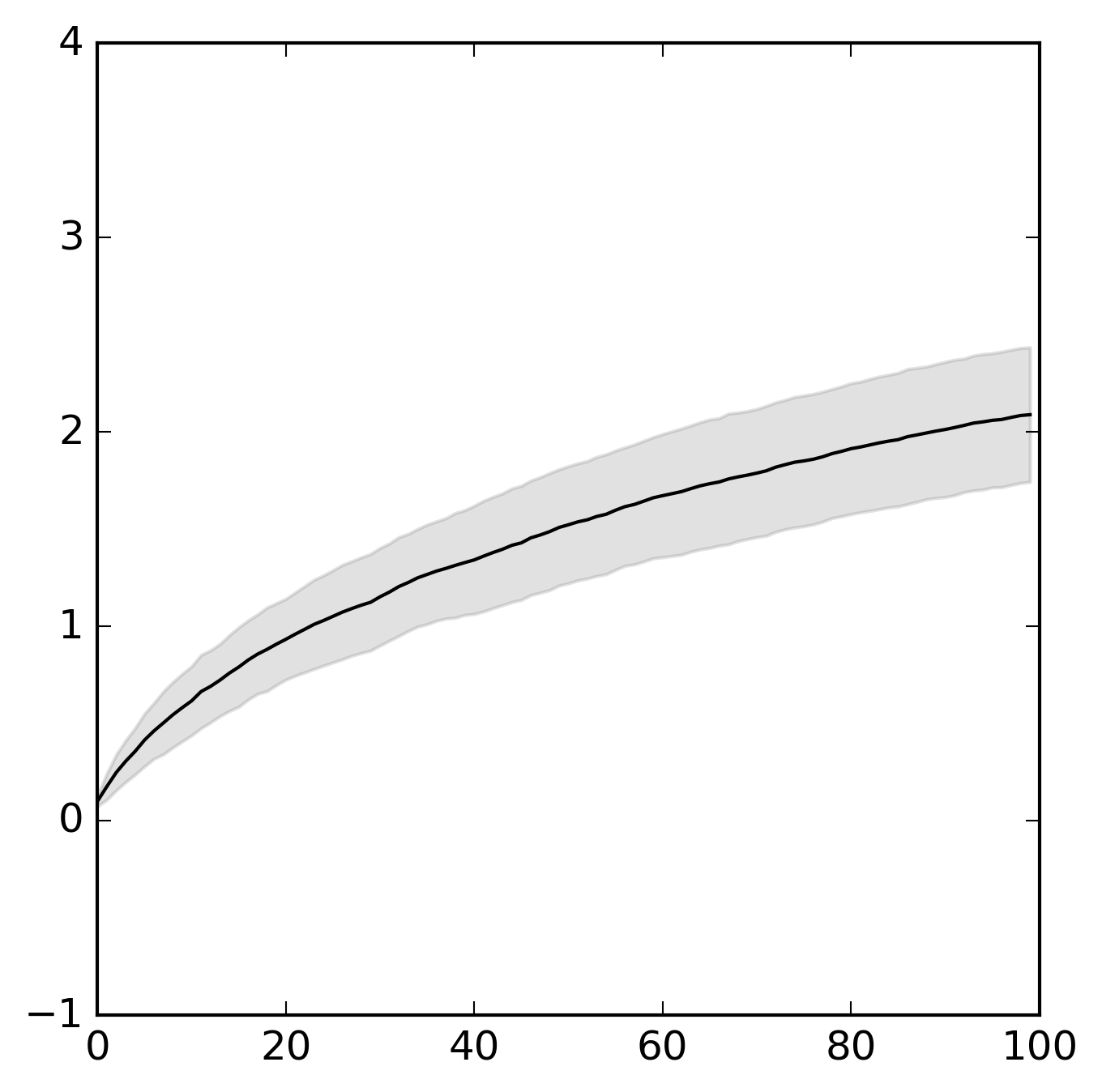}
    \end{minipage}
    \begin{minipage}[b]{0.23\textwidth}
        \centering
        \includegraphics[width=\textwidth]{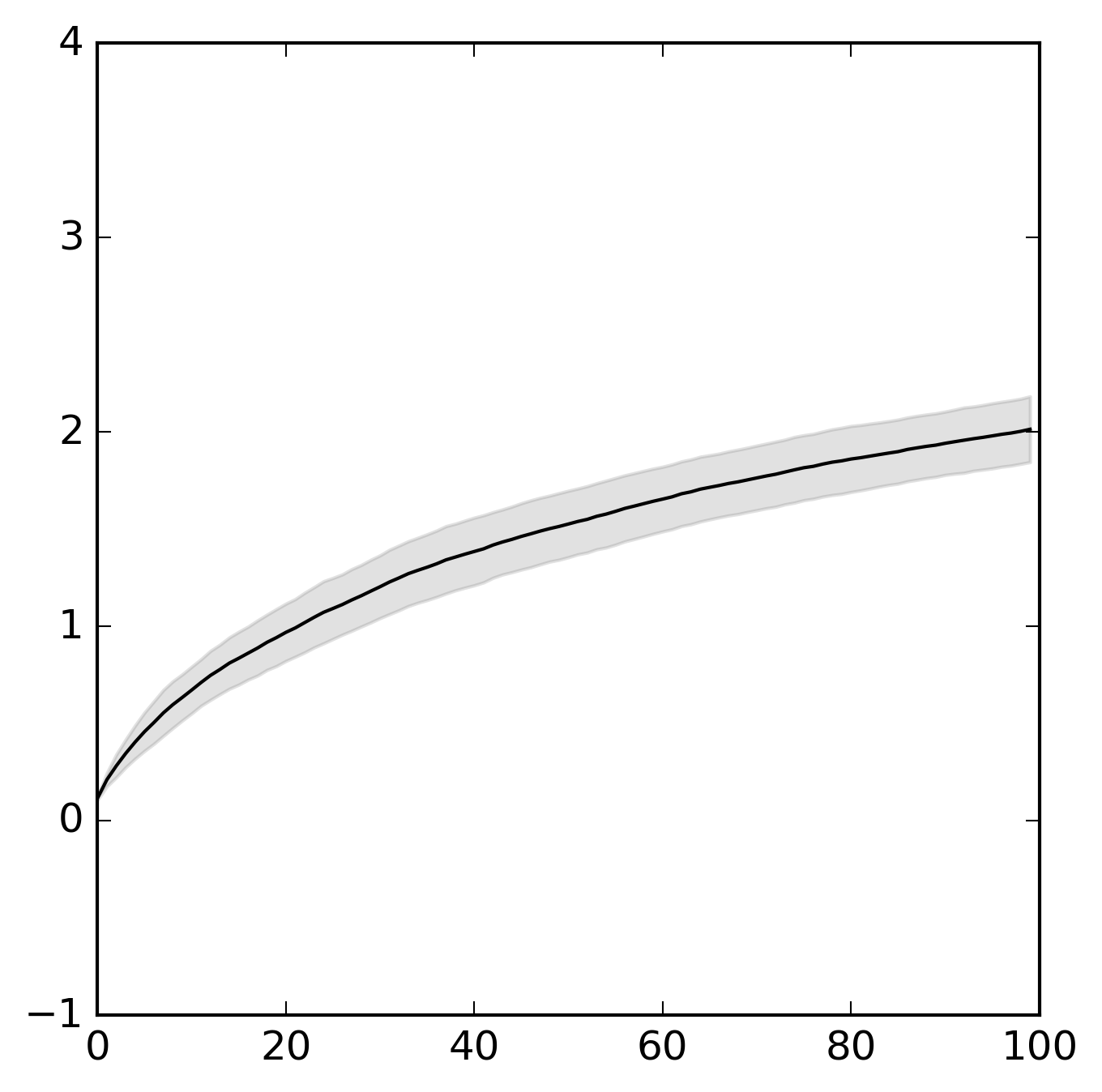}
    \end{minipage}
    \begin{minipage}[b]{0.23\textwidth}
        \centering
        \includegraphics[width=\textwidth]{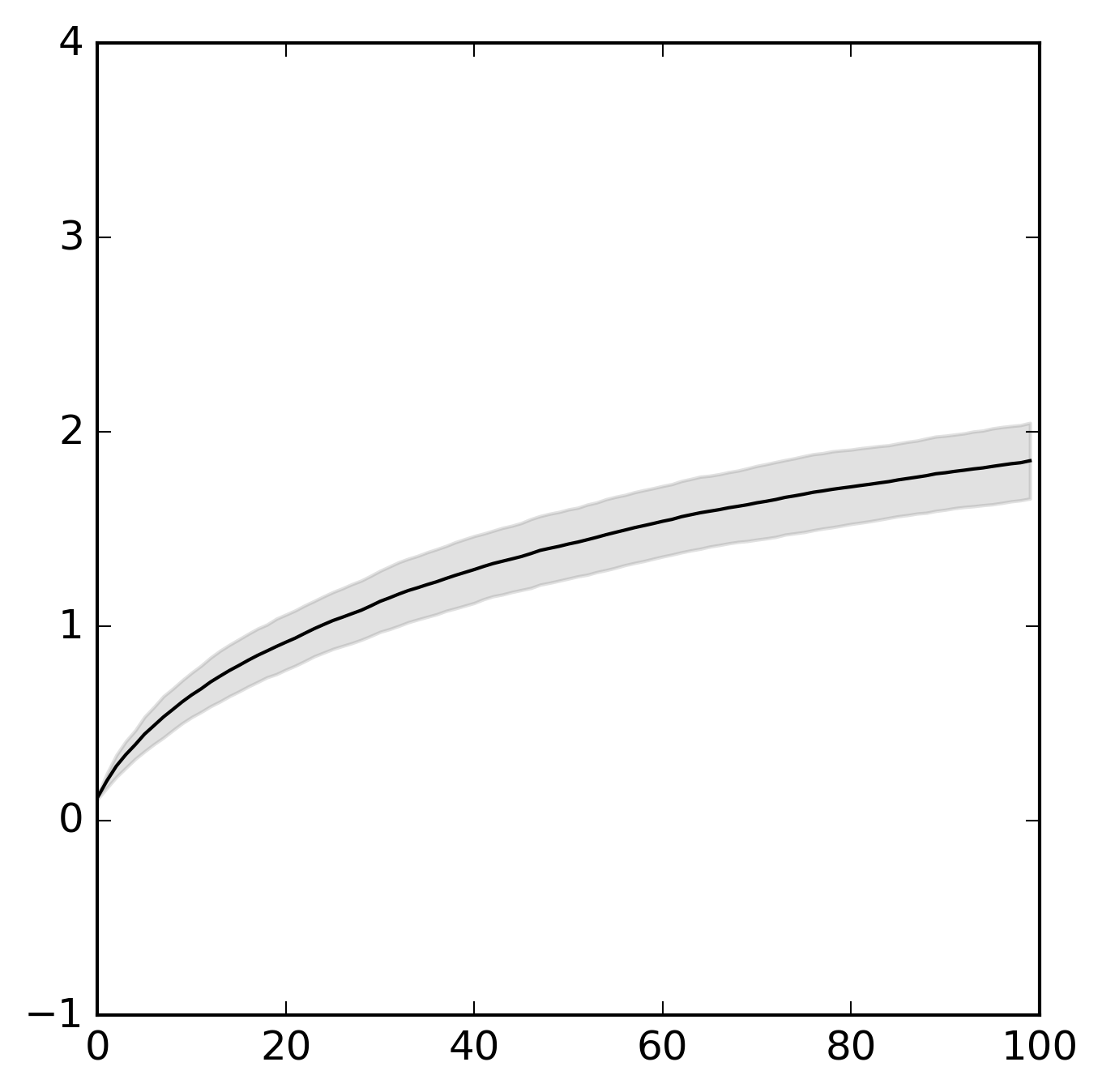}
    \end{minipage}
\label{fig:cumulative_regrets}
\end{figure}

\clearpage

\section{Implementation Details}\label{app:ImplementationDetails}
To replicate the study, follow  \href{https://github.com/IgorSadoune/algorithmic-collusion-and-the-minimum-price-markov-game}{GitHub}. 

\paragraph{E-Greedy Method.} Our \texttt{EpsilonGreedyMAB} class utilizes an \(\epsilon\)-greedy strategy with the following hyperparameters:
\begin{itemize}
    \item \texttt{n\_arms}: Number of possible actions (arms) for the bandit.
    \item \texttt{epsilon}: Exploration rate, set to 0.3 in the \texttt{initialize\_agents} function.
\end{itemize}

\noindent The agent selects a random action with probability \(\epsilon\) and the best known action with probability \(1 - \epsilon\). The action values are updated using an incremental average formula
\[
Q_{t+1}(a) = Q_t(a) + \frac{1}{N(a)} (R_t - Q_t(a))
\]
where \(Q_t(a)\) is the estimated value of action \(a\) at time \(t\), \(N(a)\) is the number of times action \(a\) has been selected, and \(R_t\) is the reward received at time \(t\).

\paragraph{Upper Confidence Bound.} Our \texttt{UCBMAB} class implements the UCB algorithm with the following hyperparameters:
\begin{itemize}
    \item \texttt{n\_arms}: Number of possible actions (arms) for the bandit.
\end{itemize}
\noindent The agent selects actions based on confidence bounds calculated as:
\[
\text{UCB}_t(a) = Q_t(a) + \sqrt{\frac{2 \log t}{N(a)}}
\]
where \(Q_t(a)\) is the estimated value of action \(a\) at time \(t\), \(N(a)\) is the number of times action \(a\) has been selected, and \(t\) is the total number of pulls. Action values are updated using the incremental average formula mentioned above.

\paragraph{Thompson Sampling.} Our \texttt{ThompsonSamplingMAB} class uses a Bayesian approach to action selection with the following hyperparameters:
\begin{itemize}
    \item \texttt{n\_arms}: Number of possible actions (arms) for the bandit.
    \item $\alpha$: Beta distribution parameter
    \item $\beta$: Beta distribution parameter
\end{itemize}
\noindent The agent maintains Beta distributions for each action, initialized with \(\alpha = 1\) and \(\beta = 1\) for each arm. Actions are selected based on samples drawn from these Beta distributions. The Beta distribution parameters are updated based on the received binary rewards as follows:
\begin{align*}
\alpha_{\text{new}} &= \alpha_{\text{old}} + \text{reward} \\
\beta_{\text{new}} &= \beta_{\text{old}} + 1 - \text{reward}
\end{align*}
This method effectively balances exploration and exploitation by sampling from the probability distributions of each action's expected reward.

\paragraph{Deep Dueling DQN (D3QN).}
The \texttt{DuelingDQN} class implements a Dueling Deep Q-Network with the following architecture:
\begin{itemize}
    \item \texttt{feature\_layer}: A fully connected layer with 128 units followed by a ReLU activation function.
    \item \texttt{value\_stream}: A fully connected layer with 128 units followed by a ReLU activation function, and another fully connected layer with 1 unit representing the state value.
    \item \texttt{advantage\_stream}: A fully connected layer with 128 units followed by a ReLU activation function, and another fully connected layer with \texttt{action\_dim} units representing the advantage values for each action.
\end{itemize}
\noindent The Q-values are computed as:
\[
Q(s, a) = V(s) + \left( A(s, a) - \frac{1}{|\mathcal{A}|} \sum_{a'} A(s, a') \right)
\]
where \( V(s) \) is the state value, \( A(s, a) \) is the advantage of action \( a \) in state \( s \), and \( \mathcal{A} \) is the set of all possible actions.

\paragraph{Multiagent Proximal Policy Optimization (MAPPO).} The \texttt{MLPNetwork} class implements a multi-layer perceptron network with the following architecture:
\begin{itemize}
    \item \texttt{state\_dim}: Dimensionality of the state space.
    \item \texttt{action\_dim}: Dimensionality of the action space.
    \item \texttt{layers}: A fully connected network with two hidden layers, each with 128 units followed by a ReLU activation function.
\end{itemize}

\noindent Our \texttt{MAPPOAgent} class implements a MAPPO agent with the following hyperparameters:
\begin{itemize}
    \item \texttt{state\_dim}: Dimensionality of the state space.
    \item \texttt{action\_dim}: Dimensionality of the action space.
    \item \texttt{lr}: Learning rate, set to $1 \times 10^{-3}$.
    \item \texttt{gamma}: Discount factor, set to 0.99.
    \item \texttt{clip\_param}: Clipping parameter for PPO, set to 0.2.
    \item \texttt{c1}: Coefficient for value function loss, set to 0.5.
    \item \texttt{c2}: Coefficient for entropy bonus, set to 0.01.
    \item \texttt{epsilon}: Initial exploration rate, set to 1.0.
    \item \texttt{epsilon\_decay}: Decay rate of epsilon, set to 0.995.
    \item \texttt{epsilon\_min}: Minimum exploration rate, set to 0.01.
\end{itemize}

The agent selects actions using an \(\epsilon\)-greedy strategy:
\begin{itemize}
    \item With probability \(\epsilon\), select a random action.
    \item With probability \(1 - \epsilon\), select the action with the highest policy probability:
    \[
    a_t = \arg\max_a \pi(a|s_t)
    \]
    \item The exploration rate \(\epsilon\) is decayed over time:
    \[
    \epsilon = \max(\epsilon \times \text{epsilon\_decay}, \text{epsilon\_min})
    \]
\end{itemize}

The agent stores experiences in a memory buffer for training:
\begin{itemize}
    \item Store experiences as tuples of \texttt{(state, action, reward, next\_state, done)}.
\end{itemize}

The Generalized Advantage Estimation (GAE) is computed using the following formula:
\[
\delta_t = r_t + \gamma V(s_{t+1}) - V(s_t)
\]
\[
\hat{A}_t = \sum_{l=0}^{\infty} (\gamma \lambda)^l \delta_{t+l}
\]
and the policy and value are updated as
\begin{itemize}
    \item \textbf{Policy Update (Actor)}:
    \[
    \text{ratio} = \frac{\pi(a_t|s_t)}{\pi_{\text{old}}(a_t|s_t)}
    \]
    \[
    \mathcal{L}^{\text{CLIP}}(\theta) = \mathbb{E}_t \left[ \min(\text{ratio} \cdot \hat{A}_t, \text{clip}(\text{ratio}, 1-\epsilon, 1+\epsilon) \cdot \hat{A}_t) \right]
    \]
    \item \textbf{Value Update (Critic)}:
    \[
    \mathcal{L}^{\text{VF}} = \left( \hat{R}_t - V(s_t) \right)^2
    \]
    \item \textbf{Total Loss}:
    \[
    \mathcal{L} = \mathcal{L}^{\text{CLIP}} + c_1 \mathcal{L}^{\text{VF}} - c_2 \mathcal{H}
    \]
    where \(\mathcal{H}\) is the entropy bonus.
\end{itemize}

\end{document}